\newcommand{\remove}[1]{{}}  % removes a portion of text
\newcommand{\calx}{\mathcal{X}}
\newcommand{\calz}{\mathcal{Z}}
\newcommand{\cals}{\mathcal{S}}
\newcommand{\cala}{\mathcal{A}}
\newcommand{\calb}{\mathcal{B}}
\newcommand{\calp}{\mathcal{P}}
\newcommand{\calg}{\mathcal{G}}
\newcommand{\K}{K}
\newcommand{\grid}{\calg}
\newcommand{\dmax}{{d_\infty}}
\newcommand{\hamming}{{d_h}}
\newcommand{\dprob}{{d_\calp}}
\newcommand{\bayes}{\mathbf{Bayes}}
\newcommand{\xb}{\mathbf{x}}
\newcommand{\zb}{\mathbf{z}}
\newcommand{\reals}{\mathbb{R}}
\newcommand{\edpold}[1][\epsilon]{$#1$-dif\-fer\-en\-tial privacy}
\newcommand{\geoind}{geo-indistingui\-sha\-bility}
\newcommand{\egeoind}[1][\epsilon]{$#1$-geo-in\-dis\-tin\-guisha\-bil\-ity}
\newcommand{\planar}{\mathit{PL}_\epsilon}
\newcommand{\diam}{\operatorname{diam}}
\newcommand{\lambert}{W_{\!\!-1}}
\newcommand{\smallprod}[1]{\textstyle{\prod_{#1}\:}}
\newcommand{\smallfrac}[2]{\textstyle{\frac{#1}{#2}}}
\newcommand{\smallmax}[1]{\textstyle{\max_{#1}\:}}
\newcommand{\vect}[1]{{#1}} % definition for paper on geo-indistinguishability
\newcommand{\codeComment}[1]{\emph{\color{gray}{// #1}}}
\declaretheorem[parent=section]{theorem}
\declaretheorem[sibling=theorem]{proposition}
\declaretheorem[parent=section]{definition}
\declaretheorem[sibling=theorem]{observation}
\newfont{\mycrnotice}{ptmr8t at 7pt}
\newfont{\myconfname}{ptmri8t at 7pt}
\title{Geo-Indistinguishability: Differential Privacy for Location-Based Systems}
\author{
% 1st. author
\alignauthor
	Miguel E. Andr\'es\\
	\affaddr{\'Ecole Polytechnique}\\
	\email{\large mandres@lix.polytechnique.fr}
% 2nd. author
\alignauthor
	Nicol\'as E. Bordenabe\\
    \affaddr{INRIA and \'Ecole Polytechnique}\\
    \email{\large nbordenabe@lix.polytechnique.fr}
 %\and
% 3rd. author
\and
\alignauthor
	Konstantinos Chatzikokolakis\\
    \affaddr{CNRS and \'Ecole Polytechnique}\\
    \email{\large kostas@lix.polytechnique.fr}
% 4th. author
\alignauthor 
	Catuscia Palamidessi\\
    \affaddr{INRIA and \'Ecole Polytechnique}\\
    \email{\large catuscia@lix.polytechnique.fr}
}
\begin{document}

\maketitle

\begin{abstract}
The growing popularity of location-based systems, allowing unknown/untrusted
servers to easily collect  huge amounts of  information regarding users'
location, has recently started raising serious privacy concerns. In this paper
we introduce \geoind{}, a formal notion of privacy for location-based systems
that protects the user's exact location, while  allowing approximate information
-- typically needed to obtain a certain desired service -- to be released.

This privacy definition formalizes the intuitive notion of protecting the user's
location within a radius $r$ with a level of privacy that depends on $r$, and
corresponds to a generalized version of the well-known concept of
\emph{differential privacy}. Furthermore, we present a mechanism for achieving
\geoind{} by adding controlled random noise to the user's location.

We describe how to use our mechanism to enhance LBS applications with
geo-indistinguishability guarantees without compromising the quality of the
application results. Finally, we compare state-of-the-art mechanisms from the
literature with ours. It turns out that, among all mechanisms independent of the
prior, our mechanism offers the best privacy guarantees.
\end{abstract}

\vspace{-3ex}
\category{C.2.0}{Computer--Communication Networks}{General}[Security and protection]
\category{K.4.1}{Computers and Society}{Public Policy Issues}[Privacy]

\vspace{-3ex}
\keywords{
Location-based services;
Location privacy;
Location obfuscation;
Differential privacy;
Planar Laplace distribution
}
%\begin{keywords} 
%Geolocation.
%Privacy Technologies.
%Differential privacy.
%Location-based services.
%Data sanitation.
%Random perturbation techniques.
%Planar laplacian distribution.
%\end{keywords}

\vspace{-3ex}
\section{Introduction}

In recent years, the increasing availability of location information about
individuals has led
to a growing use of systems that
record and process location data, generally
referred to as ``location-\vspace{30ex} based systems''. Such systems include (a) Location Based
Services (LBSs), in which a user obtains, typically in real-time, a service
related to his current location,
%as well as
and
(b) location-data mining algorithms,
used to determine
%, among others,
points of interest and traffic patterns.
%, and geographical distributions of diseases.

The use of LBSs, in particular, has been significantly increased by the
growing popularity of mobile devices equipped with GPS chips, in combination
with the increasing availability of wireless data connections. A resent
study in the US shows that in $2012$, $46\%$ of the adult population of the country owns a
smartphone and, furthermore, that $74\%$ of those owners use LBSs
\cite{PewInternet:12:Survey}. Examples of LBSs include mapping applications
(e.g., Google Maps%, Bing Maps
), Points of Interest (POI) retrieval (e.g.,
AroundMe%, Localscope
), coupon/discount pro\-viders (e.g., GroupOn%, Yowza
),
GPS navigation (e.g., TomTom), and loca\-tion-aware social networks
(e.g., Foursquare%, OkCupid
).
%LBS users typically submit their
%location to a service provider returns in order to obtain a certain benefit, e.g.,
%information about POI in the area around them.

While location-based systems have demonstrated to provide enormous benefits to
individuals and society, the growing exposure of users' location information
raises important privacy issues.
% that are, unfortunately, often overlooked.
% On the one side, 
First of all, location information itself may be considered % by individuals
as sensitive. 
% More importantly, location data 
Furthermore, it can be easily linked to a variety
of other information that an individual usually wishes to protect: by
collecting and processing accurate location data on
a regular basis, it is possible to infer an individual's home or work
location, sexual preferences, political views, religious inclinations, etc. In
its extreme form, monitoring and control of an individual's location has been
even described as a form of slavery \cite{Dobson:03:TSM}.

Several notions of privacy for location-based systems have
been proposed in the literature.
% many of them being variations of the $k$-anonymity concept, together with techniques to achieve these privacy guarantees. 
In Section~\ref{sec:existing-notions} we give an overview of such notions, 
% existing notions of location privacy, 
and we discuss their shortcomings in relation to our motivating LBS application. 
Aiming at addressing these shortcomings, we propose  a \emph{formal privacy
definition} for LBSs, as well as a randomized technique that allows
a user  to disclose \emph{enough location information} to obtain the desired service,
while satisfying the aforementioned privacy notion. Our proposal is  based on 
a generalization of \emph{differential privacy} \cite{Dwork:06:ICALP} developed 
in \cite{Chatzikokolakis:13:PETS}. Like differential privacy, our
notion and technique abstract from the side information of the adversary, such as
any prior probabilistic knowledge about the user's actual location.

\looseness -1
As a running example, we consider a user located in Paris who wishes to
query an LBS provider for nearby restaurants in a private way, i.e., by
disclosing some approximate information $z$ instead of his exact location $x$.
%Note that, in contrast to various works in the literature, we assume
%that the user is interested in hiding his \emph{location}, not his
%\emph{identity}. In fact, the user might be willing to disclose
%his identity to the provider, in order to obtain personalized recommendations, or to
%participate in a social network. 
A crucial question  is:
what kind of privacy guarantee can the user expect in this scenario? 
%On the one hand, he does not expect to reveal his exact location but, on the other
%hand, he wishes to obtain a service tailored to it. Thus, the user's
%requirement is that, by obtaining $z$, the provider should be able to infer $x$
%\emph{approximately} but not \emph{accurately}.
To formalize this notion, we consider the level of privacy \emph{within a radius}.
%We fix a circle of radius $r$ centered at the user's location
%$x$, and reason about the user's level of privacy within this radius. 
We say that the user enjoys \emph{$\ell$-privacy within $r$} if,
any
two locations at distance at most $r$ produce observations with ``similar''
distributions, where the ``level of similarity'' 
depends on $\ell$. The
idea is that $\ell$ represents the user's \emph{level} of privacy for that
radius: the smaller $\ell$ is, the higher is the privacy. 
%(as it gets
%harder for the provider to detect the user's location among the locations
%within this circle).

In order to allow the  LBS to provide a useful service,
we require that the (inverse of the) level of privacy  $\ell$ depend on the radius $r$. In
particular, we require that it is proportional to $r$, which brings us to our 
%(still informal) 
definition of \emph{\geoind{}}:
\begin{quote}
	A mechanism satisfies \egeoind{}
	iff for any radius $r>0$, the user enjoys $\epsilon r$-privacy within $r$.
\end{quote}
This definition implies that the user is protected within any radius $r$, but
with a level $\ell =\epsilon r$ that increases with the distance. Within a
short radius, for instance $r\!=\!1$ km, $\ell$ is small, guaranteeing that
the provider cannot infer the user's location within, say, the 7th
arrondissement of Paris. Farther away from the user, for instance for $r = 1000$ km,
$\ell$ becomes large, allowing the LBS provider to infer that with high
probability the user is located in Paris instead of, say,  London. 
Figure~\ref{fig:paris} illustrates the idea of privacy levels decreasing with the radius.

 \begin{figure}[tb]%
		\centering
		\vspace{3 mm}
		\includegraphics[width=0.7\columnwidth]{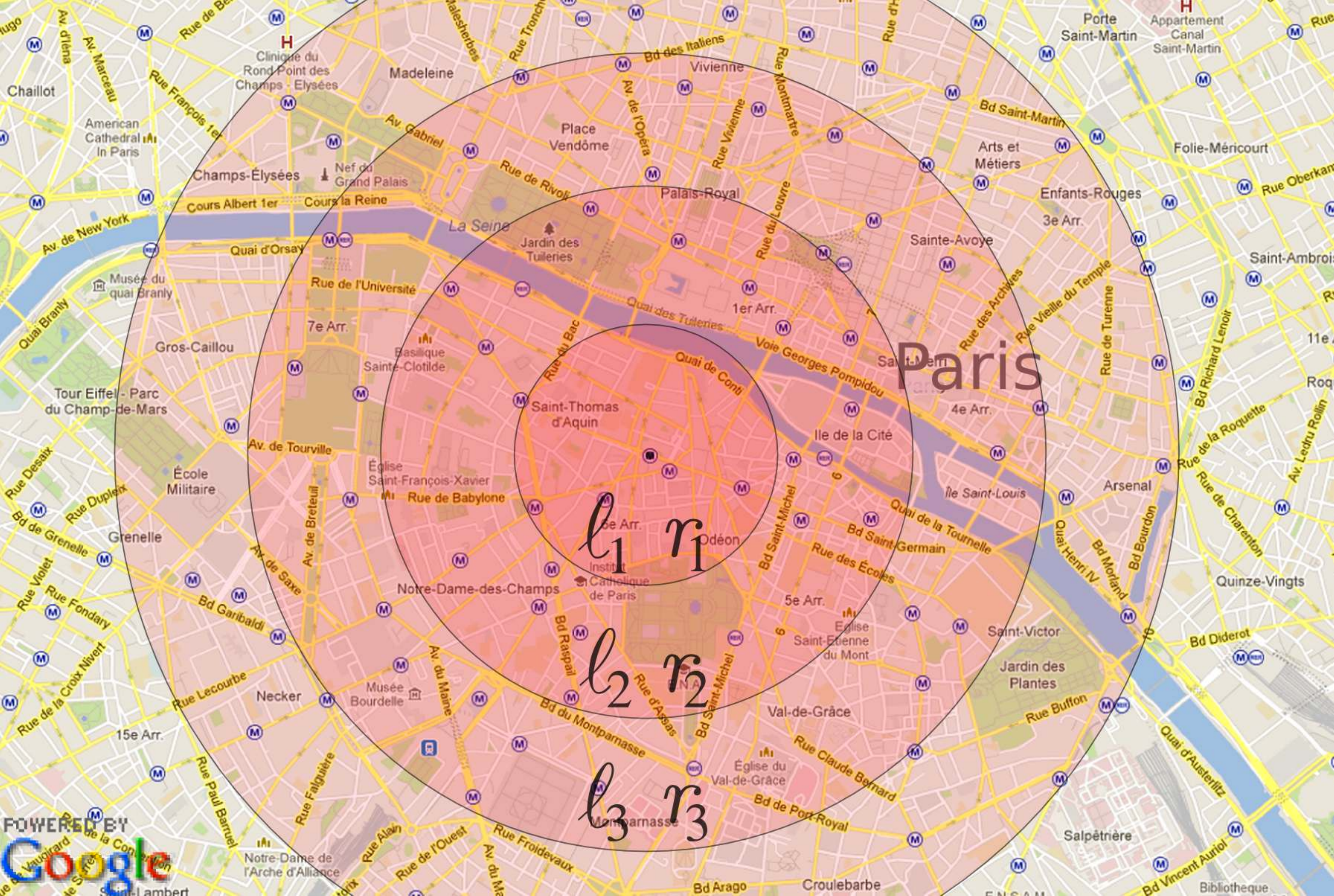}%\\[2ex]
		\caption{Geo-indistinguishability:  privacy varying with $r$.}%
		\label{fig:paris}%
		%\vspace{-0.4cm}
\end{figure}

We develop a mechanism to achieve \geoind{} by perturbating the user's
location $x$. The inspiration  comes from one of the most
popular approaches for differential privacy, namely the
Laplacian noise. 
We adopt a specific planar version of the Laplace distribution,  allowing to 
draw points in  a \emph{geo-indistinguishable} way;
moreover, we are able to do so efficiently, via a transformation to polar coordinates.
However, as standard (digital) applications
require a finite representation of locations, it is necessary to add a
discretization step. 
Such operation jeopardizes the privacy guarantees, for reasons similar to the  rounding effects of finite-precision operations  \cite{Mironov:12:CCS}. 
We show how to preserve  geo-indistinguishability, at the price of a degradation of the privacy level, and 
how to adjust the privacy parameters  in order to obtain a desired level of privacy.

We then describe how to use our mechanism to enhance LBS applications with geo-indistinguishability guarantees. Our proposal results in highly configurable LBS applications, both in terms of privacy and accuracy (a notion of utility/quality-of-service for LBS applications providing privacy via location perturbation techniques). Enhanced LBS applications require extra bandwidth consumption in order to provide both privacy and accuracy guarantees, thus we study how the different configurations affect the bandwidth overhead using the Google Places API~\cite{GooglePlacesApi} as reference to measure bandwidth consumption. Our experiments showed that the bandwidth overhead necessary to enhance LBS applications with very high levels of privacy and accuracy is not-prohibitive and, in most cases, negligible for modern applications.

Finally, we compare our mechanism with other ones in the literature, using the
privacy metric proposed in \cite{Shokri:12:CCS}. It turns
our that our mechanism offers the best privacy guarantees, for the same utility,
among all those which do not depend on the prior knowledge of the adversary. The
advantages of the independence from the prior are obvious: first, the mechanism
is designed once and for all (i.e. it does not need to be recomputed every time
the adversary changes, it works also in simultaneous presence of different
adversaries, etc.). Second,   and even more important, it is applicable  also
when we do not know the prior. 

\paragraph{Contribution}
This paper contributes to the state-of-the-art as follows:
\begin{itemize}
\item
We show that our generalized notion of differential privacy \cite{Chatzikokolakis:13:PETS}, instantiated with the Euclidean metric, can be naturally applied to location privacy, and we discuss the privacy guarantees that this definition provides. (Location privacy was only briefly mentioned in \cite{Chatzikokolakis:13:PETS} as a possible application.) 
\item 
We also extend it to location traces, using the $\dmax$ metric, and show how privacy degrades when traces become longer.
\item 
We propose a mechanism to efficiently draw noise from a planar Laplace distribution, which is not trivial. Laplacians on general metric spaces were briefly discussed in \cite{Chatzikokolakis:13:PETS}, but no efficient method to draw from them was given. Furthermore, we cope with the crucial problems of discretization and truncation, which have been shown to pose significant threats to mechanism implementations  \cite{Mironov:12:CCS}.
\item 
We describe how to use our mechanism to enhance LBS applications with geo-indistinguishability guarantees.
\item 
We compare our mechanism to a state-of-the-art mechanism from the literature \cite{Shokri:12:CCS} as well as a simple cloaking mechanism, obtaining favorable results.
\end{itemize}

\paragraph{Road Map} In Section 2 we discuss notions of location privacy from the
literature and point out  their weaknesses and strengths. In Section 3 we
formalize the notion of \geoind{} in three equivalent ways. We
then proceed to describe
a mechanism that provides \geoind{} in Section 4. In Section 5
we show how to enhance LBS applications with geo-indistinguishability guarantees. In Section 6 we compare the privacy guarantees of our methods with those of two 
other methods from the literature. 
Section 7 discusses related
work and Section 8 concludes. 

The interested reader can find the proofs  in the report version of this paper~\cite{Andres:12:CoRR}, which is available online.

%All proofs are in the appendix.

\section{Existing Notions of Privacy}
\label{sec:existing-notions}

In this section, we examine various notions of location privacy from the literature, as
well as techniques to achieve them. We consider the motivating example from the
introduction, of a user in Paris wishing to find nearby
restaurants with good reviews. To achieve this goal, he uses a handheld device
(e.g.. a smartphone) to query a public LBS provider.
However, the user expects his location to be kept private: informally speaking,
the information sent to the provider should not allow him to accurately infer
the user's location. Our goal is to provide a \emph{formal} notion of privacy
that adequately captures the user's expected privacy.
From the point of view of the employed mechanism, we require a
technique that can be performed in real-time by a handheld device,
without the need of any trusted anonymization party.

\paragraph{Expected Distance Error}

Expectation of distance error \cite{Shokri:11:SP,Shokri:12:CCS,Hoh:05:SecureComm} is a natural way to quantify the
privacy offered by a location-obfuscation mechanism. Intuitively, it reflects the degree of
accuracy by which an adversary can guess the real location of the user by observing the obfuscated
location, and using the side-information available to him. 
%In general, location-protection mechanisms based on this 
%definition aim at maximizing the adversary's expected error while guaranteeing some
%minimum user-imposed quality of service.

%We can find in the literature several works relying on this privacy notion.
There are several works relying on this notion. 
In \cite{Hoh:05:SecureComm}, a perturbation mechanism
is used to confuse the attacker by crossing paths of individual users, rendering 
the task of tracking individual paths challenging. In \cite{Shokri:12:CCS}, 
an optimal location-obfuscation mechanism (i.e., achieving maximum level of
privacy for the user)
is obtained by solving a linear program
in which the contraints are determined by the quality of service and by the user's profile. 
%given constraints about the bound in the quality of service and the
%frequency with which the user is located in the different regions considered (known in this case as the \emph{user's profile}).

It is worth noting that this privacy notion 
and the obfuscation mechanisms based on it
are explicitly defined in terms of the adversary's side information. 
%This implies that location-obfuscation mechanisms based on this notion
%assume that the attacker have some
%particular kind of side-information (for instance, past location traces of the user).
In contrast, our notion of \geoind{} abstracts from
the attacker's prior knowledge, and is therefore suitable for scenarios
where the prior is unknown, or the same mechanism must be used for multiple users.
A detailed comparison with the mechanism of \cite{Shokri:12:CCS}
is provided in Section~\ref{sec:comparison}.
%This assumption implies that the definition is only satisfied for this
%limited class of adversaries.
%In our case, we aim for a privacy definition that is independent form
%the attacker's knowledge 
%This means that it is necessary to make assumptions about the information of the attacker in order to state that the location-obfuscation mechanism satisfies a certain level of privacy.
%This means that, for a particular mechanism, it is not possible to state that an adversary will have an expected distance error of $E$ without making assumptions about the side-information he may have.
%In \cite{shokri 2}, side-information is even used to derive the mechanism itself.
%We will, then, aim for a notion of privacy that is not defined in terms of the attacker's side-information.

\paragraph{$k$-anonymity}

The notion of $k$-anonymity is the most widely used definition of privacy for
location-based systems in the literature. Many systems in this category
\cite{Gruteser:03:MobiSys, Gedik:05:ICDCS, Mokbel:06:VLDB} aim
at protecting the user's \emph{identity}, requiring that the attacker cannot
infer which user is executing the query, among a set of $k$ different users.
Such systems are outside the scope of our problem, since we are
interested in protecting the user's \emph{location}.

On the other hand, $k$-anonymity has also been used to protect the user's
location (sometimes called $l$-diversity in this context), requiring that it is
indistinguishable among a set of $k$ points (often required to share some
semantic property). One way to achieve this is through the use of \emph{dummy
locations} \cite{Kido:05:ICDE,Shankar:09:UbiComp}. This technique
involves generating $k-1$ properly selected dummy points, and performing $k$
queries to the service provider, using the real and dummy locations. Another
method for achieving $k$-anonymity is through \emph{cloaking}
\cite{Bamba:08:WWW,Duckham:05:Pervasive,Xue:09:LoCa}. This involves creating a
cloaking region that includes $k$ points sharing some property of interest, and
then querying the service provider for this cloaking region.

Even when side knowledge does not explicitly appear in the definition of $k$-anonymity,
a system cannot be proven to satisfy this notion unless assumptions are made about the
attacker's side information. 
%The main drawback of $k$-anonymity-based approaches in general is that a system
%cannot be proved to satisfy this notion unless assumptions are made about the
%attacker's side information. 
For example, dummy locations are only useful
if they look equally likely to be the real location from the point of view of
the attacker. Any side information that allows to rule out any of those
points, as having low probability of being the real location, would immediately
violate the definition.

Counter-measures are often employed to avoid this issue: for instance,
\cite{Kido:05:ICDE}  takes into account concepts such as ubiquity, congestion
and uniformity for generating dummy points, in an effort to make them
look realistic. Similarly, \cite{Xue:09:LoCa} takes into account the user's side
information to construct a cloaking region. Such counter-measures
have their own drawbacks: first, they complicate the employed techniques, also
requiring additional data to be taken into account
(for instance, precise information about the environment or the location of nearby users),
making their application in
real-time by a handheld device challenging. Moreover, the attacker's actual
side information might simply be inconsistent with the assumptions being
made.

As a result, notions that abstract from the attacker's side information,
such as differential privacy, have been growing in popularity in recent years,
compared to $k$-anonymity-based approaches.

\paragraph{Differential Privacy}

Differential Privacy \cite{Dwork:06:ICALP} is a notion of privacy from the
area of statistical databases. Its goal is to protect an individual's data while
publishing aggregate information about the database. Differential privacy
requires that modifying a single user's data should have a negligible effect on
the query outcome. More precisely, it requires that the probability that a query
returns a value $v$ when applied to a database $D$, compared to the probability
to report the same value when applied to an \emph{adjacent} database $D'$ --
meaning that $D,D'$ differ in the value of a single individual -- should be
within a bound of $e^\epsilon$.
% (where $\epsilon$ is a parameter quantifying the
%level of privacy). 
A typical way to achieve this notion is to add controlled
random noise to the query output, for example drawn from a Laplace distribution.
An advantage of this notion is that a mechanism can be shown to be
differentially private independently from any side information that the
attacker might possess.

Differential privacy has also been used in the context of location privacy. In
\cite{Machanavajjhala:08:ICDE}, it is shown that a synthetic data
generation technique can be used to publish statistical information about
commuting patterns in a differentially private way. In \cite{Ho:11:GIS},
a quadtree spatial decomposition technique is used to ensure differential
privacy in a database with location pattern mining capabilities.

As shown in the aforementioned works, differential privacy can be successfully
applied in cases where \emph{aggregate} information about several users is published.
On the other hand, the nature of this notion makes it poorly suitable for
applications in which only a single individual is involved, such as our motivating
scenario. The secret in this case is the location of a single user. Thus,
differential privacy would require that any change in that location should have
negligible effect on the published output, making it impossible to communicate
any useful information to the service provider.

To overcome this issue, Dewri \cite{Dewri:12:TMC} proposes a mix of differential
privacy and $k$-anonymity, by fixing an anonymity set of $k$ locations and requiring
that the probability to report the same obfuscated location $z$ from any of
these $k$ locations should be similar (up to $e^\epsilon$). This property is
achieved by adding Laplace noise to each Cartesian coordinate independently.
There are however two problems with this definition: first, the choice of
the anonymity set crucially affects the resulting privacy; outside this
set no privacy is guaranteed at all. Second, the property itself is rather weak;
reporting the geometric median (or any deterministic function) of the $k$
locations would satisfy the same definition, although the privacy guarantee
would be substantially lower than using Laplace noise.

Nevertheless, Dewri's intuition of using Laplace noise\footnote{%
	The planar Laplace distribution that we use in our work, however, 
	is different from the distribution obtained by adding Laplace noise to each Cartesian coordinate, 
	and has better differential privacy properties 
	(c.f. Section~\ref{sec:amechanism}).
}
for location privacy is valid, and \cite{Dewri:12:TMC} provides extensive experimental analysis supporting
this claim. Our notion of \geoind{} provides the formal background for
justifying the use of Laplace noise, while avoiding the need to fix an anonymity
set by using the generalized variant of differential privacy from
\cite{Chatzikokolakis:13:PETS}.

\paragraph{Other location-privacy metrics}

%There are also other location-privacy definitions that can be found in the literature, usually specific to
%some particular obfuscation mechanism. 
%
\cite{Cheng:06:PET} proposes a location cloaking mechanism, 
and focuses on the evaluation of Location-based Range Queries. The degree of privacy is measured
by the size of the cloak (also called \emph{uncertainty region}), and by the coverage of sensitive regions,
which is the ratio between the area of the cloak and the area of the regions inside the cloak
that the user considers to be sensitive. 
In order to deal with the side-information that the attacker may have, 
ad-hoc solutions are proposed, like patching cloaks to enlarge the uncertainty region or delaying requests. 
Both solutions may cause a degradation in the quality of service.

In \cite{Ardagna:07:DAS}, the real location of the user is assumed to have some level of inaccuracy, due to the specific sensing technology or to the environmental conditions. Different obfuscation techniques are
then used  to increase this inaccuracy in order
to achieve a certain level of privacy. 
This level of privacy is 
%computed as (the opposite of) the \emph{relevance} of the location measurement. 
%Relevance is 
defined as the ratio between the accuracy before and after the application of the obfuscation techniques.

Similar to the case of $k$-anonymity, both privacy metrics mentioned above make implicit assumptions about the adversary's side information. This may imply a violation of the privacy definition in a scenario where the adversary has some knowledge % (maybe probabilistic) 
about the user's real location.

\paragraph{Transformation-based approaches}

A number of approaches for location privacy are radically different from the ones
mentioned so far. Instead of cloaking the user's location, they aim at making it
completely invisible to the service provider. This is achieved by transforming
all data to a different space, usually employing cryptographic techniques, so
that they can be mapped back to spatial information only by the user
\cite{Khoshgozaran:07:SSTD,Ghinita:08:SIGMOD}. The data stored in the
provider, as well as the location send by the user are encrypted. Then, using
techniques from \emph{private information retrieval}, the provider can return
information about the encrypted location, without ever discovering which actual
location it corresponds to.

A drawback of these techniques is that they are computationally demanding,
making it difficult to implement them in a handheld device. Moreover, they
require the provider's data to be encrypted, making it impossible to use
existing providers, such as Google Maps, which have access to the real data.

\section{Geo-Indistinguishability}\label{sec:definitions}

In this section we formalize our notion of \geoind{}. As already discussed in the
introduction, the main idea behind this notion is that,
for any radius $r>0$, the user enjoys $\epsilon r$-privacy within $r$,
i.e. the level of privacy is proportional to the radius.
Note that the parameter $\epsilon$ corresponds to the level of privacy at one
unit of distance. For the user, a simple way to specify his privacy requirements
is by a tuple $(\ell,r)$, where $r$ is the radius he is mostly concerned with
and $\ell$ is the privacy level he wishes \emph{for that radius}. In this case,
it is sufficient to require \egeoind{} for $\epsilon = \ell/r$; this will ensure
a level of privacy $\ell$ within $r$, and a proportionally selected level for
all other radii.

So far we kept the discussion on an informal level by avoiding to explicitly
define what $\ell$-privacy within $r$ means. In the remaining of this section we
give a formal definition, as well as two characterizations which clarify the
privacy guarantees provided by \geoind{}.

\paragraph{Probabilistic model}

We first introduce a simple model used in the rest of the paper. We start with a
set $\calx$ of \emph{points of interest}, typically the user's possible
locations. Moreover, let $\calz$ be a set of possible \emph{reported values},
which in general can be arbitrary, allowing to report obfuscated locations,
cloaking regions, sets of locations, etc. However, to simplify the discussion,
we sometimes consider $\calz$ to also contain spatial points, assuming an
operational scenario of a user located at $x\in\calx$ and communicating to the
attacker a randomly selected location $z\in\calz$ (e.g. an obfuscated point).

Probabilities come into place in two ways. First, the attacker might have side
information about the user's location, knowing, for example, that he is likely
to be visiting the Eiffel Tower, while unlikely to be swimming in the Seine
river. The attacker's side information can be modeled by a \emph{prior}
distribution $\pi$ on $\calx$, where $\pi(x)$ is the probability assigned to the
location $x$.

Second, the selection of a reported value in $\calz$ is itself probabilistic; for
instance, $z$ can be obtained by adding random noise to the actual location $x$
(a technique used in Section~\ref{sec:mechanism}). A \emph{mechanism} $\K$ is a
probabilistic function for selecting a reported value; i.e. $\K$
is a function assigning to each location $x\in\calx$ a probability distribution on $\calz$,
where $\K(x)(Z)$ is the probability that the reported point
belongs to the set $Z \subseteq \calz$, when the user's location is
$x$.\footnote{For simplicity we assume distributions on $\calx$ to be discrete,
but allow those on $\calz$ to be
continuous (c.f.
Section~\ref{sec:mechanism}). All sets to which probability is assigned
are implicitly assumed to be measurable.}
Starting from $\pi$ and using Bayes' rule, each observation $Z\subseteq\calz$ of a
mechanism $\K$ induces a \emph{posterior} distribution
$\sigma=\bayes(\pi,K,Z)$ on $\calx$, defined as
$
	\sigma(x) = \smallfrac{ K(x)(Z)\pi(x) }{ \sum_{x'} K(x')(Z)\pi(x')}
$.

We define the \emph{multiplicative distance} between two distributions $\sigma_1,\sigma_2$
on some set $\cals$ as
$
	\dprob(\sigma_1,\sigma_2) \allowbreak= \allowbreak \sup_{S\subseteq\cals}| \ln \smallfrac{\sigma_1(S)}{\sigma_2(S)}|
$,
with the convention that $|\ln\smallfrac{\sigma_1(S)}{\sigma_2(S)}|=0$ if both
$\sigma_1(S),\sigma_2(S)$ are zero and $\infty$ if only one of them is zero.

%Together, $P_X$ and $\K$ induce a \emph{joint}
%probability distribution $P$ for $X,Z$, as $P(x,S) = P_X(x)
%\K(x)(S)$. Note that, by
%construction, $P(x) = P_X(x)$ and $P(S|x) = \K(x)(S)$.

\subsection{Definition}
\label{sec:third-approach}

We are now ready to state our definition of \geoind. Intuitively, a privacy
requirement is a constraint on the distributions $\K(x),\K(x')$ produced by two
different points $x,x'$. Let $d(\cdot,\cdot)$ denote the Euclidean metric.
Enjoying $\ell$-privacy within $r$ means that for any $x,x'$ s.t.
$d(x,x')\allowbreak\le\allowbreak r$, the distance $\dprob(\K(x),\allowbreak\K(x'))$
between the corresponding distributions should be at most $l$. Then, requiring
$\epsilon r$-privacy for all radii $r$,
forces the two distributions to be similar for locations close to each other,
while relaxing the constraint for those far away from each other, allowing a
service provider to distinguish points in Paris from those in London.
%The complete definition can be written as follows:

\begin{definition}[geo-indistinguishability]
	\label{def:geoind}
	A mechanism $\K$ satisfies \egeoind{} iff for all $x,x'$:
	\[
		\dprob(\K(x),\K(x')) \le \epsilon d(x,x')
	\]
\end{definition}
Equivalently, the definition can be formulated as
$
	\K(x)(Z) \le $ $\allowbreak e^{\epsilon d(x,x')} \allowbreak \K(x')(Z)
$
for all $x,x'\in\calx,Z\subseteq\calz$.
Note that for all points $x'$ within a radius $r$ from $x$, the definition
forces the corresponding distributions to be at most $\epsilon r$ distant.

The above definition is very similar to the one of differential privacy, which
requires $\dprob(\K(x),\K(x')) \le \epsilon \hamming(x,x')$, where $\hamming$ is
the Hamming distance between databases $x,x'$, i.e. the number of individuals in
which they differ. In fact, \geoind{} is an instance of a generalized variant of
differential privacy, using an arbitrary metric between secrets. This
generalized formulation has been known for some time: for instance,
\cite{Reed:10:ICFP} uses it to perform a compositional analysis of standard
differential privacy for functional programs, while \cite{Dwork:12:ITCS} uses
metrics between individuals to define
``fairness'' in classification. On the other hand, the usefulness of using
different metrics to achieve different privacy goals and the semantics of the
privacy definition obtained by different metrics have only recently started to
be studied \cite{Chatzikokolakis:13:PETS}. This paper focuses on location-based
systems and is, to our knowledge, the first work considering privacy under the
Euclidean metric, which is a natural choice for spatial data.

Note that in our scenario, using the Hamming metric of standard differential
privacy -- which aims at completely protecting the value of an individual --
would be too strong, since the only information is the location of a single
individual. Nevertheless, we are not interested in completely hiding the user's
location, since some approximate information needs to be revealed in order to
obtain the required service. Hence, using a privacy level that depends on the
Euclidean distance between locations is a natural choice.

%Finally, we can show that the three definitions of geo-indistingui\-shability
%given in this section are simply different ways of expressing the same privacy requirement.
%\begin{restatable}{theorem}{thmcoincide}
%	Geo-indistinguishability-I, II, III coincide.
%	\label{thm:coincide}
%\end{restatable}

\paragraph{A note on the unit of measurement}
It is worth noting that, since $\epsilon$ corresponds to the privacy level for one unit of
distance, it is affected by the unit in which distances are measured. For
instance, assume that $\epsilon=0.1$ and distances are measured in meters. The
level of privacy for points one kilometer away is $1000\epsilon$, hence changing
the unit to kilometers requires to set $\epsilon = 100$ in order for the
definition to remain unaffected. In other words, if $r$ is a physical quantity
expressed in some unit of measurement, then $\epsilon$ has to be expressed in
the inverse unit.
%In this paper we omit the unit since the choice is orthogonal to our goals.

\subsection{Characterizations}

In this section we state two characterizations of \geoind,
obtained from the corresponding results of \cite{Chatzikokolakis:13:PETS} (for
general metrics), which provide intuitive interpretations of the privacy
guarantees offered by \geoind.

\paragraph{Adversary's conclusions under hiding}

The first characterization uses the concept of a \emph{hiding function}
$\phi:\calx\to\calx$. The idea is that $\phi$ can be applied to the user's
actual location before the mechanism $\K$, so that the latter has only access to
a hidden version $\phi(x)$, instead of the real location $x$. A mechanism $\K$
with hiding applied is simply the composition $\K\circ\phi$. Intuitively, a
location remains private if, regardless of his side knowledge (captured by his
prior distribution), an adversary draws the same conclusions (captured by his
posterior distribution), regardless of whether hiding has been applied or not.
However, if $\phi$ replaces locations in Paris with those in London, then clearly
the adversary's conclusions will be greatly affected. Hence, we require that the
effect on the conclusions depends on the maximum distance $d(\phi)
\allowbreak=\sup_{x\in\calx}d(x,\phi(x))$ between the real and hidden location.

%\begin{mydef}{Geo-indistinguishability-I}
%	A mechanism satisfies \egeoind{} iff for all priors $P_X$
%	and all $S\subseteq\calz$:\footnote{Note that for
%	the sake of readability, we express the definitions in terms of fractions.
%	To avoid issues with zero probabilities, we can write all definitions in flat
%	form, i.e. $P(x|S)P(x') \le e^{\epsilon r} P(x'|S) P(x)$.}
%	\[
%		\frac{P(x|S)}{P(x'|S)} \le e^{\epsilon r} \frac{P(x)}{P(x')}
%		\qquad \forall r > 0 \;\forall x,x' : d(x,x') \le r
%	\]
%\end{mydef}
\begin{theorem}
	\label{thm:characterization_alt}
	A mechanism $\K$ satisfies \egeoind{}
	iff
	for all $\phi:\calx\to\calx$,
	all priors $\pi$ on $\calx$,
	and all $Z \subseteq\calz$:
	\begin{align*}
		\dprob(\sigma_1, \sigma_2) &\le 2\epsilon d(\phi)
			&\textrm{where}\qquad
			\sigma_1 &= \bayes(\pi,K,Z) \\
			&&\sigma_2 &= \bayes(\pi,K\circ\phi,Z)
	\end{align*}
\end{theorem}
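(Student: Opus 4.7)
The plan is to prove the two implications separately, in both cases reducing everything to pointwise manipulations of the likelihoods $K(x)(Z)$.

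For the forward direction, assume $\epsilon$-geo-indistinguishability and fix $\phi$, a prior $\pi$, and an observation $Z \subseteq \calz$. Bayes' rule expands the posteriors as
\[
\sigma_1(S) = \frac{\sum_{x \in S} K(x)(Z)\pi(x)}{\sum_{x} K(x)(Z)\pi(x)},
\qquad
\sigma_2(S) = \frac{\sum_{x \in S} K(\phi(x))(Z)\pi(x)}{\sum_{x} K(\phi(x))(Z)\pi(x)}.
\]
Geo-indistinguishability gives the pointwise estimate $e^{-\epsilon d(\phi)} K(\phi(x))(Z) \le K(x)(Z) \le e^{\epsilon d(\phi)} K(\phi(x))(Z)$ for every $x$ (since $d(x,\phi(x)) \le d(\phi)$). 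Upper-bounding the numerator of $\sigma_1(S)$ and lower-bounding the denominator via this estimate yields $\sigma_1(S) \le e^{2\epsilon d(\phi)} \sigma_2(S)$, and the symmetric argument gives the opposite inequality. Since $S$ was arbitrary, $\dprob(\sigma_1, \sigma_2) \le 2\epsilon d(\phi)$. This half is a routine computation.

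For the converse, assume the characterization and fix arbitrary $x_0, x_0' \in \calx$ and $Z \subseteq \calz$; the goal is $K(x_0)(Z) \le e^{\epsilon d(x_0,x_0')} K(x_0')(Z)$. I would pick the hiding function $\phi$ that swaps $x_0 \leftrightarrow x_0'$ and fixes every other location, so that $d(\phi) = d(x_0, x_0')$, together with a family of priors $\pi_p$ concentrated on $\{x_0, x_0'\}$ with $\pi_p(x_0) = p$. Direct computation of the two posteriors at $S = \{x_0\}$ yields
\[
\frac{\sigma_1(\{x_0\})}{\sigma_2(\{x_0\})} = \frac{K(x_0)(Z)}{K(x_0')(Z)} \cdot \frac{p\, K(x_0')(Z) + (1-p)\, K(x_0)(Z)}{p\, K(x_0)(Z) + (1-p)\, K(x_0')(Z)}.
\]
Each $p \in (0,1)$ defines a valid prior, so by hypothesis this ratio is bounded by $e^{2\epsilon d(x_0, x_0')}$ for all such $p$. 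Letting $p \to 0^+$, the second factor tends to $K(x_0)(Z)/K(x_0')(Z)$, so the whole expression tends to $(K(x_0)(Z)/K(x_0')(Z))^2$; by continuity this square is still bounded by $e^{2\epsilon d(x_0, x_0')}$, and taking square roots gives the desired inequality.

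The main obstacle is extracting the sharp constant in the converse: a naive uniform prior on $\{x_0, x_0'\}$ only delivers $K(x_0)(Z)/K(x_0')(Z) \le e^{2\epsilon d(x_0, x_0')}$, which is off by a factor of $2$ from what we need. Concentrating the prior on one endpoint makes the posterior ratio behave, in the limit, as the \emph{square} of the likelihood ratio, which is exactly what is needed to absorb that factor of $2$. Degenerate cases where $K(x_0')(Z) = 0$ have to be handled using the conventions already built into the definition of $\dprob$.
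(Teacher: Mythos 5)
Your proof is correct in both directions. The forward direction is the routine numerator/denominator estimate, and in the converse you handle the two delicate points properly: choosing the swap hiding function together with a prior skewed toward $x_0'$ makes the posterior ratio at $S=\{x_0\}$ tend, as $p\to 0^+$, to the \emph{square} of the likelihood ratio, which recovers the tight constant $\epsilon$ from the hypothesis bound $2\epsilon d(\phi)$ (a uniform prior, or a non-swapping $\phi$, loses a factor of $2$), while the degenerate case $K(x_0')(Z)=0$, $K(x_0)(Z)>0$ is excluded because the convention makes $\dprob(\sigma_1,\sigma_2)$ infinite, contradicting the hypothesis. The paper itself defers this proof to its report version, importing the result from the general-metric setting of \cite{Chatzikokolakis:13:PETS}, but your construction (two-point priors with a parameter driven to a limit, plus a hiding function that exchanges the two secrets) is essentially the same device the authors use for their related equivalence results, so this is the intended style of argument rather than a genuinely different route.
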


Note that this is a natural adaptation of a well-known interpretation of
standard differential privacy, stating that the attacker's conclusions are
similar, regardless of his side knowledge, and regardless of whether an
individual's real value has been used in the query or not. This corresponds to a
hiding function $\phi$ removing the value of an individual.

Note also that the above characterization compares two \emph{posterior}
distributions. Both $\sigma_1,\sigma_2$ can be substantially different than
the initial knowledge $\pi$, which means that an adversary does learn
some information about the user's location.

\paragraph{Knowledge of an informed attacker}
\label{sec:second-approach}

A different approach is to measure how much the adversary learns about the
user's location, by comparing his prior and posterior distributions. However,
since some information is allowed to be revealed by design, these distributions
can be far apart. Still, we can consider an \emph{informed} adversary who
already knows that the user is located within a set $N\subseteq\calx$. Let $d(N)
= \sup_{x,x'\in N}d(x,x')$ be the maximum distance between points in $x$.
Intuitively, the user's location remains private if, regardless of his prior
knowledge within $N$, the knowledge obtained by such an informed adversary
should be limited by a factor depending on $d(N)$. This means that if $d(N)$ is
small, i.e. the adversary already knows the location with some accuracy, then
the information that he obtains is also small, meaning that he cannot improve
his accuracy. Denoting by $\pi_{|N}$ the distribution obtained from $\pi$ by
restricting to $N$ (i.e. $\pi_{|N}(x) = \pi(x|N)$), we obtain the following
characterization:

\begin{theorem}
	\label{thm:characterization}
	A mechanism $\K$ satisfies \egeoind{}
	iff
	for all $N\subseteq\calx$,
	all priors $\pi$ on $\calx$,
	and all $Z \subseteq\calz$:
	\begin{align*}
		\dprob(\pi_{|N}, \sigma_{|N}) &\le \epsilon d(N)
			&\textrm{where}\qquad
			\sigma &= \bayes(\pi,K,Z)
	\end{align*}
\end{theorem}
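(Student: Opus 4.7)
The plan is to prove both directions directly from Bayes' rule, exploiting the key algebraic observation that for $S\subseteq N$ the ratio $\sigma_{|N}(S)/\pi_{|N}(S)$ can be written as a quotient of two $\pi$-weighted averages of the function $x\mapsto K(x)(Z)$: the numerator averages over $S$, the denominator over $N$. Geo-indistinguishability bounds this function pointwise, which then lifts to its averages.

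For the $(\Rightarrow)$ direction, fix $N$, $\pi$, $Z$ and any $S\subseteq N$. Unpacking the definitions of $\sigma=\bayes(\pi,K,Z)$, $\sigma_{|N}$ and $\pi_{|N}$ yields
\[
  \frac{\sigma_{|N}(S)}{\pi_{|N}(S)} \;=\; \frac{\tilde K_S}{\tilde K_N}, \qquad \tilde K_T := \frac{\sum_{x\in T}K(x)(Z)\pi(x)}{\sum_{x\in T}\pi(x)}.
\]
By the geo-ind hypothesis, for every $x\in S$ and $x'\in N$ one has $d(x,x')\le d(N)$ and hence $e^{-\epsilon d(N)}K(x')(Z)\le K(x)(Z)\le e^{\epsilon d(N)}K(x')(Z)$. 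Taking the $\pi$-weighted average over $x\in S$ (with $x'$ fixed) preserves the inequality; taking the $\pi$-weighted average of the resulting bound over $x'\in N$ gives $e^{-\epsilon d(N)}\tilde K_N\le \tilde K_S\le e^{\epsilon d(N)}\tilde K_N$. Taking logarithms and the supremum over $S$ yields $\dprob(\pi_{|N},\sigma_{|N})\le \epsilon d(N)$.

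For the $(\Leftarrow)$ direction, it suffices to recover the pointwise bound $K(x_1)(Z)\le e^{\epsilon d(x_1,x_2)}K(x_2)(Z)$ for arbitrary $x_1,x_2\in\calx$ and $Z\subseteq\calz$. Apply the characterization with $N=\{x_1,x_2\}$, the prior $\pi_p(x_1)=p$, $\pi_p(x_2)=1-p$, and $S=\{x_1\}$: for every $p\in(0,1)$,
\[
  \left|\ln\frac{K(x_1)(Z)}{p\,K(x_1)(Z)+(1-p)\,K(x_2)(Z)}\right| \;\le\; \epsilon\,d(x_1,x_2).
\]
Letting $p\to 0^+$, the left-hand side converges (by continuity, and using the paper's conventions for $0/0$ and $a/0$ on $\dprob$) to $|\ln(K(x_1)(Z)/K(x_2)(Z))|$, which establishes the bound; swapping $x_1$ and $x_2$ gives the reverse inequality, so geo-indistinguishability follows.

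The forward direction is essentially a ``ratio of averages'' calculation once the Bayes expression is reorganized, so the main obstacle lies in the reverse direction. The obvious choice of a symmetric prior on $\{x_1,x_2\}$ does not work: it yields only $\dprob\le 2\epsilon d(x_1,x_2)$, since the supremum over $S\in\{\{x_1\},\{x_2\}\}$ combines contributions from both atoms. Concentrating the prior asymptotically on one atom (via $p\to 0^+$) is precisely what decouples the two contributions and recovers the tight factor $\epsilon d(x_1,x_2)$, with the degenerate cases $K(x_i)(Z)=0$ handled automatically by the stated convention on $\dprob$.
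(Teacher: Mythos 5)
Your proof is correct and follows essentially the same route as the paper's: the forward direction lifts the pointwise geo-indistinguishability bound to $\pi$-weighted averages over $S$ and $N$ (exactly the averaging step used in the paper), and the reverse direction uses a two-point prior on $\{x_1,x_2\}$ concentrated on one atom in the limit, which is precisely the paper's $t\to 0$ argument for recovering the tight factor $\epsilon\, d(x_1,x_2)$. No gaps worth flagging beyond the implicit non-degeneracy assumptions that the paper also leaves tacit.
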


Note that this is a natural adaptation of a well-known interpretation of
standard differential privacy, stating that in informed adversary who already
knows all values except individual's $i$, gains no extra knowledge from the
reported answer, regardless of side knowledge about $i$'s value
\cite{Dwork:06:TCC}.

\paragraph{Abstracting from side information}
A major difference of \geoind, compared to similar approaches from the
literature, is that it abstracts from the side information available to the
adversary, i.e. from the prior distribution. This is a
subtle issue, and often a source of confusion, thus we would like to clarify what
``abstracting from the prior'' means. The goal of a privacy definition
is to restrict the information \emph{leakage} caused by the observation. Note
that the lack of leakage does not mean that the user's location cannot be inferred (it
could be inferred by the prior alone), but instead that the adversary's
knowledge does not increase \emph{due to the observation}.

However, in the context of LBSs, no privacy definition can ensure a small
leakage under any prior, and at the same time allow reasonable utility.
Consider, for instance, an attacker who knows that the user is located at some
airport, but not which one. The attacker's prior knowledge is very limited,
still any useful LBS query should reveal at least the user's city, from which
the exact location (i.e. the city's airport) can be inferred. Clearly,
due to the side information, the leakage caused by the observation is high.

So, since we cannot eliminate leakage under any prior, how can we give a
reasonable privacy definition without restricting to a particular one? First, we
give a formulation (Definition~\ref{def:geoind}) which does not involve the
prior at all, allowing to verify it without knowing the prior. At the same time,
we give two characterizations which
explicitly quantify over all priors, shedding light on how the prior affects
the privacy guarantees.

Finally, we should point out that differential privacy abstracts from the prior
in exactly the same way. Contrary to what is sometimes believed, the user's
value is \emph{not protected} under any prior information. Recalling the
well-known example from~\cite{Dwork:06:ICALP}, if the adversary knows that Terry
Gross is two inches shorter than the average Lithuanian woman, then he can
accurately infer the height, even if the average is release in a differentially
private way (in fact no useful mechanism can prevent this leakage). Differential
privacy does ensure that her risk is the same whether she participates
in the database or not, but this might me misleading: it does not imply the lack
of leakage, only that it will happen anyway, whether she participates
or not!
%On the other hand, it can be shown that if an adversary already knows
%the value of all people but Terry Gross, then the leakage caused by releasing
%the query is small. This is the exact analogy of knowing $B_r(x)$ in the case of
%\geoind{}.

%\pagebreak
\subsection{Protecting location sets}
\label{sec:multiple-locations}
So far, we have assumed that the user has a single location that he wishes to
communicate to a service provider in a private way (typically his current
location). In practice, however, it is
common for a user to have multiple points of interest, for instance a set of
past locations or a set of locations he frequently visits. In this case, the user might
wish to communicate to the provider some information that depends on all
points; this could be either the whole set of points itself, or some aggregate
information, for instance their centroid.
As in the case of a single location, privacy is still a requirement; the provider is allowed
to obtain only approximate information about the locations, their exact value
should be kept private. In this section, we discuss how \egeoind{} extends to
the case where the secret is a tuple of points $\xb = (x_1,\ldots,x_n)$.

Similarly to the case of a single point, the notion of distance is crucial for
our definition. We define the distance between two tuples of points $\xb =
(x_1,\ldots,x_n), \xb' = (x_1',\ldots,x_n')$ as:
\[
	\dmax(\xb,\xb') = \smallmax{i} d(x_i, x_i')
\]
Intuitively, the choice of metric follows the idea of reasoning
within a radius $r$: when $\dmax(\xb,\xb')\le r$, it means that all
$x_i,x_i'$ are within distance $r$ from each other.
All definitions and results of this section can be then directly applied to the case of
multiple points, by using $\dmax$ as the underlying metric. Enjoying $\ell$-privacy
within a radius $r$ means that two tuples at most $r$ away from each other,
should produce distributions at most $\epsilon r$ apart.

%the observation can help the attacker to infer $\xb$
%among all tuples at distance $r$ (i.e. tuples having all points at distance $r$
%from the corresponding points of $\xb$), by a factor of at most $e^l$. All three
%definitions of \geoind{} remain the same, the only change being
%the set of secrets and the distance between them.

\paragraph{Reporting the whole set}

A natural question then to ask is how we can obfuscate a tuple
of points, by independently applying an existing mechanism $\K_0$ to each
individual point, and report the obfuscated tuple. Starting from a tuple
$\xb=(x_1,\ldots,x_n)$, we independently apply $\K_0$ to each $x_i$ obtaining
a reported point $z_i$, and then report the tuple $\zb=(z_1,\ldots,z_n)$. Thus,
the probability that the combined mechanism $\K$ reports $\zb$, starting from
$\xb$, is the product of the probabilities to obtain each point $z_i$, starting
from the corresponding point $x_i$, i.e. $\K(\xb)(\zb) = \prod_i
\K_0(x_i)(z_i)$.
%\footnote{For simplicity we consider probabilities of points
%	here; a formal treatment of continuous mechanism would require to consider
%sets.}

The next question is what level of privacy does $\K$ satisfy. For simplicity,
consider a tuple of only two points $(x_1,x_2)$, and assume that
$\K_0$ satisfies \egeoind{}. At first look, one might expect the
combined mechanism $\K$ to also satisfy \egeoind{}, however
this is not the case. The problem is that the two points might be
\emph{correlated}, thus an observation about $x_1$ will reveal information
about $x_2$ and vice versa. Consider, for instance, the extreme case in which $x_1 = x_2$.
Having two observations about the same point reduces the level of privacy, thus
we cannot expect the combined mechanism to provide the same level of privacy.

\looseness -1
Still, 
	if $\K_0$ satisfies \egeoind{}, then
	$\K$ can be shown to satisfy \egeoind[n\epsilon],
	%\label{thm:composition}
i.e.
a level of privacy that scales linearly with $n$.
%\begin{restatable}{theorem}{thmcomposition}
%	If $\K_0$ satisfies \egeoind{}, then
%	$\K$ satisfies \egeoind[n\epsilon].
%	\label{thm:composition}
%\end{restatable}
%
%Note that this issue is similar to the problem of composing queries in standard
%differential privacy. If the outcome of multiple queries is randomized by
%adding independent noise to each answer, then $\epsilon$ scales linearly with
%the number of queries. The reason is exactly that the answers
%are correlated, since they come from the same database.
Due to this scalability issue, the technique of independently applying a mechanism
to each point is only useful when the number of points is small. Still, this is
sufficient for some applications, such as the case study of
Section~\ref{sec:location-based}. Note, however, that this technique is by no means
the best we can hope for: similarly to standard differential privacy
\cite{Blum:08:STOC,Roth:10:STOC}, better results could be achieved by adding
noise to the whole tuple $\xb$, instead of each individual point. We
believe that using such techniques we can achieve \geoind{} for a large number
of locations with reasonable noise, leading to practical mechanisms for
highly mobile applications. We have already started exploring this
direction of future work.

\paragraph{Reporting an aggregate location}

Another interesting case is when we need to report some aggregate information
obtained by $\xb$, for instance the centroid of the tuple. In general
we might need to report the result of a query $f:\calx^n\to\calx$.
Similarly to the case of standard differential privacy, we can compute the
real answer $f(\xb)$ and the add noise by applying a mechanism $\K$ to it.
If $f$ is $\Delta$-sensitive wrt $d,\dmax$, meaning that $d(f(\xb),f(\xb'))\le \Delta
\dmax(\xb,\xb')$ for all $\xb,\xb'$, and $\K$ satisfies \geoind{},
then the composed mechanism $\K\circ f$ can be shown
to satisfy \egeoind[\Delta\epsilon].

Note that when dealing with aggregate data, standard differential privacy
becomes a viable option. However, one needs to also examine the loss of utility
caused by the added noise. This highly depends on the application: differential
privacy is suitable for publishing aggregate queries with \emph{low
sensitivity}, meaning that changes in a single individual have a relatively
small effect on the outcome. On the other hand, location information often has
high sensitivity. A trivial example is the case where we want to publish the
complete tuple of points. But sensitivity can be high even for aggregate
information: consider the case of publishing the centroid of 5 users located
anywhere in the world. Modifying a single user can hugely affect their centroid,
thus achieving differential privacy would require so much noise that the result
would be useless. For \geoind{}, on the other hand, one needs to consider the
distance between points when computing the sensitivity. In the case of the
centroid, a small (in terms of distance) change in the tuple has a small effect
on the result, thus \geoind{} can be achieved with much less noise.

\section{A mechanism to achieve geo-in\-distinguishability}\label{sec:mechanism}

In this section we present  a method to generate noise so to satisfy  \geoind{}.
We model the location domain as a discrete\footnote{ For applications with digital interface  the domain of interest is discrete, since the representation of the coordinates of the
points is  necessarily 
finite.} Cartesian plane  with the standard notion of Euclidean distance.
%Namely, given two points with coordinates
%$(s_1,t_1)$ and $(s_2,t_2)$ respectively, their distance is defined as
%$d((s_1,t_1),(s_2,t_2))=\sqrt{(s_2-s_1)^2+(t_2-t_1)^2}$.
This model can be considered a good approximation of the Earth surface when the area of interest is not too large.

%However, it is not easy to devise an efficient  mechanism for \geoind{}
%that generates noise directly on a discrete plane (we will come back to this point in Section~\ref{sec:discrete}).
%We therefore consider a different approach:

\begin{enumerate}[(a)]
\item First, we define  a  mechanism to achieve geo-indistinguishability in the ideal case of the continuous plane.
%For each actual location,  this mechanism should generate a random point in a way that satisfies \geoind{}  on $\reals^2$.
\item \looseness -1
Then, we discretized the mechanism by remapping each  point generated according to (a) to the closest point in the discrete domain.
%\end{enumerate}
%Furthermore, we  may want to consider only a limited area. For instance if we are in a island, we may wish to report only   locations in the land, not in the sea. Thus we may want to apply a third step:
%\begin{enumerate}[(c)]
\item Finally, we  truncate the mechanism, so to report only points within the limits of the area of interest.
\end{enumerate}

\subsection{A mechanism for the continuous plane}\label{sec:amechanism}
\looseness -1
Following the above plan, we  start by defining a  mechanism for  geo-indistinguishability on the continuous plane.
%This will constitute the basis of our method.
The idea is that whenever the actual location is $\vect{x}_0\in\reals^2$,  we report, instead, a point $\vect{x}\in\reals^2$ generated randomly according to the noise function.
The latter needs  to be such  that the probabilities of reporting a point in a certain (infinitesimal) area around $\vect{x}$,
when the actual locations are $\vect{x}_0$ and $\vect{x}'_0$ respectively,  differs at most by a multiplicative factor $e^{-\epsilon \,{d(\vect{x}_0,\vect{x}'_0)}}$.

We can achieve this property by requiring that the probability of generating a point in the area around $\vect{x}$  decreases exponentially with the distance
from the actual location $\vect{x}_0$.
In a linear space this is exactly the behavior of the Laplace distribution, whose
probability density function (pdf) is
$
 \nicefrac{\epsilon}{2}\, e^{- \epsilon\,|x-\mu|}
$.
%(where $\mu,\epsilon$ are parameters).
This distribution has been used in the literature to add noise to query results on statistical databases, with $\mu$ set to be the actual answer,
and it can be shown to satisfy  \edpold{} \cite{Dwork:11:CACM}.
%Figure~\ref{fig:linear laplacians} illustrates the idea.
%\begin{figure}[tb]%
% 		\centering
% 		\includegraphics[width=0.9\columnwidth]{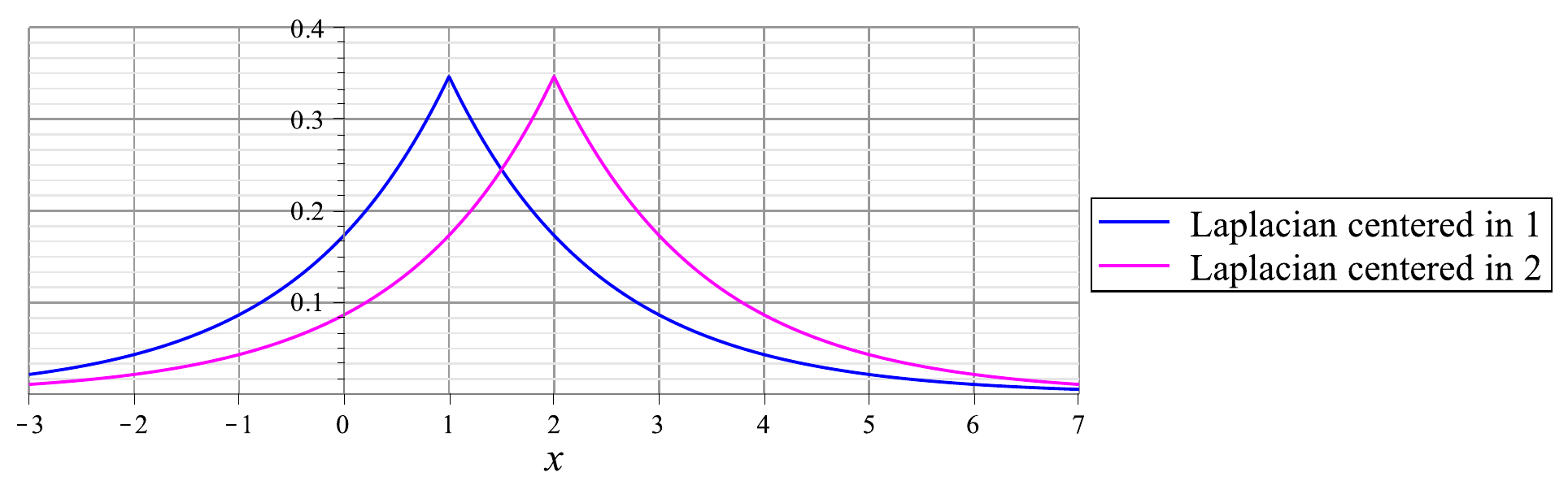}%\\[-2ex]
%			\caption{Two linear Laplacians (with $\epsilon=\ln 2$) centered at $1$ and $2$.
%			The ratio between the two curves is at most $e^\epsilon = 2$ everywhere.}%
%			\label{fig:linear laplacians}%
%			\vspace{-7pt}
%\end{figure}
%Of course we cannot use the standard Laplace distribution for our purposes, because it is defined on the line, while we need a distribution  defined on the plane. Furthermore we need to use the (Euclidean) planar distance $d(\vect{x},\vect{\mu})$ instead of the liner distance  $|x-\mu|$.
%Intuitively,
%%however, just
%replacing  $|x-\mu|$ by $d(\vect{x},\vect{\mu})$ in the Laplace's pdf
%results in  a natural extension of the Laplacian  from one to two dimensions. 

There are two possible definitions of 
Laplace distribution on high\-er dimensions (multivariate Laplacians).
 %In general \emph{multivariate} means that it involves $k\geq 1$ random variables. The particular cases of $k=1$ and $k=2$  are called \emph{univariate} and \emph{bivariate} respectively.
The first one, investigated in %\cite{Lange:93:JCGS,Arslan:10:StatPap}.
\cite{Lange:93:JCGS}, and used also in  \cite{Dwork:06:TCC},  is obtained from the standard Laplacian by replacing  $|x-\mu|$ with $d(\vect{x},\vect{\mu})$. 
%We call  \emph{planar Laplacian} such extension.
The second way consists in generating each Cartesian coordinate independently, according to a linear Laplacian. 
For reasons that will become clear in the next paragraph, we adopt the first approach. 
%\newpage

\paragraph{The probability density function}
Given  the parameter   $\epsilon\in\reals^+$, and the actual location $\vect{x}_0\in\reals^2$,   the pdf  of our noise mechanism, on any other point $\vect{x}\in\reals^2$,  is:
\begin{equation}\label{eqn:planar laplacian}
D_\epsilon (\vect{x}_0) (\vect{x}) = \frac{\epsilon^2}{2\,\pi} \,  e^{-\epsilon \,{d(\vect{x}_0,\vect{x})}}
\end{equation}
where $\nicefrac{\epsilon^2}{2\,\pi} $ is a normalization factor.
%Using a transformation in polar coordinates it is possible to show that the integral of  this  function over the whole $\reals^2$ gives $1$, which means that it is indeed the pdf of a probability distribution.
%In Appendix~\ref{app:normalization factor} we show how to compute the normalization factor.
 We call this function  \emph{planar Laplacian centered at} $\vect{x}_0$. 
The corresponding distribution is illustrated in Figure~\ref{fig:planar laplacians}.
It is possible to show that (i) the projection of a planar Laplacian on any vertical plane passing by the center gives a (scaled) linear Laplacian,  %(Figure~\ref{fig:linear laplacians}).
%Or equivalently, the planar laplacian can be obtained by rotating a linear laplacian around its center, and multiplying by a normalization factor.
%In %the report version of this paper  %\cite{techrep} (Appendix B)
%Appendix~\ref{app:geo-indistinguishability}
and  (ii) the corresponding mechanism  satisfies $\epsilon$-\geoind{}.
These two properties would not be  satisfied by the second approach to the multivariate Laplacian.

 \begin{figure}[t]%
		\centering
		\includegraphics[width=0.8\columnwidth,height=0.5\columnwidth]{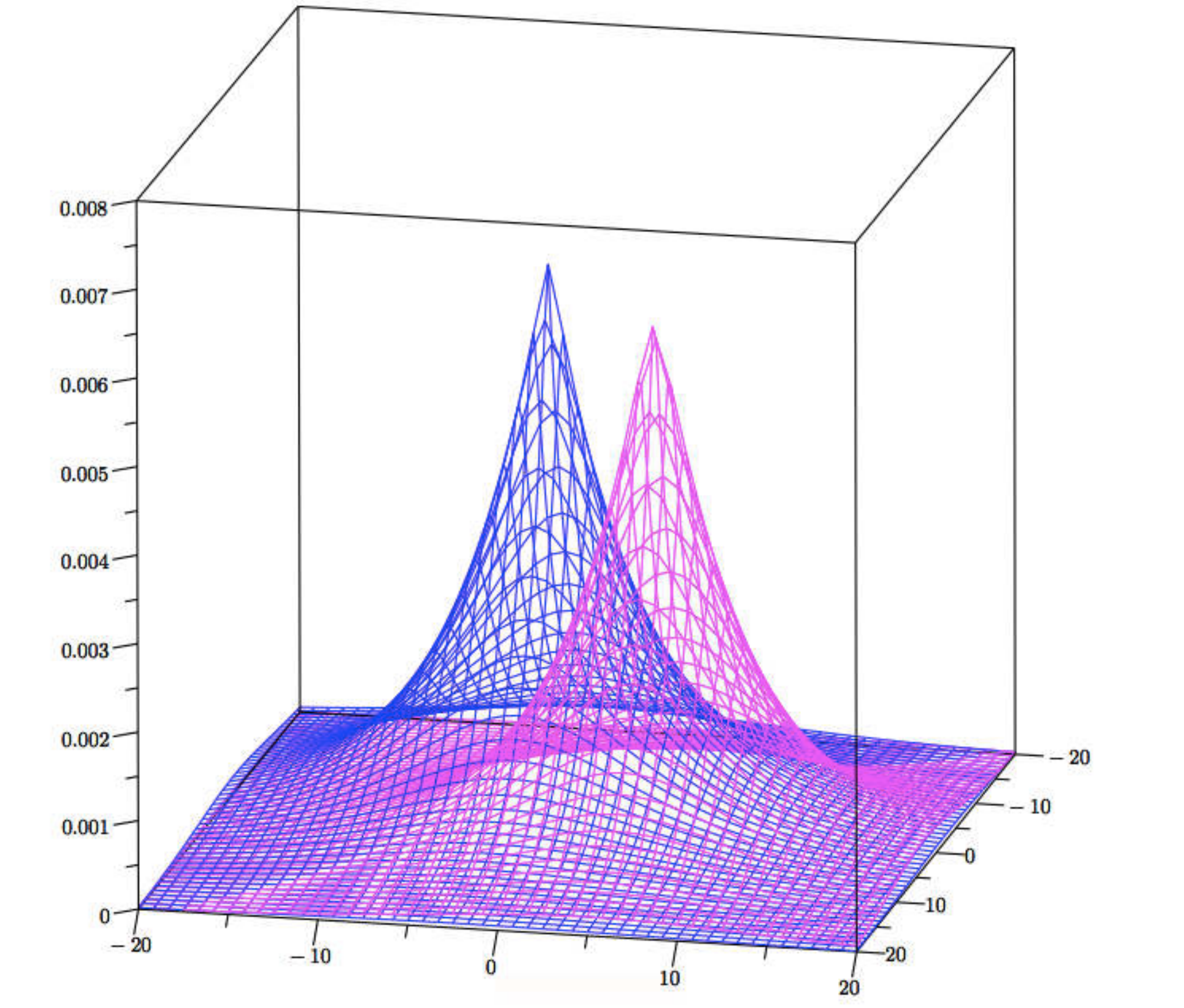}%
		\caption{The pdf of two planar Laplacians, centered at $(-2,-4)$ and at $(5,3)$ respectively, with $\epsilon=1/5$.%
		%The distance between the  centers is $7\,\sqrt{2}$, and the ratio between the  curves is at most $e^{7/5\,\sqrt{2}} \approx 7.24$ everywhere.
		}
		\label{fig:planar laplacians}%
		\vspace{-3pt}
\end{figure}

\paragraph{Drawing a random point}\label{sec:drawing polar}
We illustrate now how to draw a random point from the   pdf  defined in \eqref{eqn:planar laplacian}.
First of all, we note that  the pdf of the planar Laplacian depends only on the distance from $\vect{x}_0$.
It will be convenient,  therefore, to switch to a system of polar coordinates with origin in $\vect{x}_0$.
% Intuitively, in this way the pdf will depend only on one variable, thus simplifying the drawing procedure.
%So, given the pdf in \eqref{eqn:planar laplacian}, we consider the transformation into a system of polar coordinates $(r,\theta)$  where $r$ is the  radius and $\theta$ is the angle. 
A point $\vect{x}$ will be represented as a point $(r,\theta)$, where $r$ is the distance of $\vect{x}$ from $\vect{x}_0$, and $\theta$ is the angle that the line $\vect{x}\,\vect{x}_0$ forms with respect to the horizontal axis of the Cartesian system.
Following the standard transformation formula,  the pdf of the \emph{polar  Laplacian} centered at the origin ($\vect{x}_0$) is:
\begin{equation}\label{eqn:planar laplacian polar}
D_\epsilon (r,\theta) = \frac{\epsilon^2}{2\,\pi} \, r\, e^{-\epsilon \,r}
\end{equation}

We note now that the polar Laplacian defined above  enjoys a property that is very convenient for drawing in an efficient way:
\emph{the two  random variables  that represent the radius and the angle are  independent}.
Namely, the pdf can be expressed as the product of the two marginals.
In fact, let us denote these  two random variables by $R$  (the radius) and $\Theta$ (the angle). The two marginals are:
\[
\begin{array}{rcl}
%\label{eqn:Gamma}
D_{\epsilon,R} (r) &= &   \int_0^{2\pi} D_\epsilon (r,\theta) \,  d\theta \, = \,  \epsilon^2\, r\, e^{-\epsilon \,r} \\[2ex]
D_{\epsilon,\Theta} (\theta) &= & \int_0^{\infty} D_\epsilon(r,\theta) \, d r  \, = \,   \frac{1}{2\,\pi}
\end{array}
\]
Hence we have
$
D_\epsilon  (r,\theta) = D_{\epsilon,R} (r) \; D_{\epsilon,\Theta} (\theta)
$.
Note that $D_{\epsilon,R} (r)$ corresponds to the pdf of the \emph{gamma distribution} with shape $2$ and scale $1/\epsilon$. 
%Figure~\ref{fig:Gamma}(a) shows the graph of this function for various values of  $\epsilon$.
%
%\begin{figure}[tb]%
%		\centering
%		%\includegraphics[width=0.9\columnwidth, height=0.6\columnwidth]{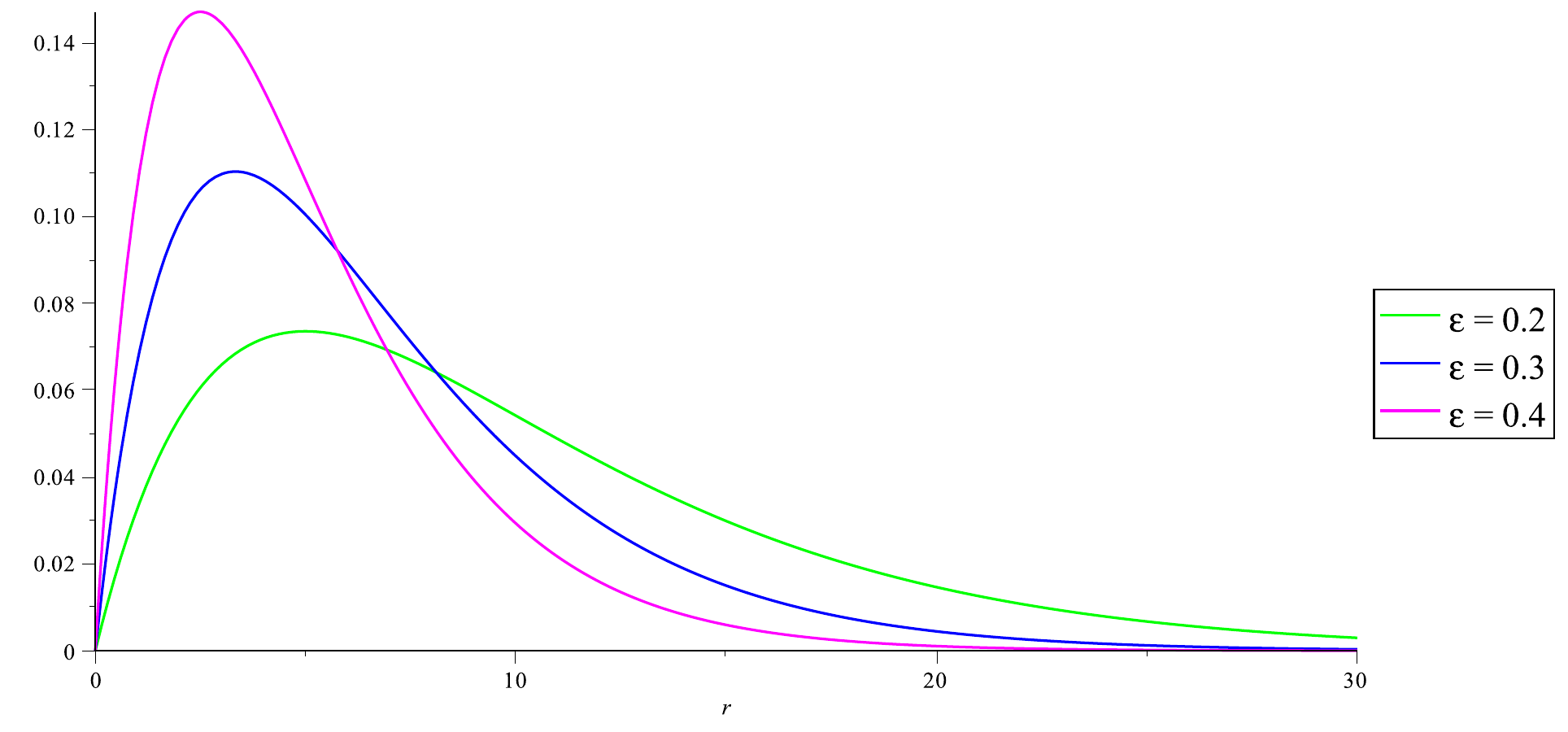}%
%		\includegraphics[width=\columnwidth, height=0.35\columnwidth]{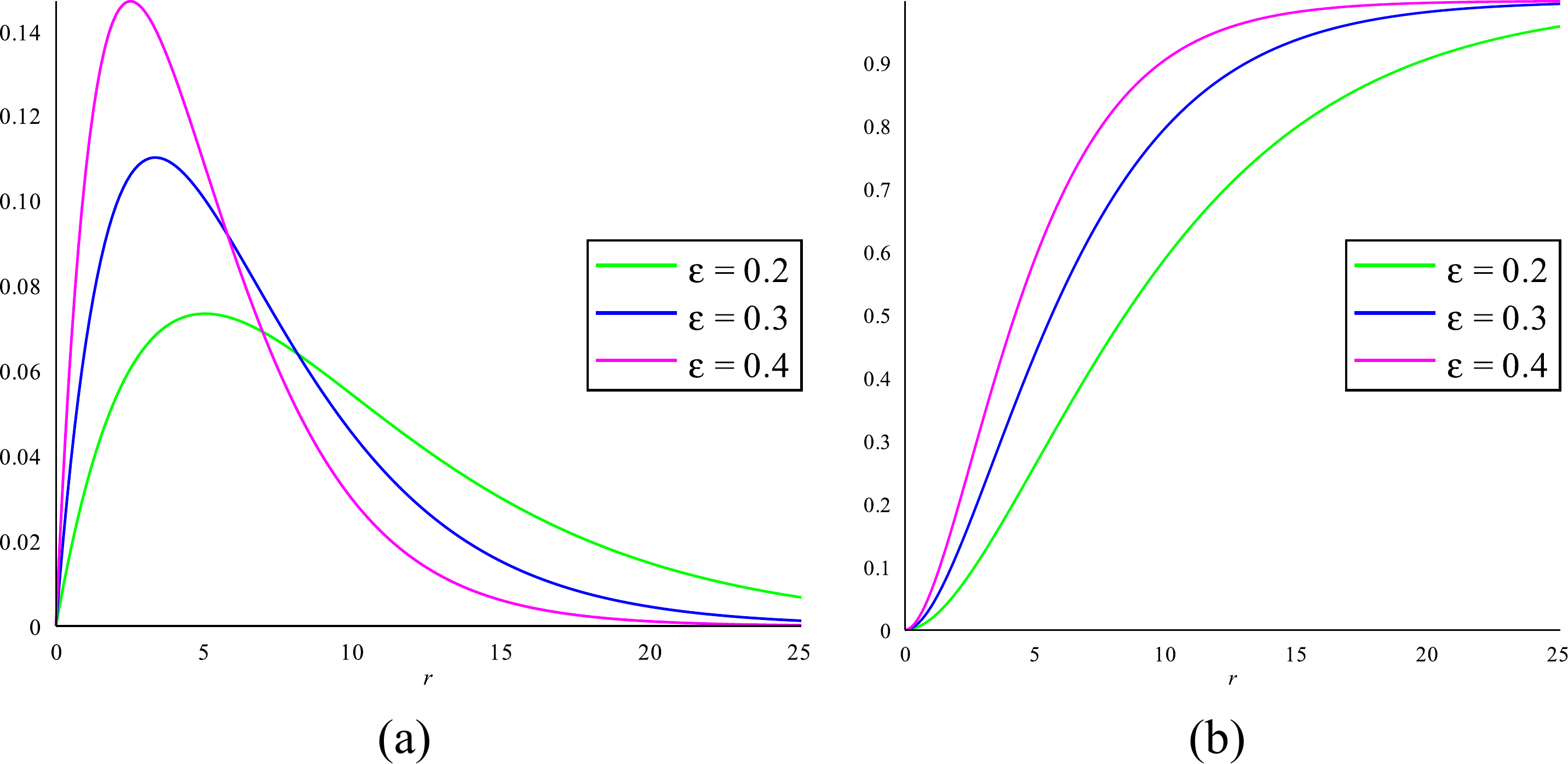}\\%
%		\caption{Gamma distribution: pdf and cdf for various $\epsilon$.}%
%		\label{fig:Gamma}%
%		\vspace{-3pt}
%\end{figure}

%It may come as a surprise that this graph differs significantly from those in Figures~\ref{fig:linear laplacians} and \ref{fig:planar laplacians}, and in particular, that it does not have its maximum in the origin.
%Remember, however, that the graph in Figure ~\ref{fig:Gamma}(a) represents a pdf \emph{in polar coordinates}. More precisely, $D_{\epsilon,R} (r)$  represents the probability that the random point is located in the circular crown  centered at the origin and delimited by  $r$ and $r+dr$. The area of this crown is proportional to $r$, hence when $r$ is close to $0$ also the probability is close to $0$. As $r$ increases the probability increases, until the factor $e^{-\epsilon\, r}$ takes over. For  $r$ approaching infinity, the factor $e^{-\epsilon\, r}$ approaches $0$, and dominates over $r$, hence the probability approaches $0$ again.
%

Thanks to the fact that  $R$ and $\Theta$ are independent, in order to  draw a point $(r,\theta)$ from $D_\epsilon  (r,\theta)$ it is sufficient to draw separately $r$ and $\theta$ from $D_{\epsilon,R} (r) $ and $D_{\epsilon,\Theta} (\theta) $ respectively.

Since $D_{\epsilon,\Theta} (\theta) $ is constant, drawing $\theta$ is easy: it is sufficient to generate $\theta$ as a random  number in the interval $[0,2\pi)$ with uniform distribution.

We now show how to draw $r$. Following    standard lines, we consider the cumulative distribution function (cdf) $C_{\epsilon}(r)$:
\begin{equation*}
	C_{\epsilon}(r) \,=\, \displaystyle \int_0^r     D_{\epsilon,R} (\rho) d\rho \,=\,
	%&=& \displaystyle \left.(-\frac{\epsilon\,\rho}{2\,\pi}-\frac{1}{2\,\pi})\, e^{-\epsilon\,\rho}\, \right|_0^r \\[4ex]
	 \displaystyle 1-(1+\epsilon\,r)\, e^{-\epsilon\,r}
\end{equation*}
Intuitively, $C_{\epsilon}(r)$ 
%(Figure~\ref{fig:Gamma}(b)) 
represents the probability that the radius of the random point  falls between $0$ and $r$.
Finally,  we generate a random number $p$ with uniform probability in the interval $[0,1)$,  and  we set $r = C_{\epsilon}^{-1}(p)$.
Note that
\[
	C_\epsilon^{-1}(p) = - \smallfrac{1}{\epsilon}\big(\lambert(\smallfrac{p-1}{e})+1\big)
\]
where $\lambert$ is the Lambert W function (the $-1$ branch), which can be computed efficiently and is
implemented in several numerical libraries (MATLAB, Maple, GSL, \ldots).

%\begin{figure}[tb]%
%		\centering
%		\includegraphics[width=0.9\columnwidth]{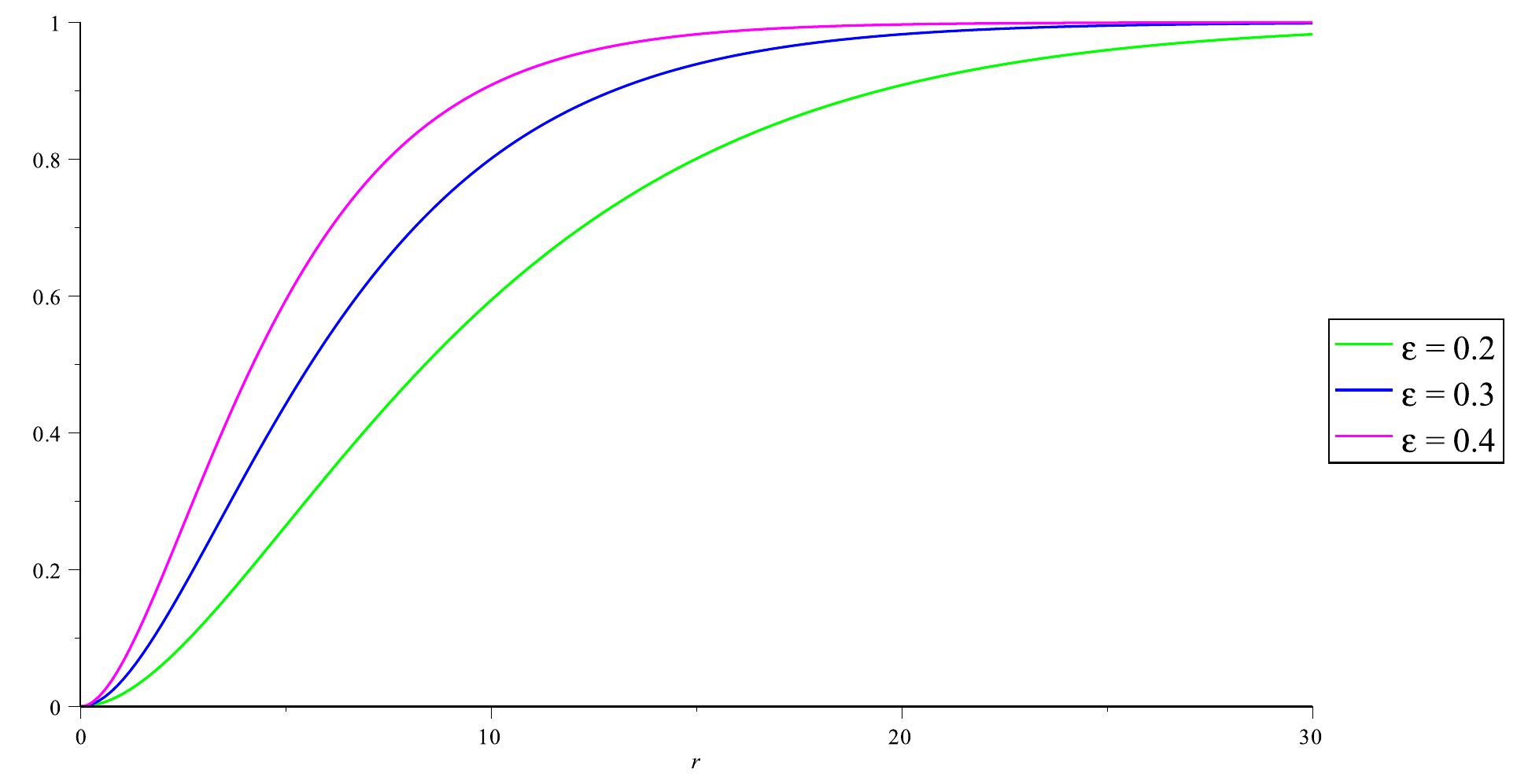}%
%		\caption{Plot of $C_{\epsilon}(r)$ for various values of $\epsilon$.}%
%		\label{fig:cumulativeGamma}%
%\end{figure}

%Given a ``universal'' Cartesian reference system and  the actual location $\vect{x}_0= (s,t)$ in this system, if we could work in the ``ideal'' continuous plane, then we would just need to generate the noise  $(r,\theta)$
%as specified above, and then reports the point $\vect{x}= (s + r \cos\theta,t+r\sin\theta)$.
%In practice however there is always some discretization involved, because (a) computers have finite precision, and (b) (more important)   the coordinates of the ``universal reference system'' will have a finite representation, typically using only a few decimal digits. The discretization of our method, and its properties, constitute the subject  of next  section.

\subsection{Discretization}\label{sec:discrete}
%In practical applications locations are typically represented by means of discrete coordinates. For instance, latitude and longitude up to some decimal of precision.
%Thus
We discuss now how to  approximate the Laplace mechanism on a grid $\grid$ of discrete Cartesian coordinates. Let us recall the points (a) and (b) of the plan, in light of the development so far: 
Given  the actual location  $\vect{x}_0$,   report the point $\vect{x}$ in $\grid$ obtained as follows:
\begin{enumerate}[(a)]
\item\label{item:polar point} first,     draw  a point $(r,\theta)$ following the method in  Figure~\ref{fig:method for the noise},

%Figure~\ref{fig:method for the noise} summarizes our method.
 \begin{figure}[t]%
 \begin{tabular}{l}
 \hline \vspace{-3mm}\\
{\bf Drawing a  point $(r,\theta)$ from the polar Laplacian}\\
 \hline \vspace{-3mm}\\
1. \ draw $\theta$ uniformly in $[0, 2\pi)$\\
2.  \  draw   $p$ uniformly in  $[0, 1)$ and   set $r = C_{\epsilon}^{-1}(p)$\\
\hline
\end{tabular}\\[-2ex]
\bigskip
\caption{Method to generate Laplacian noise.}
\label{fig:method for the noise}
\vspace{-5pt}
\end{figure}

\item\label{item:grid point} then,    remap  $(r,\theta)$  to the closest point  $\vect{x}$ on $\grid$.
 \end{enumerate}

 We will denote by ${\K}_{\epsilon}:  \grid\to\calp(\grid)$ the above mechanism.  In summary, ${\K}_{\epsilon}(\vect{x}_0)(\vect{x})$ represents the probability
 of reporting the point $\vect{x}$ when the actual point is $\vect{x}_0$.

It is not obvious that the discretization preserves \geoind{}, due to the following problem:
 In principle,  each point $\vect{x}$ in $\grid$ should gather the probability of the set of points for which
 $\vect{x}$ is the closest point in $\grid$, namely %the rectangular region
 \[\mathit{R}(\vect{x}) =\{\vect{y}\in\reals^2\, |\, \forall \vect{x}'\in {\cal G}. \; d(\vect{y},\vect{x}') \leq d(\vect{y},\vect{x}')\}\]
However, due to the finite precision of the machine,  the noise generated according to  (\ref{item:polar point}) is already discretized in accordance with the polar system. Let $\cal W$ denote the  discrete set of points actually generated in (\ref{item:polar point}).
Each of those points $(r,\theta)$  is drawn with the probability of the area between $r$, $r+\delta_r$, $\theta$ and $\theta+\delta_\theta$, where  $\delta_r$ and  $\delta_\theta$ denote the precision of the machine in representing the radius and the angle respectively.
Hence, step (\ref{item:grid point}) generates a point  $\vect{x}$ in $\grid$ with the probability of the set
$
\mathit{R}_{\cal W}(\vect{x}) =
%\{\vect{y}\in{\cal W}\, |\, \forall \vect{x}'\in {\cal G}. \; d(\vect{y},\vect{x}') \leq d(\vect{y},\vect{x}')\}  =
\mathit{R}(\vect{x}) \cap {\cal W}
$.
This introduces some  irregularity in the mechanism, because  the %scaly
region associated to  $\mathit{R}_{\cal W}(\vect{x}) $
has a different   shape and area depending on the position of $\vect{x}$ relatively to $\vect{x}_0$.
%Figure~\ref{fig:polar to grid} illustrates the situation.
The situation is illustrated in Figure~\ref{fig:polar to grid}  with $R_0=R_{\cal W}(\vect{x}_0)$ and $R_1=R_{\cal W}(\vect{x}_1)$.

%The Cartesian grid constituted by blue horizontal and vertical lines   represents $\grid$. The  polar grid constituted by black circles and radial lines   represent $\cal W$.  The two dashed rectangles around the points  $\vect{x}_0$ and  $\vect{x}_1$   represent $R(\vect{x}_0)$ and  $R(\vect{x}_1)$.
%%\footnote{We have chosen  $\vect{x}_0$ to coincide with the origin of the polar system, but this is not essential: it is only to   emphasize the difference in the remapping.}.
%The regions $R_0$ and $R_1$  colored  in grey and magenta correspond to $R_{\cal W}(\vect{x}_0)$ and $R_{\cal W}(\vect{x}_1)$ respectively. Note that  $R_0$ and $R_1$ have different shapes and areas, for instance $R_0$ is larger than $R_1$.
%%Also note that, the farther we are from the origin, the less uniform the areas of the scaly regions become.

%In the next paragraph we show that, despite of the above problem, we can still obtain \geoind{},
%at the price of a degradation in  the privacy parameter $\epsilon$.
\begin{figure}[tb]%
		\centering
		\includegraphics[width=0.5\columnwidth]{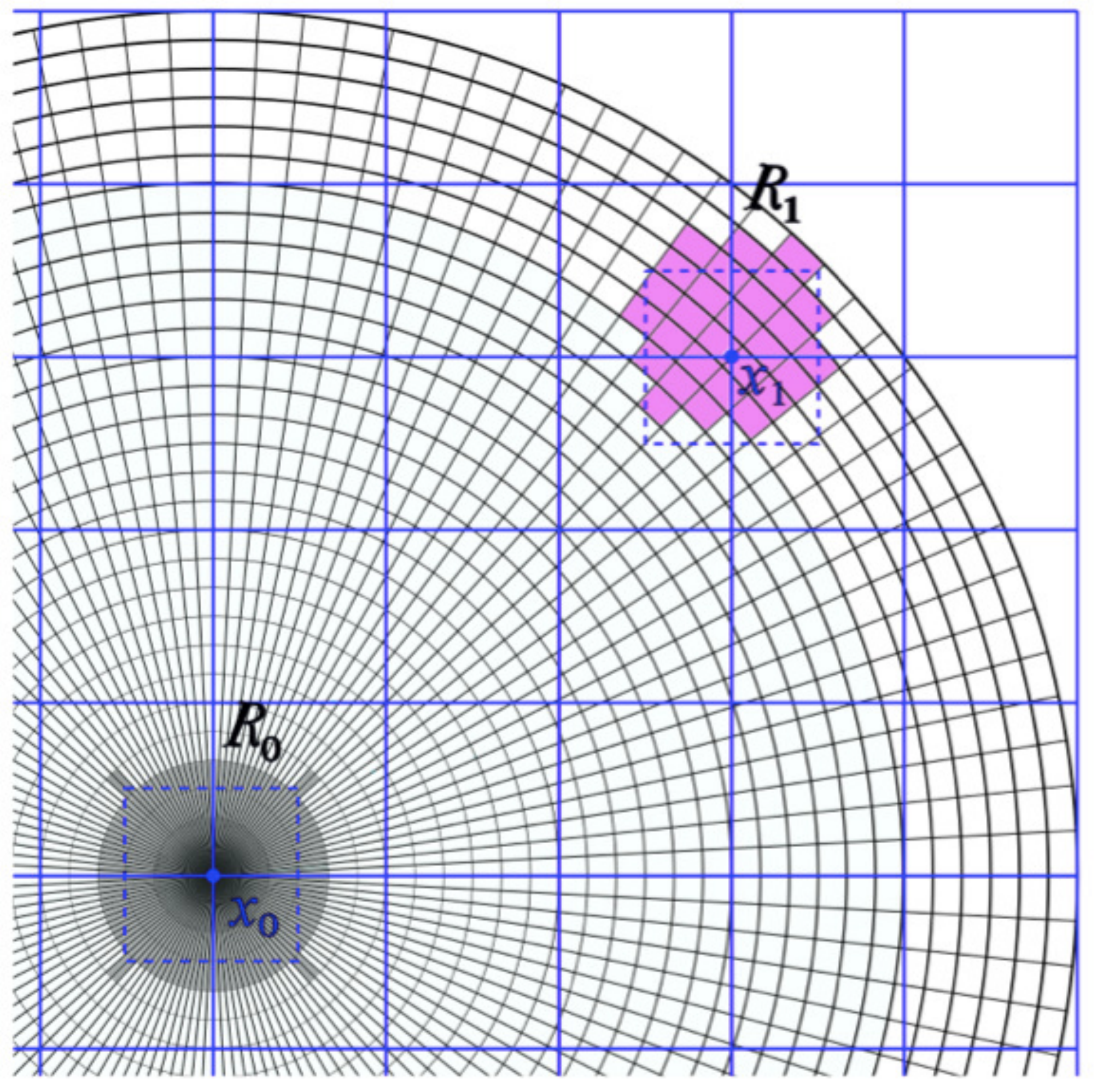}%
		\caption{Remapping the points in polar coordinates   to  points in the grid.}%
		\label{fig:polar to grid}%
		\vspace{-5pt}
\end{figure}

\paragraph{Geo-indistinguishability of the discretized mechanism}
We   now   analyze the privacy guarantees provided by our discretized mechanism.
We  show  that the discretization  preserves \geoind{}, at the price of a degradation of  the privacy parameter $\epsilon$.

For the sake of generality we do not require the step units along the two dimensions of $\grid$ to be equal.
We will call  them \emph{grid units}, and will denote by $u$ and $v$  the smaller and the larger unit, respectively.
 %\[
%u=\max\{u_x,u_y\}
%\]
We recall that $\delta_\theta$ and  $\delta_r$ denote the precision of the machine in representing $\theta$ and $r$, respectively.
We assume that $\delta_r \le r_{\max}\delta_\theta$.
The following theorem  % \cite{techrep}   (Appendix~C),
 states the \geoind{} guarantees provided by our mechanism:
$\K_{\epsilon'}$ satisfies  $\epsilon$-\geoind{}, within a range $r_{\max}$,
provided that  $\epsilon'$  is chosen in a suitable way
that depends on $\epsilon$,  on  the length of
the step units  of $\grid$, and on the precision of the machine.

\begin{restatable}{theorem}{thmgeodp}
	\label{theo:geo-dp}
	Assume $r_{\max} <   \nicefrac{u}{ \delta_\theta}$, and let $q = \nicefrac{u}{r_{\max} \delta_\theta}$. Let $\epsilon,\epsilon'\in\reals^+$ such that
	\[\epsilon' + \frac{1}{u} \ln \, \frac{q + 2\, e^{\epsilon'u}}{q - 2\, e^{\epsilon'u}}\leq \epsilon\]
	Then
	 ${\K}_{\epsilon'}$ provides $\epsilon$-\geoind{}  within  the range of
	 $r_{\max}$. Namely,
	 if $d(\vect{x}_0,\vect{x}),d(\vect{x}'_0,\vect{x})\leq r_{\max}$ then:
	\[
	{\K}_{\epsilon'} (\vect{x}_0)(\vect{x})\leq e^{\epsilon\, d(\vect{x}_0,\vect{x}_0')}{{\K}_{\epsilon'}(\vect{x}_0')(\vect{x})}.
	\]
\end{restatable}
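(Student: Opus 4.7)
The plan is to show that for any grid points $\vect{x}_0, \vect{x}_0' \in \grid$ and any $\vect{x} \in \grid$ with $d(\vect{x}_0,\vect{x}), d(\vect{x}_0',\vect{x}) \le r_{\max}$, the ratio ${\K}_{\epsilon'}(\vect{x}_0)(\vect{x}) / {\K}_{\epsilon'}(\vect{x}_0')(\vect{x})$ is bounded by $e^{\epsilon\, d(\vect{x}_0,\vect{x}_0')}$. Since the hypothesis decomposes as $\epsilon = \epsilon' + \frac{1}{u}\ln \frac{q+2e^{\epsilon' u}}{q-2e^{\epsilon' u}}$, it is natural to split the target bound into a \emph{continuous} contribution $e^{\epsilon' d(\vect{x}_0,\vect{x}_0')}$ inherited from the planar Laplacian's own $\epsilon'$-geo-indistinguishability on $\reals^2$, and a \emph{discretization overhead} $\big(\frac{q+2e^{\epsilon' u}}{q-2e^{\epsilon' u}}\big)^{d(\vect{x}_0,\vect{x}_0')/u}$ that accounts for the mismatch between the two polar cell partitions centered at $\vect{x}_0$ and at $\vect{x}_0'$.

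First I would expand ${\K}_{\epsilon'}(\vect{x}_0)(\vect{x}) = \sum_{(r,\theta)\in T(\vect{x}_0,\vect{x})} P_{\epsilon'}(r,\theta)$, where $T(\vect{x}_0,\vect{x}) \subseteq {\cal W}$ consists of those discrete polar coordinates relative to $\vect{x}_0$ whose Cartesian image $\vect{x}_0 + r(\cos\theta,\sin\theta)$ is remapped to $\vect{x}$, and $P_{\epsilon'}(r,\theta)$ is the probability mass of the corresponding polar cell; likewise for $\vect{x}_0'$. The assumption $r_{\max} < u/\delta_\theta$, together with $\delta_r \le r_{\max}\delta_\theta$, ensures that every polar cell has arc length $r\delta_\theta$ and radial step $\delta_r$ bounded by $u/q$, so each cell sits comfortably inside a grid Voronoi cell $R(\vect{x})$. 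I would then set up an injection between cells of $T(\vect{x}_0,\vect{x})$ and $T(\vect{x}_0',\vect{x})$ covering approximately the same Cartesian region, and bound the mass ratio of matched cells via the pointwise inequality $D_{\epsilon'}(\vect{x}_0)(\vect{y}) \le e^{\epsilon' d(\vect{x}_0,\vect{x}_0')}\, D_{\epsilon'}(\vect{x}_0')(\vect{y})$, whose integration over matched cells produces the $e^{\epsilon' d(\vect{x}_0,\vect{x}_0')}$ factor.

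The main obstacle is controlling the \emph{unmatched} cells — those that lie in $T(\vect{x}_0,\vect{x})$ but have no counterpart in $T(\vect{x}_0',\vect{x})$ because they straddle the boundary of $R(\vect{x})$ differently under the two polar grids. These cells produce the discretization overhead. I expect the argument to show that for each grid unit $u$ of separation, at most two extra rows of cells near the boundary of $R(\vect{x})$ disagree between the partitions, each carrying mass at most $e^{\epsilon' u}$ times that of a generic matched cell; dividing this excess across the $q$ rows that fit within one grid unit then yields the per-unit ratio $\frac{q+2e^{\epsilon' u}}{q-2e^{\epsilon' u}}$. Iterating this per-unit bound over the $d(\vect{x}_0,\vect{x}_0')/u$ unit separations between $\vect{x}_0$ and $\vect{x}_0'$ gives the desired $e^{\epsilon d(\vect{x}_0,\vect{x}_0')}$ overall. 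The delicate bookkeeping of these boundary mismatches — and the role of the hypothesis $\delta_r \le r_{\max}\delta_\theta$ in ensuring that the radial discretization does not add further error beyond what the angular condition already handles — is where the technical weight of the proof concentrates.
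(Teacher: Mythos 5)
Your high-level ingredients are the right ones (comparing the discretized masses to the continuous planar Laplacian, bounding the boundary-cell mismatch by a fraction $1/q$ of the bulk with a density-variation factor $e^{\epsilon' u}$), but the central accounting step is based on a wrong picture of where the discretization error comes from, and this is a genuine gap. The mismatch between the polar discretization and the grid cell $R(\vect{x})$ is a \emph{one-shot} boundary effect at $R(\vect{x})$: the number and total area of the straddling cells $W$ depend only on $\delta_r$, $r\,\delta_\theta$ and $u$, not on the separation $d(\vect{x}_0,\vect{x}_0')$. There is therefore no ``per grid unit of separation'' structure to iterate, and your claimed overhead $\bigl(\frac{q+2e^{\epsilon' u}}{q-2e^{\epsilon' u}}\bigr)^{d(\vect{x}_0,\vect{x}_0')/u}$ has no derivation behind it as described. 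Moreover, if the iteration is meant as a chaining through intermediate grid points, it fails for the Euclidean metric: distances along a chain of grid points generally exceed $d(\vect{x}_0,\vect{x}_0')$ (and grid points on the connecting segment need not exist), so multiplying per-step bounds does not give $e^{\epsilon\, d(\vect{x}_0,\vect{x}_0')}$.

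The paper's proof avoids all of this by never matching cells of the two polar partitions against each other. It bounds each discretized mass against the integral of the continuous Laplacian over the \emph{same} region $S=R(\vect{x})$: ${\K}_{\epsilon'}(\vect{x}_0)(\vect{x}) \le (1+\frac{2e^{\epsilon' u}}{q})\int_S D_{\epsilon'}(\vect{x}_0)$ and $(1-\frac{2e^{\epsilon' u}}{q})\int_S D_{\epsilon'}(\vect{x}'_0) \le {\K}_{\epsilon'}(\vect{x}'_0)(\vect{x})$, applies the pointwise triangle-inequality bound to the common integral to get the factor $e^{\epsilon' d(\vect{x}_0,\vect{x}_0')}$, and is left with a \emph{single} factor $\frac{q+2e^{\epsilon' u}}{q-2e^{\epsilon' u}}$. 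This single factor is then absorbed using the hypothesis on $\epsilon,\epsilon'$ together with the key observation that distinct grid points satisfy $d(\vect{x}_0,\vect{x}_0')\ge u$, so that $\frac{q+2e^{\epsilon' u}}{q-2e^{\epsilon' u}} \le e^{(\epsilon-\epsilon')u} \le e^{(\epsilon-\epsilon')d(\vect{x}_0,\vect{x}_0')}$. Your proposal never invokes this $d\ge u$ step, which is precisely what makes a one-shot overhead compatible with a bound that must scale with $d(\vect{x}_0,\vect{x}_0')$; without it (or without a correct derivation of a genuinely per-distance overhead), the argument does not close. Your cell-injection idea could in principle be made to work, but you would still have to handle the fact that cells of the two polar grids do not coincide, which is exactly the bookkeeping the paper's common-region comparison circumvents.
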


The difference between $\epsilon'$ and $\epsilon$
represents the additional noise needed  to compensate the effect of discretization.
Note that  $r_{\max}$, which determines the area in which  $\epsilon$-\geoind{} is guaranteed, must be chosen in such a way that
$q > 2\, e^{\epsilon'u}$. Furthermore there is a trade-off between $\epsilon'$ and $r_{\max}$:
If we want $\epsilon'$ to be close to $\epsilon$ then we need $q$ to be large. Depending on the precision, this may or may not imply a serious limit on $r_{\max}$. Vice versa, if we want   $r_{\max}$  to be large then, depending on the precision,   $\epsilon'$ may need to be significantly  smaller than
$\epsilon$, and furthermore we may have a constraint on  the minimum possible value for $\epsilon$, which means that we may not have the
possibility of achieving an arbitrary level of  \geoind{}.
%In Appendix~\ref{app:relation} we show that the amount of extra noise can vary a lot depending on the precision of the machine;
%for double precision the extra noise is not excessive.

Figure~\ref{fig:epsilon_prime}  shows how the  additional noise varies depending on the precision of the machine. 
In this figure, $r_{\max}$ is set to be $10^2$ km, and we consider the cases of double precision ($16$ significant digits, i.e., $\delta_\theta = 10^{-16}$), single precision ($7$ significant digits), and an intermediate precision of $9$ significant digits.  Note that  
with double precision the additional noise is negligible.
\begin{figure}[tb]%
		\centering
		\includegraphics[width=0.9\columnwidth]{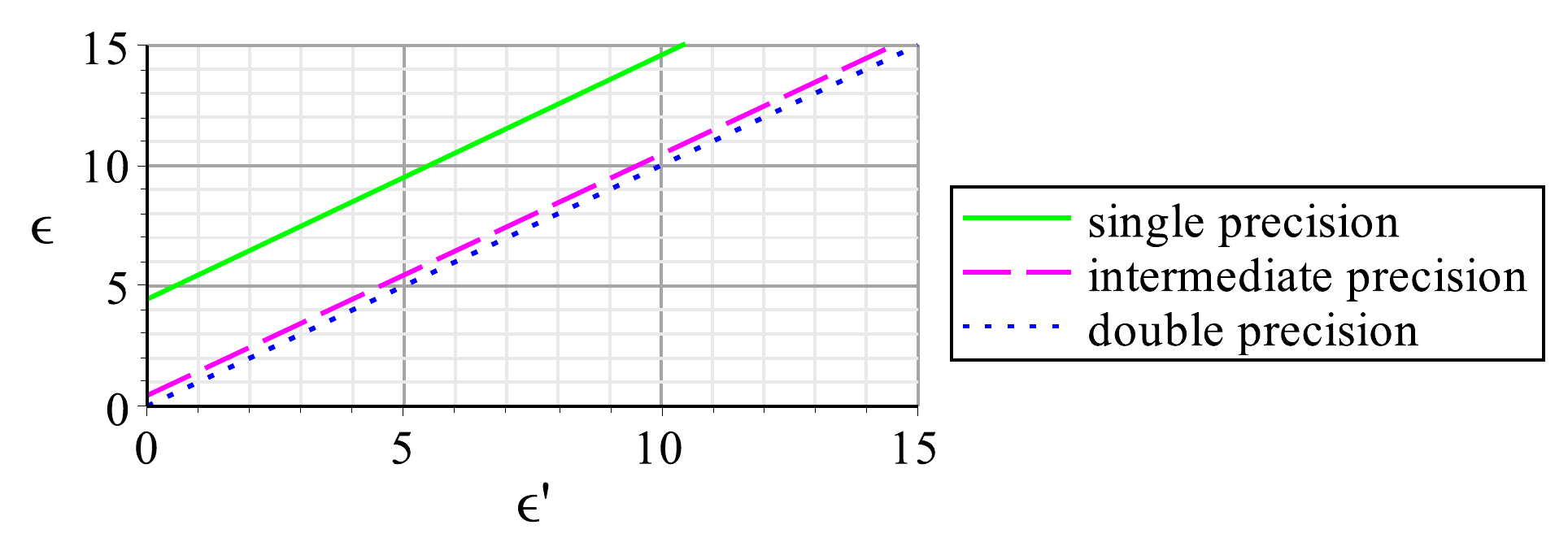}\\[-1ex]%
		\caption{The relation between $\epsilon$ and  $\epsilon'$ for  $r_{\max} = 10^2$ km.  }%
		\label{fig:epsilon_prime}%
		%\vspace{-5pt}
\end{figure}

Note that in Theorem~\ref{theo:geo-dp} the restriction about $ r_{\max}$ is crucial. Namely,  $\epsilon$-\geoind{} does not hold for arbitrary distances for any finite $\epsilon$.
Intuitively, this is because  the step units of $\cal W$ (see Figure~\ref{fig:polar to grid})
become larger with the distance $r$ from $\vect{x}_0$. The step units of  $\grid$, on the other hand, remain the same.
When  the steps in $\cal W$ become larger than those of $\grid$,   some  $\vect{x}$'s have an empty   $\mathit{R}_{\cal W}(\vect{x})$. Therefore   when  $\vect{x}$ is far away from $\vect{x}_0$ its probability may or may not be $0$, depending on  the position of $x_0$ in $\grid$, which means that   \geoind{} cannot be satisfied.

%On the other hand, the restriction on $ r_{\max}$ is not a strong limitation, because the  distribution decreases exponentially with
%$r$, and $r_{\max}$ is usually large, hence the points with  distance $r> r_{\max}$   have   negligible probability.
%Also the assumption that $u\geq \delta_r q$ is not restrictive for the kind of applications we are targeting, where $u$ is several orders of magnitude larger than $\delta_r$.

%\pagebreak
\subsection{Truncation}\label{sec:truncation}
The Laplace mechanisms described in the previous sections have the potential to
generate points everywhere in the plane, which causes several issues: first,
digital applications have finite memory, hence these mechanisms are not
implementable. Second, the discretized mechanism of Section~\ref{sec:discrete}
satisfies \geoind{} only within a certain range, not on the full plane.
Finally, in practical applications we are anyway interested in locations within
a finite region (the earth itself is finite), hence it is desirable that
the reported location  lies within that region. For the above reasons, we
propose a truncated variant of the discretized mechanism which generates points
only within a specified region and fully satisfies \geoind. The full mechanism
(with discretization and truncation) is referred to as ``Planar Laplace
mechanism'' and denoted by $\planar$.

We assume a finite set $\cala\subset\reals^2$ of admissible locations, with diameter
$\diam({\cala})$ (maximum distance between points in $\cala$). This set
is \emph{fixed}, i.e. it does not depend on the actual location $x_0$.
Our truncated mechanism  $\planar:  \cala\to\calp(\cala\cap\grid)$ works
like the discretized Laplacian of the previous section, with the difference that the
point generated in step (\ref{item:polar point}) is remapped to the closest
point in $\cala\cap \grid$. The complete mechanism is shown in
Figure~\ref{fig:algo_point}; note that step 1 assumes that
$\diam(\cala) < \nicefrac{u}{ \delta_\theta}$, otherwise no such $\epsilon'$ exists.

%Let us denote by  ${\K}_{\epsilon'}^T:  \cala\to\calp(\cala\cap\grid)$  the truncated variant of the mechanism ${\K}_{\epsilon'}$ described in previous section.
%The type is: ${\K}_{\epsilon'}^T:  \cala\to\calp(\cala\cap\grid)$ and the drawing is  described by the  following procedure. Given the actual location $\vect{x}_0\in \cala$:
%\begin{enumerate}[(a)]
%\item first,  draw  a point $(r,\theta)$  from the polar Laplacian centered at $\vect{x}_0$, as explained in previous section,
% \end{enumerate}
%\begin{enumerate}[(b$'$)]
% \item then,  remap  $(r,\theta)$  to the closest point  $\vect{x}$ on $\cala\cap\grid $.
% \end{enumerate}

% \begin{figure}[t]%
%		\centering
%		\includegraphics[width=0.5\columnwidth]{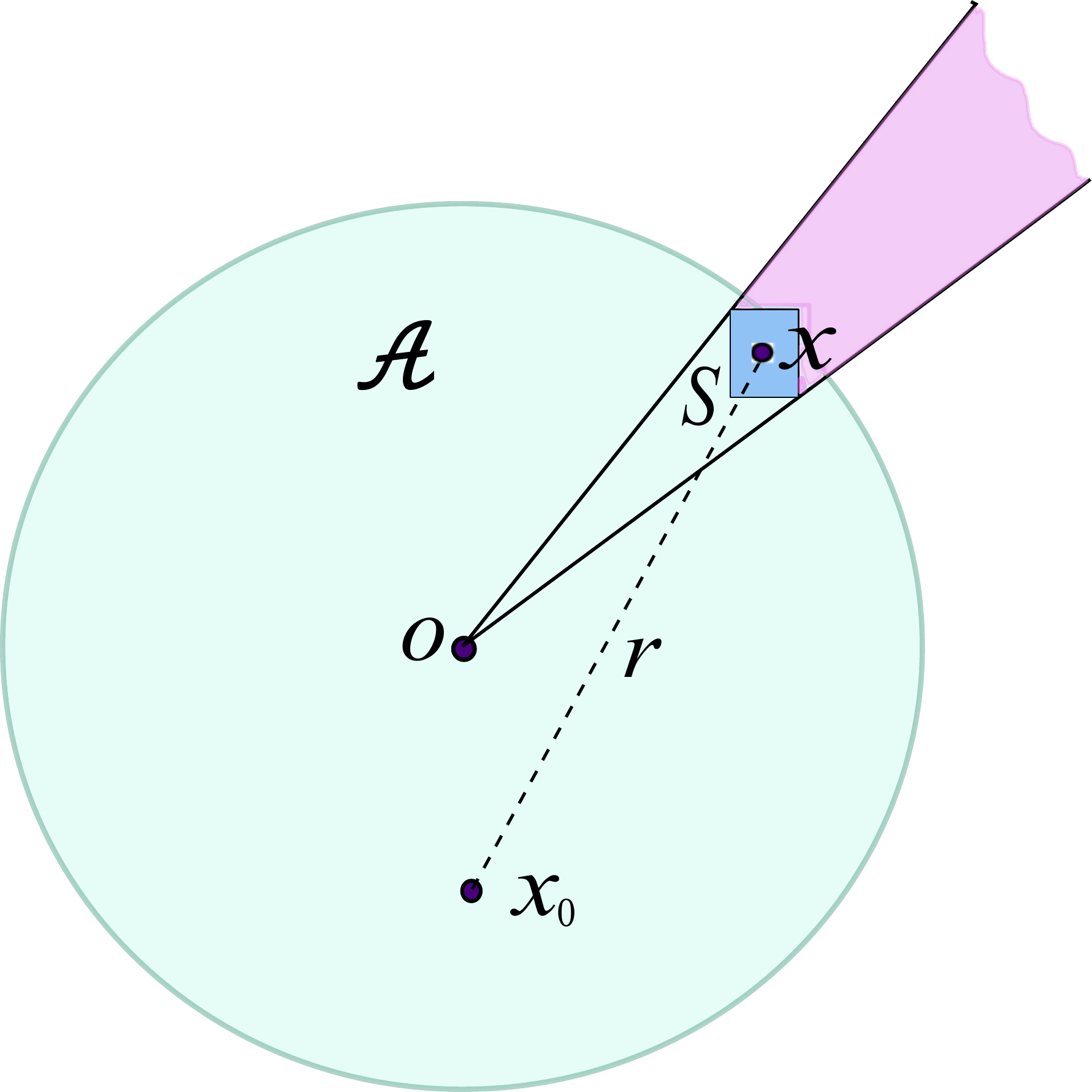}%
%		\caption{Probability of $\vect{x}$ in the truncated discrete laplacian.}%
%		\label{fig:circle}%
%\end{figure}

%Intuitively,  ${\K}^T$ behaves like ${\K}$ except when the region $R(\vect{x})$ is on the border of $\cala$.
%%, see Figure~\ref{fig:circle}.
%In this  case,
%the probability on $\vect{x}$ is given not only by the probability of the points in  $R_{\cal W}(\vect{x})$, but also by the probability of  the part of the cone
%%$C$
%determined by  $\vect{o}$ and $R(\vect{x})$ which lies  outside
%$\cala$.
%
%We now show that the Planar Laplace mechanism satisfies \geoind{} on all $\cala$.
\begin{restatable}{theorem}{thmgeotruncateddp}
	\label{theo:geo-truncated-dp}
	 $\planar$ satisfies  $\epsilon$-\geoind{}.
%	namely
%	\[
%	{\K}_{\epsilon'}^T (\vect{x}_0)(\vect{x})\leq e^{\epsilon\, d(\vect{x}_0,\vect{x}_0')}{{\K}_{\epsilon'}^T (\vect{x}_0')(\vect{x})}
%	\quad
%	\mbox{for every $ \vect{x}_0, \vect{x}'_0\in \cala$}
%	\]
\end{restatable}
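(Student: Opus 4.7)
The proof plan closely mirrors that of Theorem~\ref{theo:geo-dp}, exploiting that under $\planar$ every reported location $\vect{x}$ is guaranteed to lie in the bounded set $\cala \cap \grid$. First, I would argue that step 1 of the $\planar$ algorithm calibrates $\epsilon'$ from the target $\epsilon$ and $r_{\max} := \diam(\cala)$ using the relation in Theorem~\ref{theo:geo-dp}; the stated assumption $\diam(\cala) < u/\delta_\theta$ is precisely what ensures such an $\epsilon'$ exists. A key consequence is that for any admissible actual locations $\vect{x}_0, \vect{x}_0' \in \cala$ and any reported $\vect{x} \in \cala \cap \grid$, we automatically have $d(\vect{x}_0, \vect{x}), d(\vect{x}_0', \vect{x}) \le \diam(\cala) = r_{\max}$, so the range hypothesis of Theorem~\ref{theo:geo-dp} is met for every pair (input, output) that the mechanism can produce.

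Next, I would expand $\planar(\vect{x}_0)(\vect{x})$ as a sum of the polar-cell probabilities from Section~\ref{sec:drawing polar}, ranging over those $(r,\theta) \in {\cal W}$ (taken around center $\vect{x}_0$) whose Cartesian image has $\vect{x}$ as its nearest neighbor in $\cala \cap \grid$, and analogously for $\planar(\vect{x}_0')(\vect{x})$. The pointwise ratio between corresponding cell probabilities under the two centers can then be bounded exactly as in the proof of Theorem~\ref{theo:geo-dp}, by combining the continuous planar Laplacian's pointwise ratio $e^{\epsilon\, d(\vect{x}_0, \vect{x}_0')}$ with the per-cell discretization error that is absorbed into the gap between $\epsilon'$ and $\epsilon$. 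Summing over preimage cells yields the required inequality $\planar(\vect{x}_0)(\vect{x}) \le e^{\epsilon\, d(\vect{x}_0, \vect{x}_0')}\, \planar(\vect{x}_0')(\vect{x})$.

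The main obstacle is controlling the cell-by-cell matchup at large polar radius, where the polar grid spacing exceeds the grid unit $u$ and cells centered at $\vect{x}_0$ no longer align with those centered at $\vect{x}_0'$. In Theorem~\ref{theo:geo-dp} this phenomenon is precisely what forced the restriction to reported points within $r_{\max}$; here, in contrast, the reported point is always in $\cala \cap \grid$, so even though polar cells at large $r$ can formally contribute through the remapping to $\cala \cap \grid$, the reported-point distance bound is automatic and the same per-cell estimate used to prove Theorem~\ref{theo:geo-dp} suffices uniformly. The truncation step therefore recovers full $\epsilon$-\geoind{} across the entire admissible domain $\cala$, eliminating the range restriction that was the price paid by $\K_{\epsilon'}$.
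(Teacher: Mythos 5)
Your reduction is sound for reported points $\vect{x}$ whose cell $R(\vect{x})$ lies in the interior of $\cala$: there $\planar(\vect{x}_0)(\vect{x})$ coincides with $\K_{\epsilon'}(\vect{x}_0)(\vect{x})$, and since $d(\vect{x}_0,\vect{x})\le\diam(\cala)=r_{\max}$ holds automatically, the argument of Theorem~\ref{theo:geo-dp} applies verbatim --- this much matches the paper. The gap is in the border case, which is in fact the whole content of this theorem. When $R(\vect{x})$ touches the border of $\cala$, truncation folds into $\vect{x}$ the probability mass of an \emph{unbounded} region outside $\cala$: in the paper's proof, the part $C$ of the cone determined by the actual location and $R(\vect{x})$ that lies outside $\cala$ (Figure~\ref{fig:circle}). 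The polar cells carrying that mass sit at radii $r$ that are arbitrarily large --- in particular larger than $r_{\max}$ and even than $u/\delta_\theta$ --- so their angular width $r\,\delta_\theta$ exceeds the grid unit, and the estimate of Theorem~\ref{theo:geo-dp}, which bounds boundary rectangles of size $\delta_r\times r\,\delta_\theta$ using $r=d(\vect{x}_0,\vect{x})\le r_{\max}$, does not apply to them. Your justification, that ``the reported-point distance bound is automatic,'' controls the location of the reported point $\vect{x}$, not the location of the polar cells that get remapped to $\vect{x}$, so it does not address this; read literally, the claim that ``the same per-cell estimate suffices uniformly'' is false for the cells beyond $r_{\max}$.

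What is actually needed (and what the paper does) is to replace $S=R(\vect{x})$ by $S\cup C$ in the two sandwich bounds, i.e. to show $\planar(\vect{x}_0)(\vect{x})\le\bigl(1+\frac{2e^{\epsilon' u}}{q}\bigr)\int_{S\cup C}D_{\epsilon'}(\vect{x}_0)(\vect{x}_1)\,ds$ together with the symmetric lower bound for $\vect{x}'_0$, and then conclude exactly as in Theorem~\ref{theo:geo-dp} from the pointwise ratio $e^{\epsilon'\,d(\vect{x}_0,\vect{x}'_0)}$, which holds on all of $\reals^2$. The reason the same relative error $\frac{2e^{\epsilon' u}}{q}$ survives at large radius is not that the discretization cells are small there (they are not), but that the region $C$ widens proportionally with $r$: its angular width is about $u/d(\vect{x}_0,\vect{x})\ge u/r_{\max}=q\,\delta_\theta$, so the uncertain strips of angular width $\delta_\theta$ along its two lateral edges remain a fraction of order $2/q$ of the mass of $C$ at every radius (the density depends only on $r$). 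Some version of this proportionality argument must appear in your proof of the border case; without it the truncated mechanism's \geoind{} does not follow from Theorem~\ref{theo:geo-dp} alone.
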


%In the following we generally assume $\cala= r_{\max}$.

\section{Enhancing LBS{\sc s} with Privacy}
\label{sec:location-based}

%\begin{itemize}
%\item{motivation}
%\item{setting}
%\item{solution}
%\item{pros-cons}

%\item{usefulness}
%\item{further problems}
%\end{itemize}
%

%The growing use of mobile devices equipped with GPS chips in
%combination with the increasing availability of wireless and GSM
%connection has significantly increased the use of LBSs. A resent
%study in the US shows that $46\%$ of the adult population of the
%country owns a smart-phone and, furthermore, that $74\%$ of those
%owners use LBSs \cite{PewInternet:12:Survey}. Examples of LBSs include mapping
%applications (e.g., Google Maps and Bing Maps),
%Points of Interest (POI) retrieval (e.g., AroundMe and Localscope), coupon/discount providers (e.g., GroupOn and Yowza), GPS navigation (e.g., TomTom and Google
%Maps), and Location-Aware social networks (e.g.,
%Foursquare and OkCupid).
%
%Users invoking an LBS typically submit their location
%in order to obtain a certain benefit, eg, information about POI in the area around them.
%Although LBSs have proved to offer important benefits for a
%variety of applications,
%the privacy exposure of users' location information is undeniable
%and, unfortunately, often overlooked. LBS providers can collect
%accurate location information about users and, potentially,
%process it enabling them to infer sensitive information such as
%users' home location, work location, sexual preferences, political
%views, and religious inclinations.

In this section we present a case study of our privacy mechanism in the context
of LBSs. 
%In particular we show how to enhance LBS applications with privacy
%guarantees while still providing a high quality service to their users.
%
%\subsection{Geo-indistinguishability for POI retrieval LBSs}
%
%Let us start by describing how geo-indistinguishability can be
%used to specify a subtle notion of privacy for LBS applications. 
%For that purpose, we first delineate the architecture of LBS applications that we consider in this work.  
We assume a simple
client-server architecture where users communicate via a trusted
mobile application (the client -- typically installed in a
smart-phone) with an unknown/untrusted LBS provider (the server --
typically running on the cloud). Hence, in contrast to other solutions
proposed in the literature, our approach does not rely
on trusted third-party servers.

\begin{figure}[t]
	\begin{center}
	{\small
	\begin{tabular}{lr}
	%\hline\vspace{-0.3cm}\\
	%\textbf{Sanitizing Algorithm for a Location} -- $\planar\\
	\hline \vspace{-0.3cm}\\ 
	\textbf{Input:} $x$ 									& \codeComment{point to sanitize}\\ 
	\hspace{0.8cm} $\epsilon$ 								& \codeComment{privacy parameter}\\
	\hspace{0.8cm} $u$, $v$, $\delta_\theta$, $\delta_r$    & \codeComment{precision parameters} \\
	\hspace{0.8cm} $\cala$									& \codeComment{acceptable locations}\\%
	\textbf{Output:} Sanitized version  $z$ of input $x$ \vspace{0.1cm}\\
	%
	%1. \ \  $q \gets \nicefrac{u}{\delta_\theta\diam(\cala)}$\\
	\multicolumn{2}{l}{1. \ \ $\epsilon'\gets$ max $\epsilon'$ satisfying Thm~\ref{theo:geo-dp} for $r_{\max}=\diam(\cala)$}\\
	2. \ \ draw $\theta$ unif. in $[0,2\pi)$ 							&\codeComment{draw angle}\\
	3. \ \ draw $p$ unif. in $[0,1)$, set $r \gets C_{\epsilon'}^{-1}(p)$  	&\codeComment{draw radius}\\
	4. \ \ $z\gets x + \langle r\cos(\theta), r\sin(\theta)\rangle$ 		&\codeComment{to cartesian, add vectors}\\
	5. \ \ $z \gets \operatorname{closest}(z,{\cal A})$						&\codeComment{truncation}\\
	6. \ \ \textbf{return} $z$\\
	\hline
	\end{tabular}
	}
	\end{center}
	%\smallskip
	\vspace{-6pt}
	\caption{The Planar Laplace mechanism $\planar$}
	\label{fig:algo_point}
\end{figure}

In the following we distinguish between \emph{mildly-location-sensitive}  and \emph{highly-location-sensitive} LBS applications. The former category corresponds to LBS applications offering a service that does not heavily rely on the precision of the
location information provided by the user. Examples of such applications are weather forecast applications 
%(forecast
%information for an approximate location is typically as good as
%forecast information for an exact location), 
%location-aware advertising/offers 
%(e.g., shops offering discounts typically care about
%users being nearby -- rather than their
%exact location), 
and  LBS applications for retrieval of certain kind of  POI (like gas stations). 
Enhancing this kind of LBSs with \geoind{}  is relatively straightforward as  it
only requires to obfuscate the user's location using the Planar Laplace mechanism (Figure~\ref{fig:algo_point}).

Our running example lies within the second category: For the user sitting at Caf\'{e} Les Deux Magots, information about restaurants nearby Champs {\'E}lys{\'e}es is considerably less valuable than information about restaurants around his location. 
Enhancing highly-location-sensitive LBSs  with privacy guarantees is more challenging.
Our approach consists on implementing the following three steps:

%Note that some LBS application may fall in one category or the
%other depending on the user's situation (e.g., if the user is by
%car or retrieving info for a very large area)

%Here the scenario is significantly more complicated than in the previous.

 \begin{figure}[tb]
      \centering
      \includegraphics[width=0.8\columnwidth]{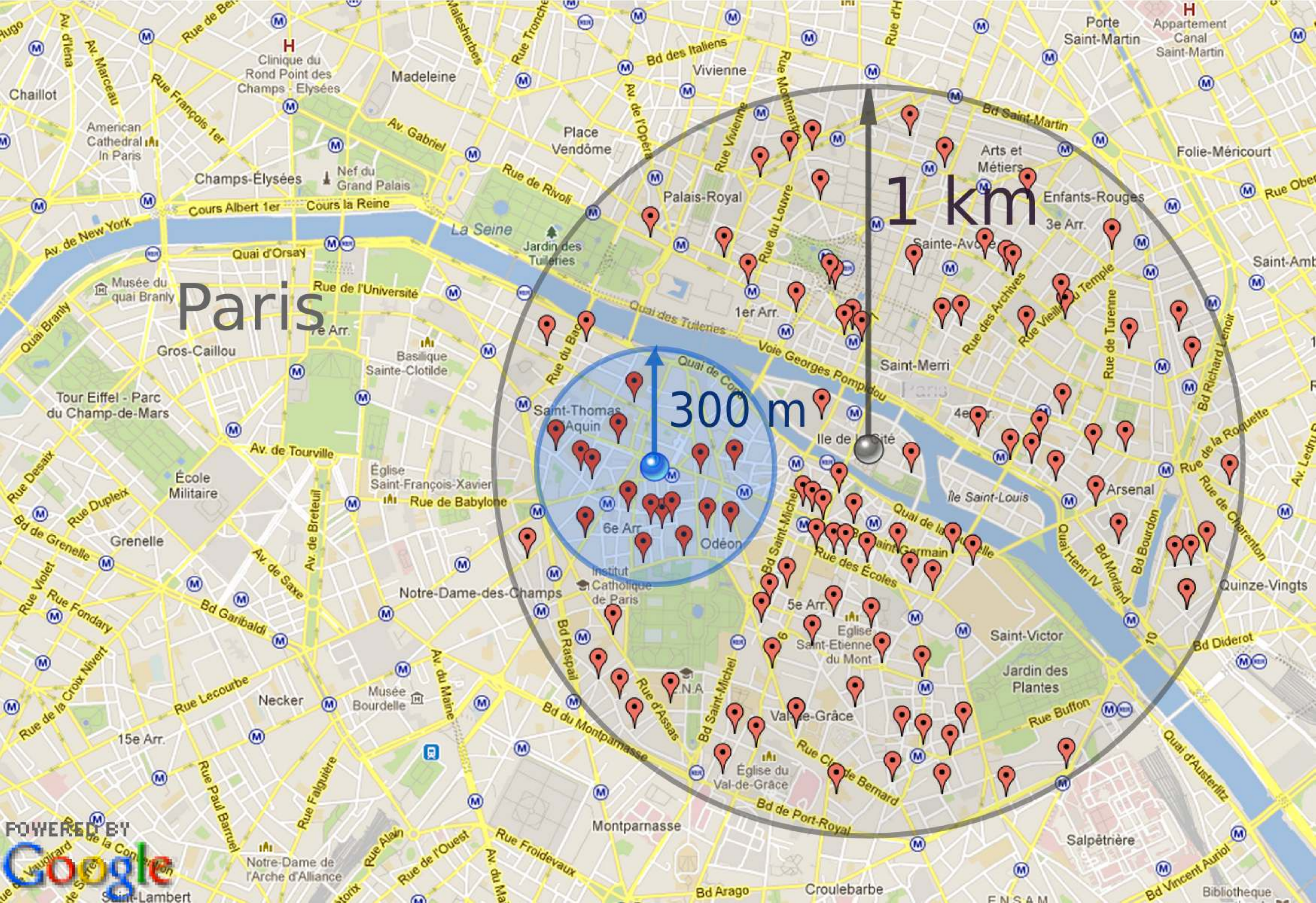}\\[1ex]
   \caption{AOI and AOR of $\mathbf{300}$ m and $\mathbf{1}$ km radius respectively.}
   \label{fig:privateLBS}
 \vspace{-10pt}
 \end{figure}

\begin{enumerate}
\item Implement the Planar Laplace mechanism (Figure~\ref{fig:algo_point})
on the client application in order to report to the LBS server the user's
obfuscated location $z$ rather than his real location $x$.
\item Due to the fact that the information retrieved from the server
is about POI nearby $z$, the area of POI information retrieval
should be increased. In this way, if the user wishes to obtain
information about POI within, say, $300$ m of $x$, 
%(we call this area the \emph{area of interest} of the user)
the client application should request information about POI
within, say, $1$ km of $z$.  Figure
\ref{fig:privateLBS}  illustrates this situation. 
%The user's current
%location $x$ is at caf\'e Les Deux Magots and the reported approximate
%location $z$ submitted by the client application is at about $500$ meters from $x$. 
We will refer to the blue circle  as \emph{area of interest} 
(AOI) and to the grey circle  as \emph{area of
retrieval} (AOR). 
 %(we call this area the \emph{retrieval area}).
%
\item Finally, the client application should filter the retrieved POI 
information (depicted by the pins within the area of retrieval in
Figure \ref{fig:privateLBS}) in order to provide to the user with
the desired information (depicted by pins within the user's area
of interest in Figure \ref{fig:privateLBS}).
\end{enumerate}
%
%The resulting client-server interaction is shown in Fig~\ref{fig:LBSarchitecturewithprivacy}.
%
%
% \begin{figure}[h]
%      \centering
%      \includegraphics[width=6cm]{figures/LBS_architecture_with_privacy.pdf}
%   \caption{LBS architecture}\label{fig:LBSarchitecturewithprivacy}
%   \vspace{-8pt}
% \end{figure}

Ideally, the AOI should always be fully contained in the AOR. 
Unfortunately, due to the probabilistic nature of our perturbation mechanism, this condition cannot be guaranteed (note that the AOR is centered on a randomly generated
location that can be arbitrarily distant from the real location). It is also worth noting that the client application cannot dynamically adjust the radius of the AOR in order to ensure
that it always contains the AOI as this approach would
completely jeopardize the privacy guarantees: on the one hand, the
size of the AOR would leak information about the
user's real location and, on the other hand, the LBS provider
would know with certainty that the user is located within the
retrieval area. Thus, in order to provide \geoind{}, the AOR
has to be defined \emph{independently} from the randomly generated
location.

\looseness -1
Since we cannot guarantee that the AOI is fully contained in the AOR, 
we introduce the notion of \emph{accuracy}, 
which measures the probability of such event.
In the following, we will refer 
to an LBS application in abstract terms, as characterized by a 
location perturbation mechanism $K$ and a fixed AOR radius.
We use $\mathit{rad}_R$ and $\mathit{rad}_I$ to denote the radius of the AOR and the AOI, respectively, 
and  $\calb(x, r)$ to denote the circle with center $x$ and radius $r$.

\subsection{On the accuracy of LBS{\sc s}}

Intuitively,  an LBS application is $(c, rad_I)$-accurate if the probability of the AOI to be fully contained in the AOR is bounded from below by a \emph{confidence factor} $c$.  Formally:

\begin{definition} [LBS application accuracy] An LBS application $(K, \mathit{rad}_R)$ is $(c, \mathit{rad}_I)$-accurate iff for all locations $x$ we have that $\calb(x, \mathit{rad}_I)$ is fully contained in $\calb(K(x), \mathit{rad}_R)$ with probability at least $c$.
\end{definition}

Given a privacy parameter $\epsilon$ and accuracy parameters ($c, \mathit{rad}_I$), our goal is to obtain an LBS application ($K, \mathit{rad}_R$) satisfying both \egeoind{} and ($c, \mathit{rad}_I$)-accuracy. As a perturbation mechanism, we use the Planar Laplace $\planar$ (Figure~\ref{fig:algo_point}), which satisfies \egeoind{}. As for $\mathit{rad}_R$, we aim at finding the minimum value validating the accuracy condition. Finding such minimum value is crucial to minimize the bandwidth overhead inherent to our proposal.
In the following we will investigate how to achieve this goal by \emph{statically} defining $\mathit{rad}_R$ as a function of the mechanism and the accuracy parameters $c$ and $\mathit{rad}_I$.

For our purpose, it will be convenient to use the  notion of  ($\alpha$, $\delta$)-usefulness, which was introduced in \cite{Blum:08:STOC}. 
A location perturbation mechanism $K$ is ($\alpha$, $\delta$)-\emph{useful} if for every
location $x$ the reported location 
$z=K(x)$ satisfies $d(x,z) \leq \alpha$ with probability at least $\delta$.
In the case of the Planar Laplace, it is easy to see that, by definition,  the $\alpha$ and $\delta$ values which 
express its usefulness are related by 
$C_\epsilon$ \footnote{%
	For simplicity we assume that $\epsilon'=\epsilon$ (see Figure~\ref{fig:algo_point}), since
	their difference is negligible under double precision.
},
the cdf of the Gamma distribution: 
%In other words, a ($\alpha$,$\delta$)-useful mechanism generates approximate locations $z$ within 
%distance $\alpha$ of the exact location $x$ with probability at least $1-\delta$.

\begin{observation}\label{obs:usefulness}
	For any $\alpha > 0$, $\planar$ is ($\alpha$, $\delta$)-useful if $\alpha\leq C^{-1}_\epsilon(\delta)$.
\end{observation}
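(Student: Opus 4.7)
The plan is to reduce usefulness to a direct statement about the cdf of the noise radius sampled inside $\planar$. By the construction of Section~\ref{sec:amechanism} and step~3 of Figure~\ref{fig:algo_point}, the polar radius $R$ drawn by the mechanism follows the Gamma distribution with shape $2$ and scale $1/\epsilon$, whose cdf is precisely $C_\epsilon(r) = 1 - (1+\epsilon r)\,e^{-\epsilon r}$. Hence $\Pr[R \leq \alpha] = C_\epsilon(\alpha)$ for every $\alpha > 0$. Invoking the footnote's assumption $\epsilon' = \epsilon$ (accurate under double precision), the pre-truncation reported point $x + \langle R\cos\theta, R\sin\theta\rangle$ lies at distance exactly $R$ from $x$, so the whole usefulness question reduces to controlling $C_\epsilon(\alpha)$.

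The second step is to verify that the two post-processing operations of $\planar$---discretization to $\grid$ and truncation to $\cala\cap\grid$---do not damage this reduction. Because $x\in\cala$ and the truncation step is a closest-point projection onto $\cala$, nonexpansiveness of projection onto a convex set implies that the projected point is at least as close to $x$ as the un-truncated sample, i.e.\ at distance at most $R$. Discretization then perturbs the projected point by at most half a grid diagonal $\tfrac12\sqrt{u^2+v^2}$, a negligible additive term under the continuous approximation and formally handled by the same grid-unit bookkeeping used in the proof of Theorem~\ref{theo:geo-dp}. Putting these together gives $\Pr[d(x,\planar(x)) \leq \alpha] \geq C_\epsilon(\alpha)$.

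From here the Observation falls out by a one-line monotonicity argument on the cdf: the usefulness requirement $\Pr[d(x,\planar(x))\leq\alpha]\geq\delta$ reduces (by the previous step) to $C_\epsilon(\alpha)\geq\delta$, and strict monotonicity and invertibility of $C_\epsilon$ convert this cdf inequality into the stated quantile comparison between $\alpha$ and $C_\epsilon^{-1}(\delta)$. The main, and genuinely minor, obstacle is the bookkeeping around the truncation and discretization steps; the rest is a routine consequence of the closed-form cdf derived in Section~\ref{sec:amechanism}.
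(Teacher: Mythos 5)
Your core reduction---the sampled radius $R$ has cdf $C_\epsilon$, so the usefulness of the mechanism is governed by $C_\epsilon(\alpha)$---is exactly the paper's (implicit, one-line) argument: the paper presents the observation as immediate ``by definition,'' working with the pre-discretization, pre-truncation sample under the $\epsilon'=\epsilon$ convention of the footnote, and offers no further proof. The extra rigor you add for the post-processing steps, however, does not hold up. You justify the truncation step by ``nonexpansiveness of projection onto a convex set,'' but $\cala$ is by assumption a \emph{finite} subset of $\reals^2$ (and the projection target is $\cala\cap\grid$), so it is not convex and its metric projection is not nonexpansive. For a finite set containing $x$, the only general guarantee is $d\bigl(x,\operatorname{closest}(z,\cala)\bigr)\le 2\,d(x,z)$: take $\cala=\{x,a\}$ with $z$ just beyond the midpoint of the segment from $x$ to $a$ and the factor $2$ is nearly attained. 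So your claimed bound $d(x,\planar(x))\le R$ does not follow; making it rigorous would require a structural assumption such as $\cala\cap\grid$ being the grid points of a convex region (project onto the convex hull first, then absorb the grid rounding), which the paper never states.

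The second, more substantive problem is the direction of your final step. From $\Pr[d(x,\planar(x))\le\alpha]\ge C_\epsilon(\alpha)$, the usefulness requirement $C_\epsilon(\alpha)\ge\delta$ is equivalent, by monotonicity of $C_\epsilon$, to $\alpha\ge C_\epsilon^{-1}(\delta)$ --- the \emph{reverse} of the printed inequality $\alpha\le C_\epsilon^{-1}(\delta)$, which you claim to recover. The printed inequality cannot be literally correct: for $\alpha$ near $0$ one still has $\alpha\le C_\epsilon^{-1}(0.95)$, yet the mechanism is certainly not $(\alpha,0.95)$-useful. The observation is only ever invoked at the boundary value $\alpha=C_\epsilon^{-1}(c)$ in Proposition~\ref{prop:aor}, where the two directions coincide, so the slip is harmless downstream; but your proof should state explicitly that monotonicity yields $\alpha\ge C_\epsilon^{-1}(\delta)$ (equivalently $\delta\le C_\epsilon(\alpha)$) and that the statement must be read with that orientation, rather than asserting that the cdf inequality ``converts into the stated quantile comparison.''
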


Figure~\ref{fig:usefulness} illustrates the
($\alpha$, $\delta$)-usefulness of $\planar$
for $r\!=\!0.2$ (as in our running example) and various values of $\ell$ (recall that $\ell=\epsilon \,r$). 
It follows from the figure that a mechanism providing the privacy guarantees specified in our running example 
(\egeoind{}, with $\ell\!=\!\ln(4)$ and $r\!=\!0.2$) generates an
approximate location $z$ falling within $1$ km of the user's
location $x$ with probability $0.992$, falling within $690$ meters with probability $0.95$, falling within $560$
meters with probability $0.9$, and falling within $390$ meters
with probability $0.75$.

 \begin{figure}[tb]
      \centering
      \includegraphics[width=0.9\columnwidth]{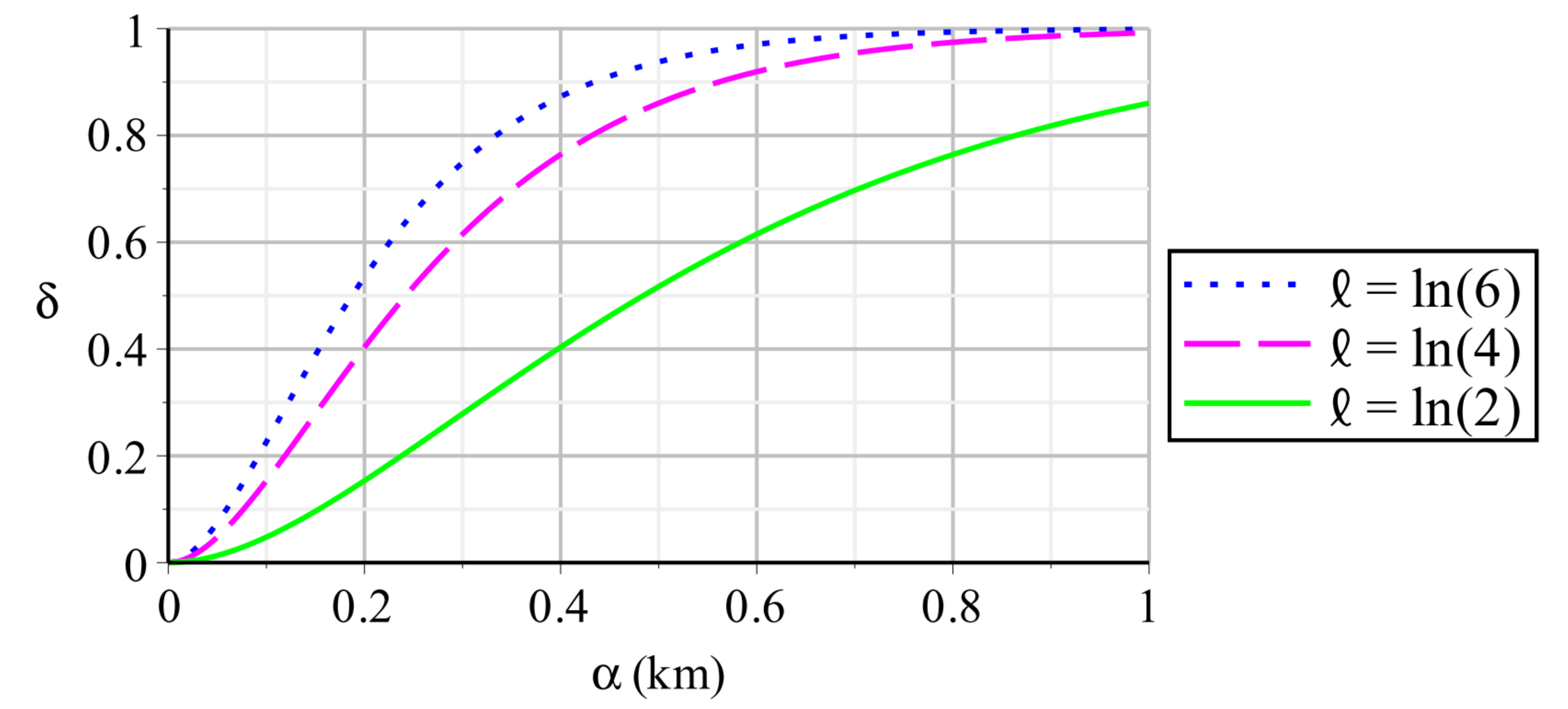}
   \caption{$(\alpha,\delta)$-usefulness for $r=0.2$ and various values of $\ell$.}
   \label{fig:usefulness}
   %\vspace{-10pt}
 \end{figure}

%it is desired to provide privacy while
%releasing information as accurate as possible (e.g., in standard
%differential privacy, one wishes to released the result of a query
%in a way as accurate as possible, while, of course, preserving the
%privacy of the users whose data is involved on the query). The
%same goal holds for the mechanism defined in Section
%\ref{sec:mechanism}, i.e., we wish to reveal a location as closed
%as possible to the user, while still providing
%geo-indistinguishability.
%
%We study the accuracy of our mechanism by means of a definition
%from the literature that can be used to measure the accuracy of a
%data perturbation mechanism (adapted to our location setting):
%$(\alpha,\delta)$-usefulness \cite{Blum:08:STOC}.
%
%
%\emph{A location obfuscation mechanism {\cal{K}} is $(\alpha,
%\delta)$-useful if for every location $x$, with probability at
%least $1-\delta$, the reported location $z = {\cal{K}}(x)$
%satisfies}
%\[
%d(x,z) \leq \alpha.
%\]
%
%Figure \ref{fig:usefulness} illustrates how the mechanism defined
%in Section \ref{sec:mechanism} behaves with respect to this notion
%of ``mechanism accuracy''.

We now have all the necessary ingredients to determine the desired $\mathit{rad}_R$:
By definition of usefulness, if $\planar$ is ($\alpha$, $\delta$)-useful then the LBS application $(\planar,\mathit{rad}_R)$ is  $(\delta, \mathit{rad}_I)$-accurate 
if $\alpha \leq \mathit{rad}_R - \mathit{rad}_I$. The converse also holds if $\delta$ is maximal. 
By Observation~\ref{obs:usefulness}, we then have:
\begin{proposition} 
	The LBS application $(\planar, \mathit{rad}_R)$ is
	$(c$, $\mathit{rad}_I)$-accurate if $\mathit{rad}_R\geq \mathit{rad}_I + C_\epsilon^{-1}(c)$.
	\label{prop:aor}
\end{proposition}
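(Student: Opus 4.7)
The plan is to reduce accuracy of the LBS application to the usefulness of the underlying Planar Laplace mechanism, via an elementary geometric containment condition, and then to apply Observation~\ref{obs:usefulness}.

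First I would establish the key geometric fact: for any two points $x,z \in \reals^2$, the inclusion $\calb(x,\mathit{rad}_I) \subseteq \calb(z,\mathit{rad}_R)$ holds if and only if $d(x,z) \le \mathit{rad}_R - \mathit{rad}_I$. The ``if'' direction is the triangle inequality: any $y \in \calb(x,\mathit{rad}_I)$ satisfies $d(y,z) \le d(y,x) + d(x,z) \le \mathit{rad}_I + (\mathit{rad}_R - \mathit{rad}_I) = \mathit{rad}_R$. The ``only if'' direction follows by taking $y$ to be the point on the segment from $z$ through $x$ at distance $\mathit{rad}_I$ from $x$ on the far side of $z$. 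This observation is the bridge between the accuracy definition, which talks about ball containment, and usefulness, which talks about pointwise distance.

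Next I would set $\alpha := \mathit{rad}_R - \mathit{rad}_I$ and unfold the definitions. By the previous step, $(\planar,\mathit{rad}_R)$ is $(c,\mathit{rad}_I)$-accurate exactly when, for every location $x$,
\[
  \Pr\bigl[d(x,\planar(x)) \le \alpha\bigr] \ge c,
\]
which is precisely the statement that $\planar$ is $(\alpha,c)$-useful. By Observation~\ref{obs:usefulness}, this usefulness holds whenever $\alpha$ and $c$ are related through the cdf $C_\epsilon$ of the Gamma distribution governing the radial component of the Planar Laplace, namely when $\alpha \ge C_\epsilon^{-1}(c)$ (using monotonicity of $C_\epsilon$). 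Substituting back $\alpha = \mathit{rad}_R - \mathit{rad}_I$ yields the hypothesis $\mathit{rad}_R \ge \mathit{rad}_I + C_\epsilon^{-1}(c)$ of the proposition, closing the argument.

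There is no real obstacle: once the geometric equivalence between containment and distance is in place, the proof is a direct chaining of Observation~\ref{obs:usefulness} with the definition of accuracy. The only technical point worth flagging in a careful write-up is that $\planar$ includes both discretization and truncation, so strictly speaking $\Pr[d(x,\planar(x)) \le \alpha]$ is not exactly $C_\epsilon(\alpha)$; however, Observation~\ref{obs:usefulness} is already stated for the full mechanism $\planar$, so this is absorbed into its usage and does not affect the argument.
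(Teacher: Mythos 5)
Your proof is correct and follows essentially the same route as the paper: the paper also derives the proposition by reducing $(c,\mathit{rad}_I)$-accuracy to $(\alpha,\delta)$-usefulness with $\alpha \le \mathit{rad}_R - \mathit{rad}_I$ and then invoking Observation~\ref{obs:usefulness}. The only difference is that you spell out the triangle-inequality equivalence between $\calb(x,\mathit{rad}_I)\subseteq\calb(z,\mathit{rad}_R)$ and $d(x,z)\le \mathit{rad}_R-\mathit{rad}_I$, which the paper leaves implicit in the phrase ``by definition of usefulness''.
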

Therefore, it is sufficient to set  $\mathit{rad}_R=\mathit{rad}_I + C_\epsilon^{-1}(c)$. 

Coming back to our running example ($\epsilon= \ln(4)/ 0.2$ and $\mathit{rad}_I=0.3$), taking a confidence factor $c$ of, say, $0.95$, 
leads to a $(0.69, 0.95)$-useful mechanism (because $C_\epsilon^{-1}(c)=0.69$). Thus, $(\planar, 0.99)$ is both $\nicefrac{\ln(4)}{0.2}$-geo-indistinguishable and ($0.95, 0.3$)-accurate.

\subsection{Bandwidth overhead analysis}

As expressed by Proposition~\ref{prop:aor}, in order to implement an LBS application enhanced with \geoind{} and accuracy it suffices to use the Planar Laplace mechanism and retrieve POIs for an enlarged radius $\mathit{rad}_R$. For each query made from a location $x$, the application needs to (i) obtain $z=\planar(x)$, (ii) retrieve POIs for AOR $=\calb(z, \mathit{rad}_R)$, and (iii) filter the results from AOR to AOI (as explained in step 3 above). Such implementation is straightforward and computationally efficient for modern smart-phone devices. In addition, it provides great flexibility to application developer and/or users to specify their desired/allowed level of privacy and accuracy. This, however, comes at a cost: bandwidth overhead.

In the following we turn our attention to investigating the bandwidth overhead yielded by our approach. We will do so in two steps: first we investigate how the AOR size increases for different privacy and LBS-specific parameters, and then we investigate how such increase translates into bandwidth overhead.

Figure~\ref{fig:usefulness2} depicts the overhead of the AOR versus the AOI (represented as their ratio) when varying the level of confidence ($c$) and privacy ($\ell$) and for fixed values $\mathit{rad}_I=0.3$ and $r=0.2$. The overhead increases slowly for levels of confidence up to $0.95$ (regardless of the level of privacy) and increases sharply thereafter, yielding to a worst case scenario of a about $50$ times increase for the combination of highest privacy ($\ell=\log(2)$) and highest confidence ($c=0.99$). 

 \begin{figure}[tb]
      \centering
      \includegraphics[width=0.9\columnwidth]{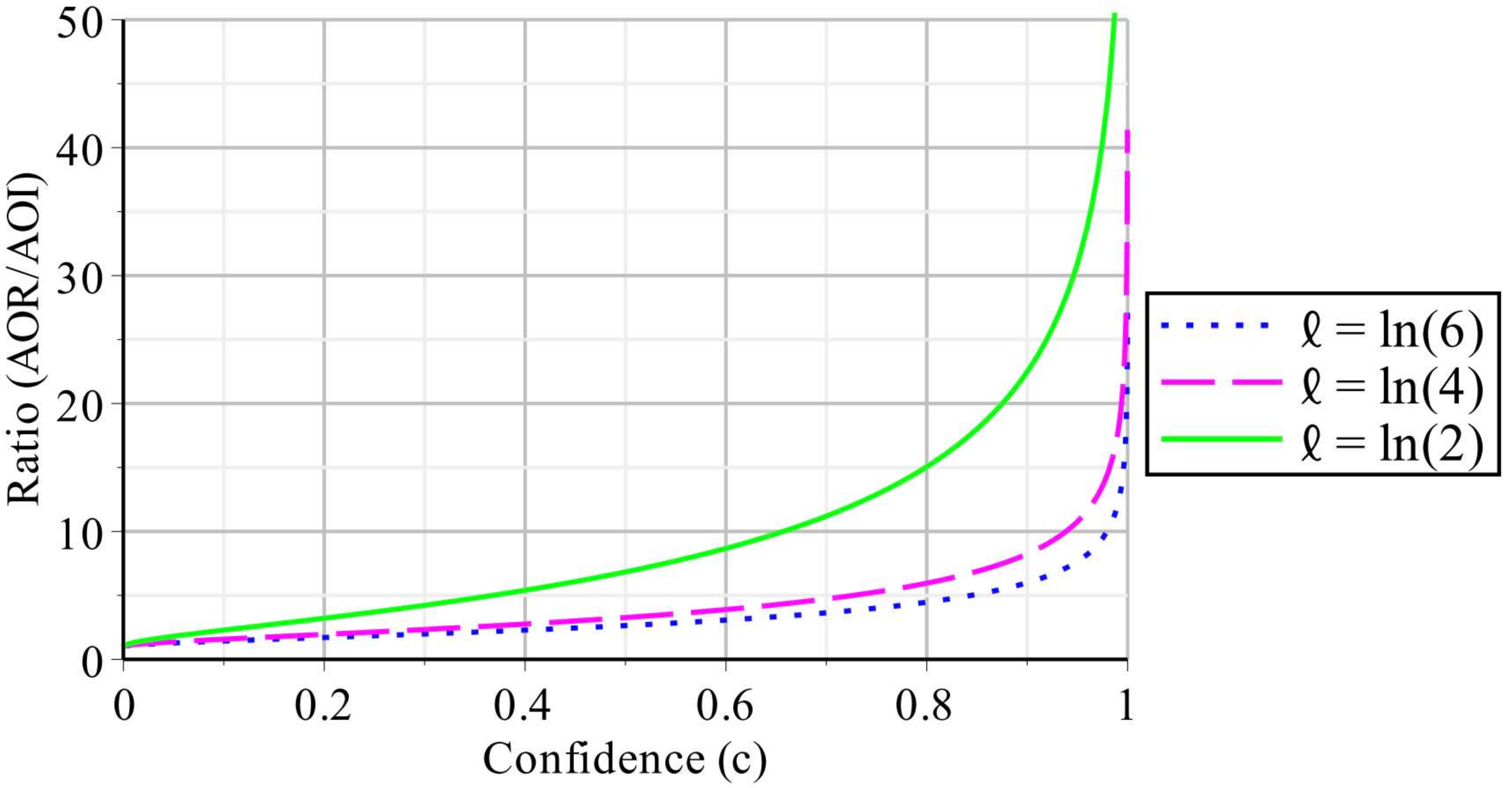}
   \caption{AOR vs AOI ratio for various levels of privacy and accuracy (using fixed $r=0.2$ and $rad_I=0.3$).}
   \label{fig:usefulness2}
 \end{figure}

In order to understand how the AOR increase translates into bandwidth overhead, we now investigate the density (in km$^2$) and size (in KB) of POIs by means of the Google Places API \cite{GooglePlacesApi}. This API allows to retrieve POIs' information for a specific \texttt{location}, \texttt{radius} around the location, and POI's \texttt{type} (among many other optional parameters). For instance, the HTTPS request:

	\begin{small}
	\vspace{-10pt}
	\begin{verbatim}
		https://maps.googleapis.com/maps/api/place/nearby
			search/json?location=48.85412,2.33316 &
			radius=300 & types=restaurant & key=myKey
	\end{verbatim}
	\end{small}
returns information (in JSON format) including location, address, name, rating, and opening times for all restaurants up to $300$ meters from the location ($48.85412$,\ $2.33316$) -- which corresponds to the coordinates of Caf\'{e} Les Deux Magots in Paris.

We have used the APIs \texttt{nearbysearch} and \texttt{radarsearch} to calculate the average number of POIs per km$^2$ and the average size of POIs' information (in KB) respectively. We have considered two %three 
queries: restaurants in Paris, and restaurants in Buenos Aires.
%and cinemas in Paris. 
Our results show that there is an average of $137$ restaurants per km$^2$ in Paris and $22$ in Buenos Aires, while %about $2$ cinemas per km$^2$ in Paris. Furthermore, 
the average size per POI is $0.84$ KB. 

Combining this information with the AOR overhead depicted in Figure~\ref{fig:usefulness2}, we can derive the average bandwidth overhead for each query and various combinations of privacy and accuracy levels. For example, using the parameter combination of our running example (privacy level $\epsilon = \log(4)/0.2$, and accuracy level $c=0.95, \mathit{rad}_I=0.3$) we have a $10.7$ ratio for an average of $38$ ($\backsimeq (137 / 1000^2) \times (300^2 \times \pi)$) restaurants in the AOI. Thus the estimated bandwidth overhead is $39 \times (10.7-1) \times 0.84$KB $\backsimeq 318$~KB.

Table \ref{tab:bandwidth} shows the bandwidth overhead for restaurants in Paris and Buenos Aires for the various combinations of privacy and accuracy levels. Looking at the worst case scenario, from a bandwidth overhead perspective, our combination of highest levels of privacy and accuracy (taking $\ell=\log(2)$ and $c=0.99$) with the query for restaurants in Paris (which yields to a large number of POIs -- significantly larger than average) results in a significant bandwidth overhead (up to $1.7$MB). Such overhead reduces sharply when decreasing the level of privacy (e.g., from $1.7$ MB to $557$ KB when using $\ell=\log(4)$ instead of $\ell=\log(2)$). For more standard queries yielding a lower number of POIs, in contrast, even the combination of highest privacy and accuracy levels results in a relatively insignificant bandwidth overhead.

\begin{table}
\begin{center}
    \begin{tabular}{| c  c | c | c | c |}
    \hline
    \multicolumn{2}{|c|}{{\multirow{2}{*} {Restaurants}}} & \multicolumn{3}{ c|} {\textbf{Accuracy}}\\
    \multicolumn{2}{|c|}{{\multirow{2}{*} {in Paris}}}& \multicolumn{3}{ c|} {$\mathit{rad}_I=0.3$}\\ 
    & & \multicolumn{1}{c}{$c=0.9$} & \multicolumn{1}{c}{$c=0.95$} & \multicolumn{1}{c|}{$c=0.99$} \\  \hline
     \multicolumn{1}{|l}{\multirow{2}{*}{\textbf{Privacy}} }  & $\ell\!=\!\log(6)$ & $162$ KB & $216$ KB & $359$ KB \\
    \multicolumn{1}{|l}{\multirow{2}{*}{$r\!=\!0.2$} } & $\ell\!=\!\log(4)$ & $235$ KB &  $318$ KB & $539$ KB\\
    & $\ell\!=\!\log(2)$ & $698$ KB & $974$ KB & $1.7$ MB \\
    \hline
    \end{tabular}

\medskip

    \begin{tabular}{| c  c | c | c | c |}
    \hline
    \multicolumn{2}{|c|}{{\multirow{2}{*} {Restaurants}}} & \multicolumn{3}{ c|} {\textbf{Accuracy}}\\
    \multicolumn{2}{|c|}{{\multirow{2}{*} {in Buenos Aires}}}& \multicolumn{3}{ c|} {$\mathit{rad}_I=0.3$}\\ 
    & & \multicolumn{1}{c}{$c=0.9$} & \multicolumn{1}{c}{$c=0.95$} & \multicolumn{1}{c|}{$c=0.99$} \\  \hline
     \multicolumn{1}{|l}{\multirow{2}{*}{\textbf{Privacy}} }  & $\ell\!=\!\log(6)$ & $26$ KB & $34$ KB & $54$ KB \\
    \multicolumn{1}{|l}{\multirow{2}{*}{$r\!=\!0.2$} } & $\ell\!=\!\log(4)$ & $38$ KB &  $51$ KB & $86$ KB\\
    & $\ell\!=\!\log(2)$ & $112$ KB & $156$ KB & $279$ KB \\
    \hline
    \end{tabular}
 \caption{Bandwidth overhead for restaurants in Paris and in Buenos Aires for various levels of privacy and accuracy.}
\label{tab:bandwidth}
\end{center}
\vspace{-10pt}
\end{table}

Concluding our bandwidth overhead analysis, we believe that the overhead necessary to enhance an LBS application with \geoind{} guarantees is not prohibitive even for scenarios resulting in high bandwidth overhead (i.e., when combining very high privacy and accuracy levels with queries yielding a large number of POIs). Note that $1.7$MB is comparable to $35$ seconds of Youtube streaming or $80$ seconds of standard Facebook usage \cite{VodafoneMobileDataUsageStats}. Nevertheless, for cases in which minimizing bandwidth consumption is paramount, we believe that trading bandwidth consumption for privacy (e.g., using $\ell=\log(4)$ or even $\ell=\log(6)$) is an acceptable solution.

\subsection{Further challenges: using an LBS multiple times}

%After describing how to provide geo-indistinguishability guarantees to users querying an LBS application a \emph{single} time, we now discuss how to extend our solution to the case in which users wish to perform \emph{multiple} queries. %In practice however, it is common for users to perform multiple queries (typically from multiple different locations) %or, in a similar spirit, some LBSs applications require the user to share/expose their locations in a \emph{continuous} fashion, i.e.., submit information several times in a short period of time. 
%
%In this scenario, the mechanism should protect multiple locations rather than one. But, what does it mean to enjoy privacy for multiple locations? 
As discussed in Section \ref{sec:multiple-locations}, \geoind{} can be naturally extended to multiple locations. In short, the idea of being \emph{$\ell$-private within $r$} remains the same but for all locations simultaneously. In this way the locations, say, $x_1$, $x_2$ of a user employing the LBS twice remain indistinguishable from all pair of locations at (point-wise) distance at most $r$ (i.e., from all pairs $x_1'$, $x_2'$ such that $d(x_1,x'_1)\leq r$ and $d(x_2,x'_2)\leq r$). 

A simple way of obtaining \geoind{} guarantees when performing multiple queries is to employ our technique for protecting single locations to \emph{independently} generate approximate locations for each of the user's locations. In this way, a user performing $n$ queries via a mechanism providing \egeoind{} enjoys \egeoind[n \epsilon] (see Section~\ref{sec:multiple-locations}).

This solution might be satisfactory when the number of queries to perform
remains fairly low, but in other cases impractical, due to the privacy degradation.
It is worth noting that the canonical technique for achieving standard
differential privacy (based on adding noise according to the Laplace
distribution) suffers of the same privacy degradation problem ($\epsilon$
increases linearly on the number of queries). Several articles in the
literature focus on this problem (see \cite{Roth:10:STOC} for instance). We
believe that the principles and techniques used to deal with this problem for
standard differential privacy could be adapted to our scenario (either directly
or motivationally). 
\section{Comparison with other methods}
\label{sec:comparison}

In this section we compare the  performance of our mechanism with that of  other ones proposed in the literature. 
Of course it is not interesting to make a comparison in terms of \geoind, since  other mechanisms usually do not satisfy this property. 
We consider, instead, the  (rather natural) Bayesian notion  of privacy proposed in  \cite{Shokri:12:CCS},  and the trade-off with respect to the \emph{quality of service} measured according to \cite{Shokri:12:CCS}, and also with respect to  the notion of accuracy illustrated in the previous section. 
%A proper way to show the suitability of our mechanism is to perform a meaningful comparison with other methods existing in the literature. One of the most relevant works in this context is the one of Shokri et al. (\cite{Shokri:12:CCS}), where the authors develop a mechanism that maximizes the expected estimation error of an adversary implementing an optimal inference attack, while still achieving the minimum quality of service required by the user. Another common approach is to divide the map into regions of adequate size, and report the center of the region in which the user is located.
%In this section we present the results of our experiments,
%where we compare the performance of our approach against these two mechanisms.
 
 The mechanisms  that we compare with ours are: 
\begin{enumerate}
\item
The obfuscation mechanism presented in  \cite{Shokri:12:CCS}. This mechanism works on discrete locations,    
called  \emph{regions}, and, 
like ours, it reports a location (region) selected randomly according to a probability distribution that  
depends on the user's location. The distributions are generated automatically by a tool which is
designed to provide optimal privacy for a given quality of service and a given adversary 
(i.e., a given prior, representing the side knowledge of the adversary). 
%%\subsection{The optimal mechanism}
%
%In \cite{Shokri:12:CCS}, an obfuscation mechanism over regions is presented. This mechanism has the property of being \emph{optimal} for a given user and required utility, meaning that it maximizes the location privacy for that user while still providing a minimum required quality of service.
%
%The mechanism is constructed by solving a linear program that takes into account the \emph{user profile} (that is, the probability distribution of the user over the regions), the metrics used to calculate the privacy and the quality degradation, and the maximum tolerable utility loss. It is worth noting that the mechanism obtained is then user-specific, and depends on knowing a good estimation of the user profile. 
It is important to note that in presence of a different adversary the optimality is not guaranteed. 
This dependency on the prior  is a key difference with respect to our approach, which abstracts from the adversary's side information. 
\item
A simple cloaking mechanism.
In this approach, the area of interest is assumed to be partitioned in \emph{zones}, 
whose size depends on the level of privacy we want yo achieve. 
The mechanism then reports the  {zone} in which the exact location is situated. 
This method satisfies $k$-anonymity where $k$ is the number of locations within each zone.
\end{enumerate}

In both cases we  need to divide the area of interest  into a finite number of regions, representing the possible locations. 
We  consider for simplicity a grid, and, more precisely, a $9 \times 9$ grid consisting of $81$ square regions of $100$ m of side length.
In addition, for the cloaking method,  we overlay a grid of $3 \times 3 = 9$ zones. 
Figure~\ref{fig:regions-adv1} illustrates the setting: the regions are the small squares with black borders. 
In the cloaking  method, the zones are the larger squares with blue borders. 
For instance,  any point situated in one of the regions $1, 2, 3, 10, 11, 12, 19, 20$ or $21$, would be reported as zone $1$. 
We assume that each zone is represented by the central region.  Hence, in the above example, the reported region would be~$11$.

\paragraph{Privacy and Quality of Service}
As already stated, we will use the metrics for  privacy and for the quality of service  proposed in \cite{Shokri:12:CCS}.

The first metric is called \emph{Location Privacy} ($\mathit{LP}$) in  \cite{Shokri:12:CCS}. The idea is to measure it in terms of 
the  \emph{expected estimation error} of a ``rational'' Bayesian adversary.  The adversary is assumed to have some side knowledge, 
expressed in terms of a probability distribution on the regions, which represents the \emph{a priori} probability that the user's location is situated in that region. 
The adversary   tries to make the best use of such prior information, and   combines it with the information provided by the mechanism (the reported region), so to guess a location (remapped region) which is as close as possible to the one where the user really is. More precisely, the goal is to infer a region that, in average, minimizes the distance from the user's exact location.

Formally, $\mathit{LP}$  is defined as:
 \[
\mathit{LP} = \sum_{r, r', \hat{r} \in R} \pi (r) \K(r)(r') h(\hat{r}|r') d(\hat{r}, r)
\]
where $R$ is the set of all regions, $\pi$ is the prior distribution over the regions, $\K(r)(r')$ gives the probability that the real region $r$ is reported by the mechanism as $r'$, $h(\hat{r}|r')$ 
represents the probability that the reported region $r'$ is remapped into $\hat{r}$, 
in the optimal remapping $h$, and $d$ is the distance between regions.
``Optimal'' here means that $h$ is chosen so to minimize the above expression, which, we recall, represents the expected distance between the user's exact location and the location guessed by the adversary.

As for the quality of service, the idea  in \cite{Shokri:12:CCS} is to quantify its opposite, the \emph{Service Quality Loss} ({\it SQL}), in terms of 
the expected distance  between the reported location and the user's exact location. In other words, 
the service provider is supposed to offer a quality proportional to the accuracy of the location that he receives. 
Unlike the adversary, he is not expected to  have any prior knowledge and he is not expected to guess a  location different from the reported one. 
Formally:
 \[
\mathit{SQL} = \sum_{r, r' \in R} \pi (r) \K(r)(r')  d(r', r)
\]
where $\pi$, $\K(r)(r')$ and $d$ are as above.

It is worth noting that for the optimal mechanism in \cite{Shokri:12:CCS} \emph{SQL} and LP coincide (when the mechanism is used in presence of the same adversary for which it has been designed), i.e. the adversary does not need to make any remapping.

%\footnote{The Location Privacy Meter (\cite{Shokri:11:SP}) was also taken into account as a possible option to perform the comparison, but in this tool the location privacy is determined with respect to an adversary which do not necessarily make an optimal remap.}.

\begin{figure}[tb]
      \centering
      \includegraphics[width=0.4\columnwidth]{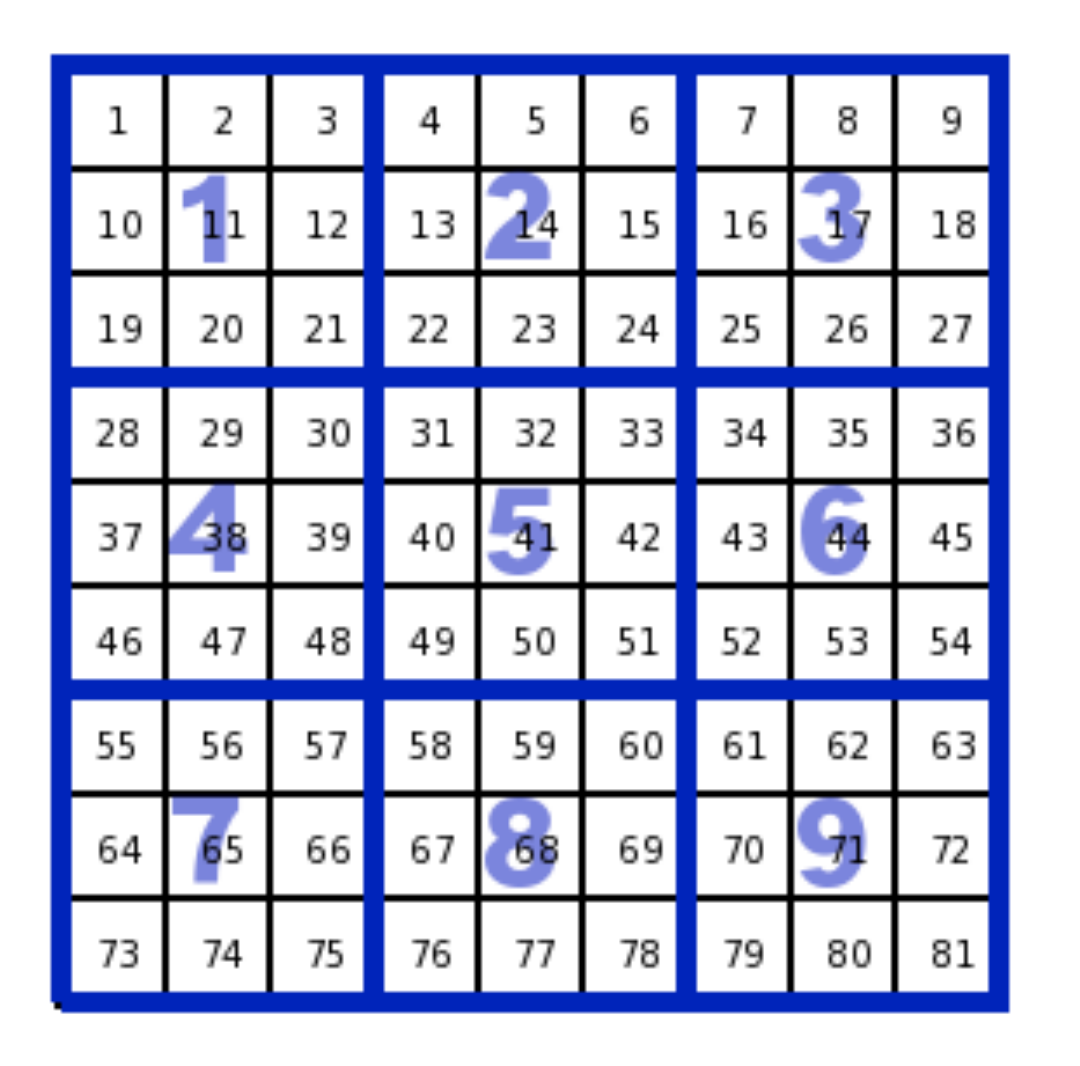}
   \caption{The division of the map into regions and zones.}
   \label{fig:regions-adv1}
  % \vspace{-10pt}
 \end{figure}

 \paragraph{Comparing the $\mathit{LP}$ for a given $\mathit{SQL}$}
%Our mechanism is already tuned to be mapped into a discrete grid (see Section 4). Let us make more precise the way we proceed in order to compute \emph{LP}: 
%\begin{enumerate}
%\item When obfuscating a region $r$, we considered the location to be the center of that region.
%\item We then apply our usual mechanism to this location, and get an obfuscated location as result.
%\item Finally, we remap this obfuscated location to the closest region $r'$ of the grid, and report $r'$ as the obfuscated region.
%\end{enumerate}
In order to compare the three mechanisms, we set the parameters of each mechanism  in such a way that the  \emph{SQL} is the same for all of them, and we compare their \emph{LP}. 
As already noted, for the optimal mechanism in \cite{Shokri:12:CCS} \emph{SQL} and \emph{LP} coincide, i.e. the optimal remapping is the identity, when the mechanism is used in presence of the same adversary for which it has been designed. 
It turns out that, when the adversary's prior is the uniform one,  \emph{SQL} and \emph{LP} coincide also for our mechanism and for the cloaking one.

 We note that for the cloaking mechanism the \emph{SQL} is fixed and it is $107.03$ m. In our experiments we   fix the  value of \emph{SQL} to be that one for all the mechanisms. 
We find that in order to obtain such \emph{SQL} for our mechanism we need to set  $\epsilon = 0.0162$ (the difference with $\epsilon'$ in this case is negligible). The mechanism of \cite{Shokri:12:CCS} is generated by using the tool explained in the same paper.

 Figure~\ref{fig:priors} illustrates the priors that we consider here: in each case, the probability distribution is accumulated in the regions in the purple area, and distributed uniformly over them. 
Note that it is not interesting to consider 
the uniform distribution over  the whole map, since, as explained before,  on that prior all the mechanisms under consideration give the same result.

 \begin{figure}[tb]
      \centering
      \includegraphics[width=\columnwidth]{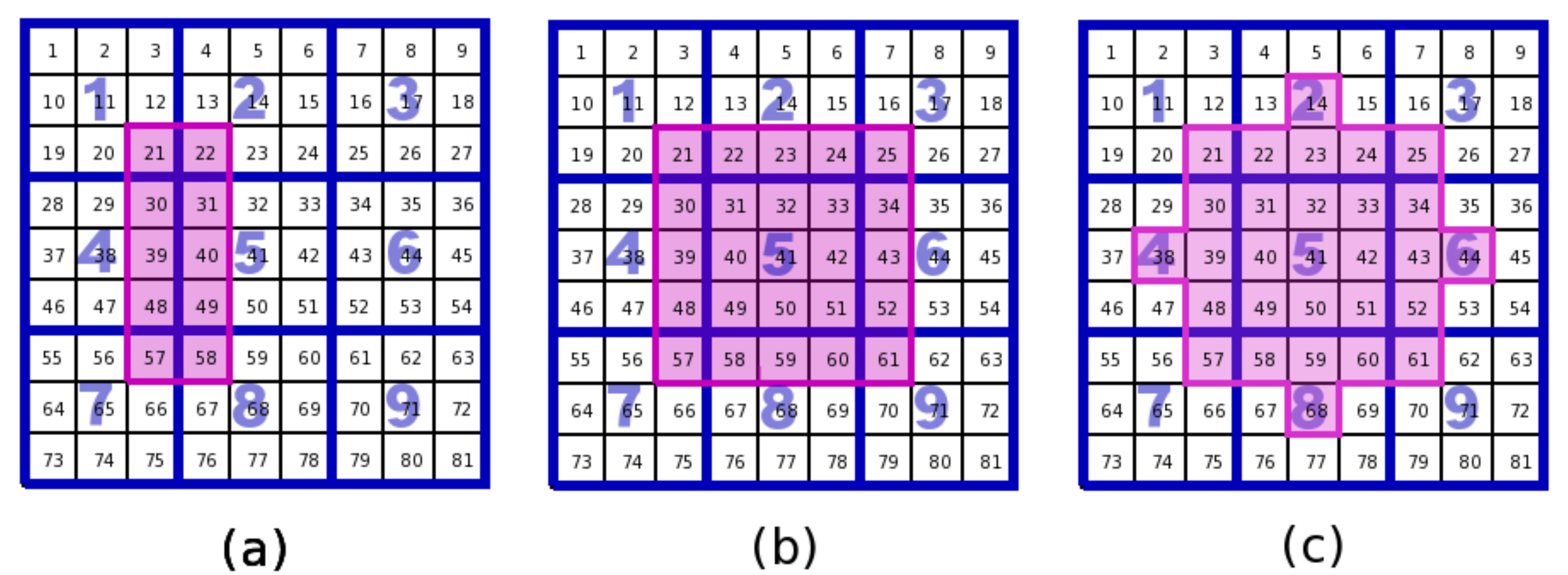}
   \caption{Priors considered for the experiments.}
   \label{fig:priors}
  % \vspace{-10pt}
 \end{figure}
 
 Figure~\ref{fig:comp1} illustrates the results we obtain in terms of \emph{LP}, where (a), (b) and (c)  correspond to the priors in Figure~\ref{fig:priors}. The optimal mechanism is considered in two instances: 
 the one designed exactly for the prior for which it is used (``optimal-rp'', where ``rp'' stands for real prior), and the one designed for the uniform distribution on all the map (``optimal-unif'', which is not necessarily optimal for the priors considered here). As we can see, the Planar Laplace mechanism offers the best \emph{LP} among the mechanisms which do not depend on the prior, or, as in the case of optimal-unif, are designed with a fixed prior. When the prior has a more circular symmetry the performance approaches the one of optimal-rp (the optimal mechanism).

 \begin{figure}[t]
      \centering
      \includegraphics[width=\columnwidth]{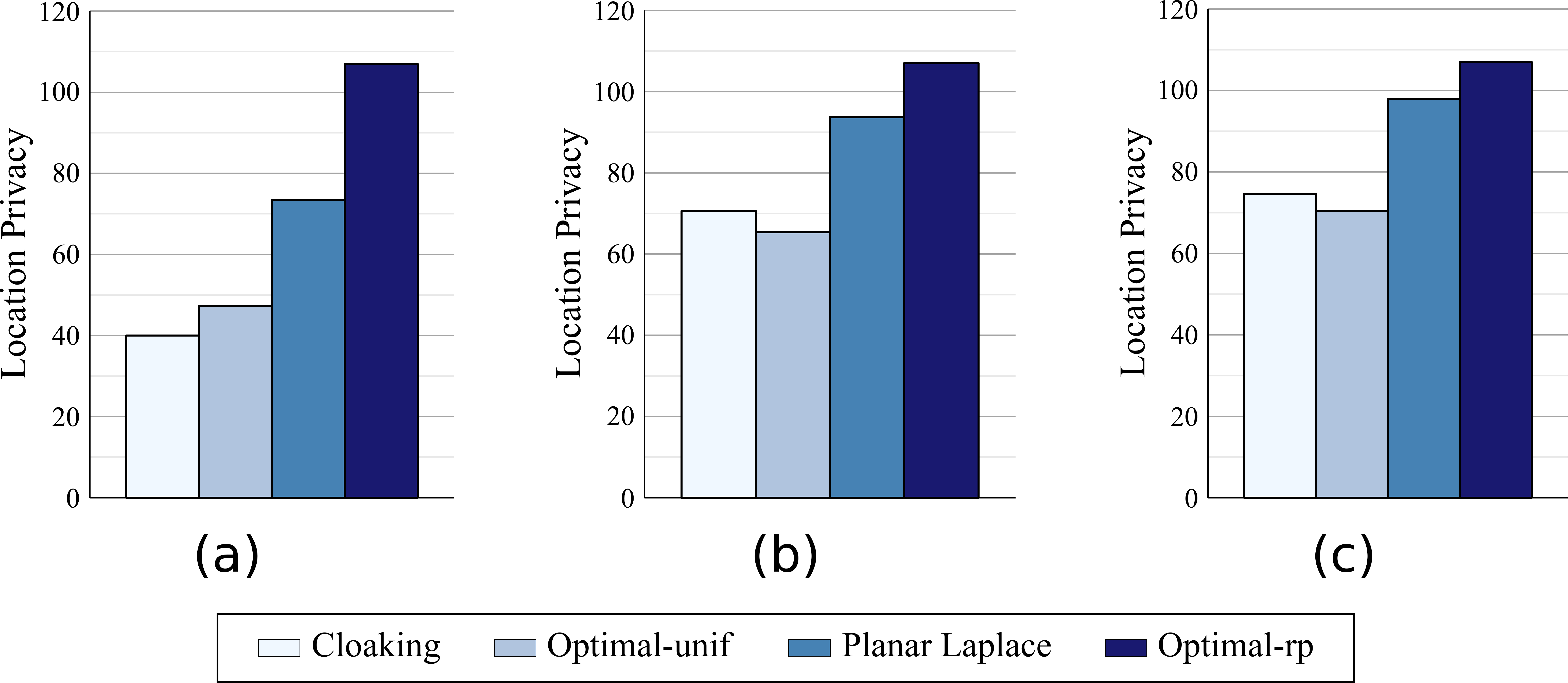}
   \caption{Location Privacy for $\boldsymbol{SQL=107.03}$ m.}
   \label{fig:comp1}
   \vspace{-10pt}
 \end{figure}

\paragraph{Comparing the $\mathit{LP}$ for a given accuracy}
The  \emph{SQL} metric defined above is a reasonable metric, but it does not cover all natural notions of quality of service. 
In particular, in the case of LBSs, an important criterion to take into account is the additional bandwidth usage.
%, i.e., the number of
%additional POIs that the user needs to retrieve in order to get the  ones he wants. 
%Such number is proportional to the size of area of retrieval (AOR) that we need to set in order to cover the area of interest (AOI) with  an acceptable probability. Intuitively, this size is proportional to the \emph{square of the distance} (rather than the simple distance) between the exact location and the reported one. 
Therefore, we make now a comparison using  the notion of accuracy, which, as explained in previous section, provides a good criterion to evaluate the performance in terms of bandwidth. Unfortunately we cannot compare our mechanism to the one of \cite{Shokri:12:CCS} under this criterion, 
because  the construction of the latter is tied to the \emph{SQL}.
Hence, we only compare our mechanism with the cloaking one. 
% Kostas: the accuracy constraint might indeed be non-linear, but C^-1 has nothing to do with it
%In fact, the tool is based on linear programming, and while fixing a bound for  the  \emph{SQL} constitutes a linear constraint, 
%fixing a bound for the accuracy, as discussed in previous section, involves  $C_\epsilon^{-1}$, which is non linear. 

We recall that an LBS application $(K,\mathit{rad}_R)$ is $(c,\mathit{rad}_I)$-accurate if for every
location $x$  the probability that the area of interest (AOI) is fully contained in the area of retrieval (AOR) is at least $c$.
We need to fix $\mathit{rad}_I$ (the radius of the AOI),  $\mathit{rad}_R$ (the radius of the AOR), and $c$ so that the condition of accuracy is satisfied for both methods, and then compute the respective \emph{LP} measures.
Let us fix  $\mathit{rad}_I=200$ m, and let us choose a large confidence factor, 
say, $c= 0.99$. 
As for $\mathit{rad}_R$, it will be determined by the cloaking method.

Since the cloaking mechanism is deterministic, in order for the condition to be satisfied
the  AOR for a given location $x$ must extend around the zone of $x$ 
by at least $\mathit{rad}_I$, 
In fact, $x$ could be in the border of the zone. Given that the 
cloaking method reports the center of the zone, and that the distance between the center 
and the border (which is equal to the distance between the center and any of the corners)
is $\sqrt{2} \cdot 150$ m, we derive that $\mathit{rad}_R$ must be at least 
$(200+\sqrt{2} \cdot 150)$ m. 
Note that in the case of this method the accuracy  is independent from the value of $c$.
It only depends on the difference between 
$\mathit{rad}_R$ and $\mathit{rad}_I$, which in turns depends 
on the length $s$ of the side of the region:
if the difference is at least $\sqrt{2} \cdot \nicefrac{s}{2}$,
then the condition is satisfied (for every possible $x$) with probability $1$. 
Otherwise, there will be some $x$ for which the condition is not satisfied 
(i.e., it is satisfied with probability $0$). 

In the case of our method, on the other hand, the accuracy condition depends on $c$ and on  $\epsilon$. More precisely, as we have seen 
in previous section, the condition is satisfied if and only if  $C_\epsilon^{-1}(c) \leq \mathit{rad}_R - \mathit{rad}_I$.
Therefore, for fixed $c$, the maximum $\epsilon$  only depends on the difference between 
$\mathit{rad}_R$ and $\mathit{rad}_I$ and is determined by the equation $C_\epsilon^{-1}(c) = \mathit{rad}_R - \mathit{rad}_I$.
For the above values of $\mathit{rad}_I$,  $\mathit{rad}_R$, and $c$, it turns out that $\epsilon=0.016$. 

We can now compare the  \emph{LP} of the two mechanisms with respect to the  three priors above.
Figure~\ref{fig:comp2} illustrates the results. As we can see, our mechanism outperforms the cloaking mechanism 
in all the three cases. 

For different values of $\mathit{rad}_I$ the situation does not change: as explained above,  the cloaking method always forces 
$\mathit{rad}_R$ to be larger than $\mathit{rad}_I$ by (at least)  $\sqrt{2} \cdot 150$ m, and 
 $\epsilon$ only depends  on this value. 
For smaller values of $c$, on the contrary, the situation changes, and becomes more favorable for our method. 
In fact, as argued above, the situation remains the same for the cloacking method (since its accuracy does not depend on $c$), 
while $\epsilon$ decreases (and consequently $\mathit{LP}$ increases) as $c$ decreases. In fact, for a fixed $r=\mathit{rad}_R-\mathit{rad}_I$, we have $\epsilon=C^{-1}_r(c)$.
This follows from  $r=C^{-1}_\epsilon(c)$ and from the fact that $r$ and $\epsilon$, in the expression that defines  $C_\epsilon(r)$, 
are interchangeable.

 \begin{figure}[tb]
      \centering
      \includegraphics[width=\columnwidth]{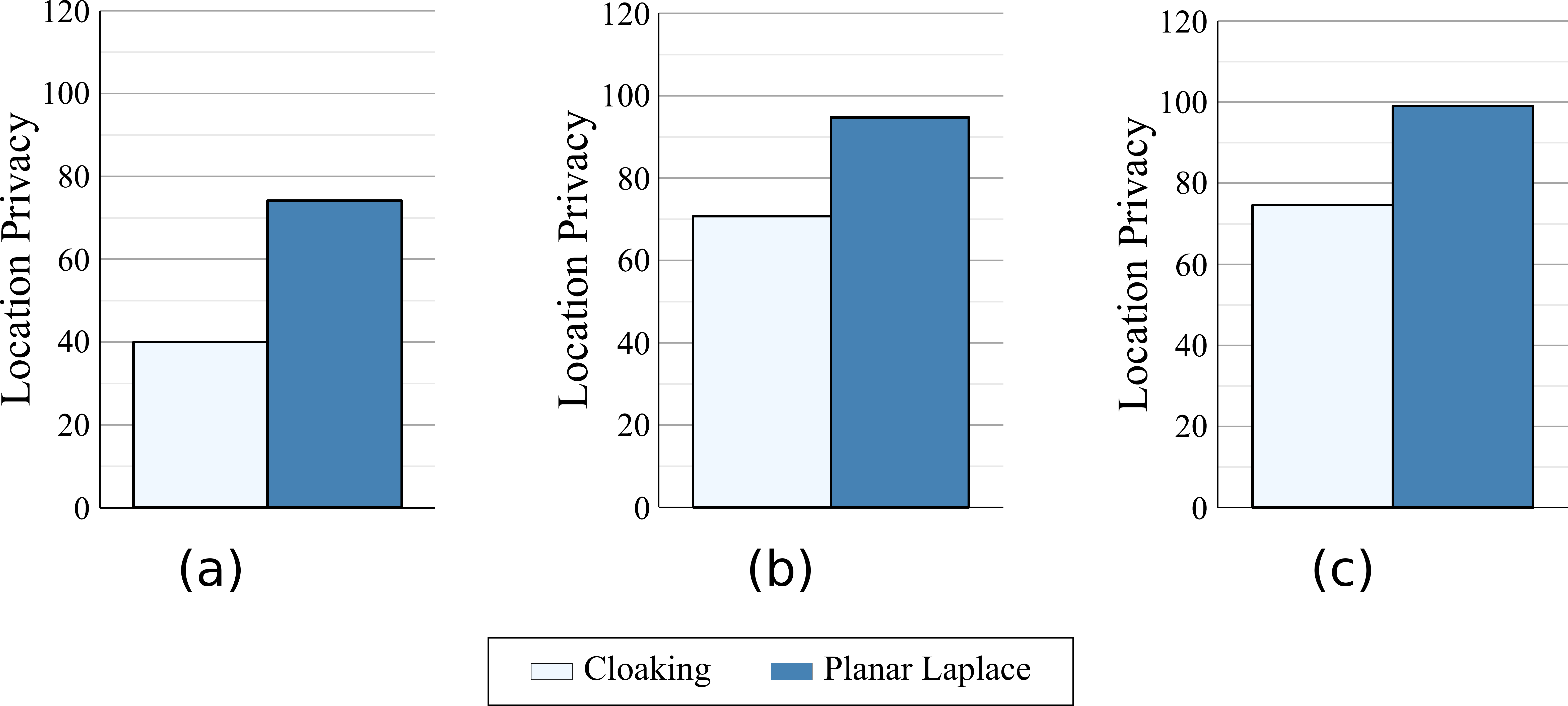}
   \caption{Location Privacy for  $\mathit{rad}_R=(\sqrt{2} \cdot 150+200)$ m and $c= 0.99$.}
   \label{fig:comp2}
  %\vspace{-10pt}
 \end{figure}

\section{Related Work}

Much of the related work has been already discussed in Section~\ref{sec:existing-notions},  here we 
only mention the works that were not reported there. 
%We refer to  \cite{Terrovitis:11:SIGKDD} for an excellent survey on privacy methods for geolocation. 
There are excellent works and surveys \cite{Terrovitis:11:SIGKDD,Krumm:09:PUC,Shin:12:WC} that summarize the different threats, methods, and guarantees
in the context of location privacy. 

LISA \cite{Chen:09:Thesis} provides location privacy by preventing an attacker
from relating any particular point of interest (POI) to the user's location.
That way, the attacker cannot infer which POI the user will visit next. The
privacy metric used in this work is $m$-\emph{unobservability}. The method
achieves $m$-unobservability if, with high probability, the attacker cannot
relate the estimated location to at least $m$ different POIs in the proximity.
%This method  does not take into
%account the attacker's prior knowledge, and it also requires to maintain a
%database of POIs in the mobile device in order to provide the required privacy.

SpaceTwist \cite{Yiu:08:ICDE} reports a fake location (called the ``anchor'')
and queries the geolocation system server incrementally for the nearest
neighbors of this fake location until the $k$-nearest neighbors of the real
location are obtained. 
%The shortcomings for this method are that the privacy
%guarantees do not consider the attacker a-priori information, and that the
%attacker can bound the real location to a region $\Omega$.

%There are also some works whose main goal is to provide accurate results for
%data mining algorithms while preserving location privacy of the user. Gidofalvi
%et al. \cite{Gidofalvi:07:MDM} use grid-based anonymization, although the
%privacy guarantees are mainly experimental. 

%Differential privacy using a metric other than the Hamming distance (which
%corresponds to standard differential privacy) appears in several works. In
%\cite{Reed:10:ICFP}, various metrics are used associated to different data types
%in a programming language, but the ultimate goal is to show that a program
%satisfies the standard notion. In \cite{Hardt10STOC}, distance between databases
%is measured by the $\ell_1$-metric, while \cite{Dwork:12:ITCS} uses a metric
%that captures the similarity between individuals, based on their membership in
%some group.

In a recent paper \cite{Mironov:12:CCS} it has been shown that, due to finite precision and rounding effects of floating-point operations,  
the standard implementations of the Laplacian mechanism result in an irregular distribution which causes
 the loss of the property of differential privacy. 
In  \cite{Gazeau:13:QAPL} the study has been extended to the planar Laplacian, 
and to any kind of finite-precision semantics. The same paper proposes a solutions for the truncated version of the planar laplacian,  
based on a snapping meccanism, which maintains the level of privacy at the cost of introducing an additional amount of noise. 

\section{Conclusion and future work}

In this paper we have  presented a framework for achieving 
privacy in location-based applications, taking into account the desired level of 
protection as well as the side-information that the attacker might have. 
%, trying to overcome all the inconveniences mentioned in the previous
%approaches. 
The core of our proposal is a new notion of privacy, that we call  geo-indistinguishability, 
and a method, based on a  bivariate version of the Laplace function, 
to perturbate the actual location. 
We have put a strong emphasis in the formal treatment of the privacy
guarantees, both in giving a  rigorous definition of  geo-indistinguishability, 
and in providing a mathematical proof that our method  satisfies such property. 
We also have shown how geo-indistinguishability relates to the popular notion of 
differential privacy. 
Finally, we have illustrated the applicability of our method  on a POI-retrieval service, and  we have compared it with other mechanisms in the literature, showing that 
it outperforms those which do not depend on the prior. 

In the future we aim at extending our method to cope with more complex applications, 
possibly involving the sanitization of several (potentially related) locations. 
One important  aspect to consider when generating noise on several data is  the fact that their correlation may 
degrade the level of protection. We aim at devising techniques to control the possible loss of privacy
and to allow the composability of our method.  

\section{Acknowledgements}
This work was partially supported by the European Union 
7th FP under the grant agreement no. 295261 (MEALS), by the projects ANR-11-IS02-0002  LOCALI and ANR-12-IS02-001 PACE, and by the 
INRIA Large Scale Initiative CAPPRIS. The work of Miguel E. Andr\'es was supported by a QUALCOMM grant. The work of Nicol\'as E. Bordenabe was partially funded by the French Defense 
Procurement Agency (DGA) by a PhD grant.
% Kostas: I removed most of the references below. They are papers about hiding
% the identity, not location, so no need to have that many
%\cite{Gruteser:03:MobiSys,
%Gedik:05:ICDCS, Gedik:08:TMC, Mokbel:06:VLDB, Bamba:08:WWW, Pingley:09:ICDCS,
%Manweiler:09:CCS, Hu:09:ICDE, Chow:06:GIS, Ghinita:07:SSTD, Ghinita:07:WWW}

\bibliographystyle{abbrv}
\bibliography{../short} 

%\newpage
\appendix

In this appendix we provide the technical details that have been omitted from the main body of the paper. 

%\subsection{Results from Section~\ref{sec:definitions}}

\remove{
\thmcoincide*
\begin{proof}
	The equivalence of Geo-indistinguishability-I and III can be shown by
	applying Bayes' law. We show here the equivalence between
	Geo-indistinguishability-II and III.

	Assume that $\K$ satisfies \geoind-III. We first show that for all
	$r>0$:
	\begin{align*}
		&P(S | B_r(x)) \\
			%&= \sum_{x'\in \calx} P(S,x' | B_r(x))
			%\\
			&= \sum_{x'\in B_r(x)} P(S,x' | B_r(x))
			  %& P(S,x'|B_r(x))=0, x'\notin B_r(x)
			\\
			&= \sum_{x'\in B_r(x)} P_X(x'|B_r(x))  \K(x')(S)
			\\
			&\ge \sum_{x'\in B_r(x)} P_X (x'|B_r(x)) e^{-\epsilon\,r} \K(x)(S)
				& d(x,x')\le r
			\\
			&= e^{-\epsilon\,r} \K(x)(S)
	\end{align*}
	Then
	\begin{align*}
		P(x|S,B_r(x))
			= \frac{P(S|x)}{P(S|B_r(x))} P(x|B_r(x)) 
			\le  e^{\epsilon\,r} P(x|B_r(x))
	\end{align*}
	For the opposite direction, let $x_1,x_2 \in \calx$, let $r = d(x_1,x_2)$
	and define a prior distribution $P_X^t(x)$ as:
	\[
		P_X^t(x) =
		\begin{cases}
			t & x= x_1 \\
			1-t & x= x_2 \\
			0   & \text{otherwise}
		\end{cases}
	\]
	Using that prior for $t\in(0,1)$ we have for all $S$:
	\begin{align*}
		\K(x_1)(S)
			&= P(S | x_1)
			\\
			&= P(S | x_1, B_r(x_1))
				%& x_1 \in B_r(x_1) \\
			\\
			&= \frac{P(x_1 | S, B_r(x_1))}{P(x_1 | B_r(x_1))} P(S | B_r(x_1)) 
			\\
			&\le e^{\epsilon\,r} P(S | B_r(x_1))
			   %\\ &\quad [\text{hypothesis,}\diam(B_r(x)) = \dx(x,x')]
			\\
			&\le e^{\epsilon\,r} \sum_{x\in\calx} P(S,x | B_r(x_1)) 
			\\
			&\le e^{\epsilon\,r} (t P(S|x_1) + (1-t)P(S|x_2)) 
			\\
			&\le e^{\epsilon\,r} (t \K(x_1)(S) + (1-t) \K(x_2)(S))
				\label{eq2}
	\end{align*}
	Note that we need $t\in (0,1)$ so that $P_X^t(x_1),P_X^t(x_2)$ are positive
	and the conditional probabilities can be defined.
	Finally, taking the $\lim_{t\to 0}$ on both sides of the above inequality we get
	$
		\K(x_1)(S) \le e^{\epsilon\,r} \K(x_2)(S)
	$
\end{proof}

%\remove{
\thmcomposition*
\begin{proof}
	Let $\xb=(x_1,\dots,x_n),\xb'=(x_1',\ldots,x_n')$ such that
	$\dmax(\xb,\xb')\le r$. This implies that $d(x_i,x_i')\le r$, $1\le i\le n$.
	We have:
	\begin{align*}
		P(\zb|\xb)
		&= \smallprod{i} P(z_i|x_i) \\
		&\le \smallprod{i} e^{\epsilon r} P(z_i|x_i') \\
		&= e^{n\epsilon r} \smallprod{i} P(z_i|x_i') \\
		&= e^{n\epsilon r} P(\zb|\xb')
	\end{align*}
\end{proof}
%}
} %remove

\remove{  %%%%% begin remove
\subsection{Coefficient of the continuous planar laplacian}\label{app:normalization factor}
The definition of the pdf of the planar laplacian is: 
\[
D_\epsilon (\vect{x}_0) (\vect{x}) = \lambda \,  e^{-\epsilon \,{d(\vect{x}_0,\vect{x})}}
\]
In order to determine $\lambda$ we require that the integration of the above pdf on the whole $\reals^2$ gives 1. 
For simplicity, and without loss of generality, we assume that $\vect{x}_0=(0,0)$. 
Thus we obtain the following constraint:
\[
%\begin{equation}\label{eqn:Laplacian integration}
\int\!\!\!\!\int_{\reals^2} \lambda_\epsilon e^{-\epsilon\; \sqrt{x^2+y^2}} dx\  dy  = 1
%\end{equation}
\]
In order to solve this equation it is convenient to apply a transformation into the polar system. Using the standard formula to transform a pdf from cartesian coordinates to polar coordinates $(r,\theta)$,  we obtain
\[
%\begin{equation}\label{eqn:Laplacian integration polar}
\int_0^{2\pi}\left(\int_{0}^\infty \lambda \,   r \, e^{-\epsilon\,{r}} dr\right)  d\theta  = 1
%\end{equation}
\]
Using the integration by parts, we derive
\[
\int_0^{2\pi}\lambda \left.\left(- \frac{r\, e^{-\epsilon\,r}}{\epsilon} - \frac{e^{-\epsilon\,r}}{\epsilon^2} \right)\right|_0^\infty  d\theta  = 1
\]
namely:
\[
\int_0^{2\pi}\frac{\lambda}{\epsilon^2} \; d\theta  = 1
\]
from which we conclude
\[
\lambda = \nicefrac{\epsilon^2}{2\,\pi}
\]
\hfill$\qed$
%from which we conclude:
%\begin{equation}\label{eqn:polar laplacian}
%D_\epsilon  (r,\theta) =  \frac{\epsilon^2}{2\,\pi} \, r\, e^{-\epsilon \,r}
%\end{equation}

\subsection{The planar laplacian satisfies   geo-indistinguishability}\label{app:geo-indistinguishability}
Given the definition of $D_\epsilon (\vect{x}_0) (\vect{x})$ in (\ref{eqn:planar laplacian}), by triangular inequality we have 
\[ %\begin{equation}\label{eqn:pdf}
D_\epsilon (\vect{x}_0) (\vect{x})\leq e^{\epsilon \,d(\vect{x}_0,\vect{x}'_0)}{D_\epsilon (\vect{x}'_0) (\vect{x}) }
\] %\end{equation}
Using well-known properties of integrals, we derive
\[
\int_S D_\epsilon (\vect{x}_0) (\vect{x})  ds \leq \int_S e^{\epsilon \,d(\vect{x}_0,\vect{x}'_0)} \, {D_\epsilon (\vect{x}'_0) (\vect{x}) }ds
\]
and 
\[
\int_S D_\epsilon (\vect{x}_0) (\vect{x})  ds \leq e^{\epsilon \,d(\vect{x}_0,\vect{x}'_0)}\int_S {D_\epsilon (\vect{x}'_0) (\vect{x}) }ds
\]
Now, taking into account the definition of $\K$: 
\[
{\K}(\vect{x}_0)(S)  = \int_S D_\epsilon (\vect{x}_0) (\vect{x})  ds
\]
we derive
\[
{\K}(\vect{x}_0)(S)  \leq e^{\epsilon \,d(\vect{x}_0,\vect{x}'_0)}{\K}(\vect{x}'_0)(S) 
\]
 \hfill$\qed$

%\subsection{The planar Laplacian satisfies \geoind}

%When each distribution $K$ is given by a probability density function
%$D(x)$, we can show that the mechanism satisfies \geoind{} by showing that the density
%function pointwise satisfies $D(x)(z) \le e^{\epsilon r} D(x')(z)$  for
%$d(x,x')\le r$. Let $x,x',z \in\reals^2, d(x,x')\le r$, we have
%\begin{align*}
%	D(x)(z)
%		&= \frac{\epsilon^2}{2\pi} e^{-\epsilon \,d(x,z)} \\
%		&\le \frac{\epsilon^2}{2\pi} e^{-\epsilon \, (d(x',z) - d(x,x') )}
%			&\text{triangular ineq.} \\
%		 &\le e^{\epsilon r} \frac{\epsilon^2}{2\pi} e^{-\epsilon \, d(x',z)}
%			& d(x,x')\le r \\
%		 &= e^{\epsilon r} D(x')(z)
%\end{align*}

} %%%%% end remove

%\subsection{The discretization  preserves   geo-in\-dis\-tin\-guisha\-bil\-ity}\label{app:geo-indistinguishability discrete}
\thmgeodp*
\begin{proof}
        The case in which $\vect{x}_0 = \vect{x}'_0$ is trivial. We consider therefore only the case in which $\vect{x}_0 \neq \vect{x}'_0$. 
        Note that in this case $d(\vect{x}_0,\vect{x}'_0) \geq u$. 
	We proceed by determining  an upper bound on ${\K}_{\epsilon'}(\vect{x}_0)(\vect{x})$ and  a lover bound on 
	${\K}_{\epsilon'}(\vect{x}'_0)(\vect{x})$
	for generic
	$\vect{x}_0$, $\vect{x}'_0$ and $\vect{x}$ such that $d(\vect{x}_0,\vect{x}), d(\vect{x}'_0,\vect{x})\le r_{\max}$. 
	Let $S$ be the set of points for which $x$ is the closest point in $\grid$, namely:
	\[S = R(\vect{x}) = \{\vect{y}\in\reals^2\, |\, \forall \vect{x}'\in {\cal G}. \; d(\vect{y},\vect{x}') \leq d(\vect{y},\vect{x}')\}\]
	 Ideally, the points remapped in $\vect{x}$
	would be exactly those in $S$. 
	However, due to the finite precision of the machine, the points actually remapped in $\vect{x}$ are those of $R_{\cal W}(\vect{x})$ 
	(see Section~\ref{sec:discrete}).
	Hence the probability of $\vect{x}$ is that of $S$ plus or minus the small rectangles\footnote{$W$ is actually a fragment of a circular crown, but    
	since  $\delta_\theta$ is very small, it approximates a rectangle. Also, the side of $W$ is not exactly $r\,\delta_\theta$, it is a number in the interval 
	$[(r-\nicefrac{u}{\sqrt{2}})\,\delta_{\theta},(r+\nicefrac{u}{\sqrt{2}})\,\delta_{\theta}]$. However $\nicefrac{u}{\sqrt{2}}\,\delta_\theta$ is very small with 
	respect to the other quantities involved, hence we consider negligible this difference.}
	 $W$ of size $\delta_r \times r\, \delta_{\theta}$ at the border of $S$, where $r= d(\vect{x}_0,\vect{x})$, see Figure \ref{fig:rectangle}.
	\begin{figure}[t]%
			\centering
			\includegraphics[width=0.6\columnwidth]{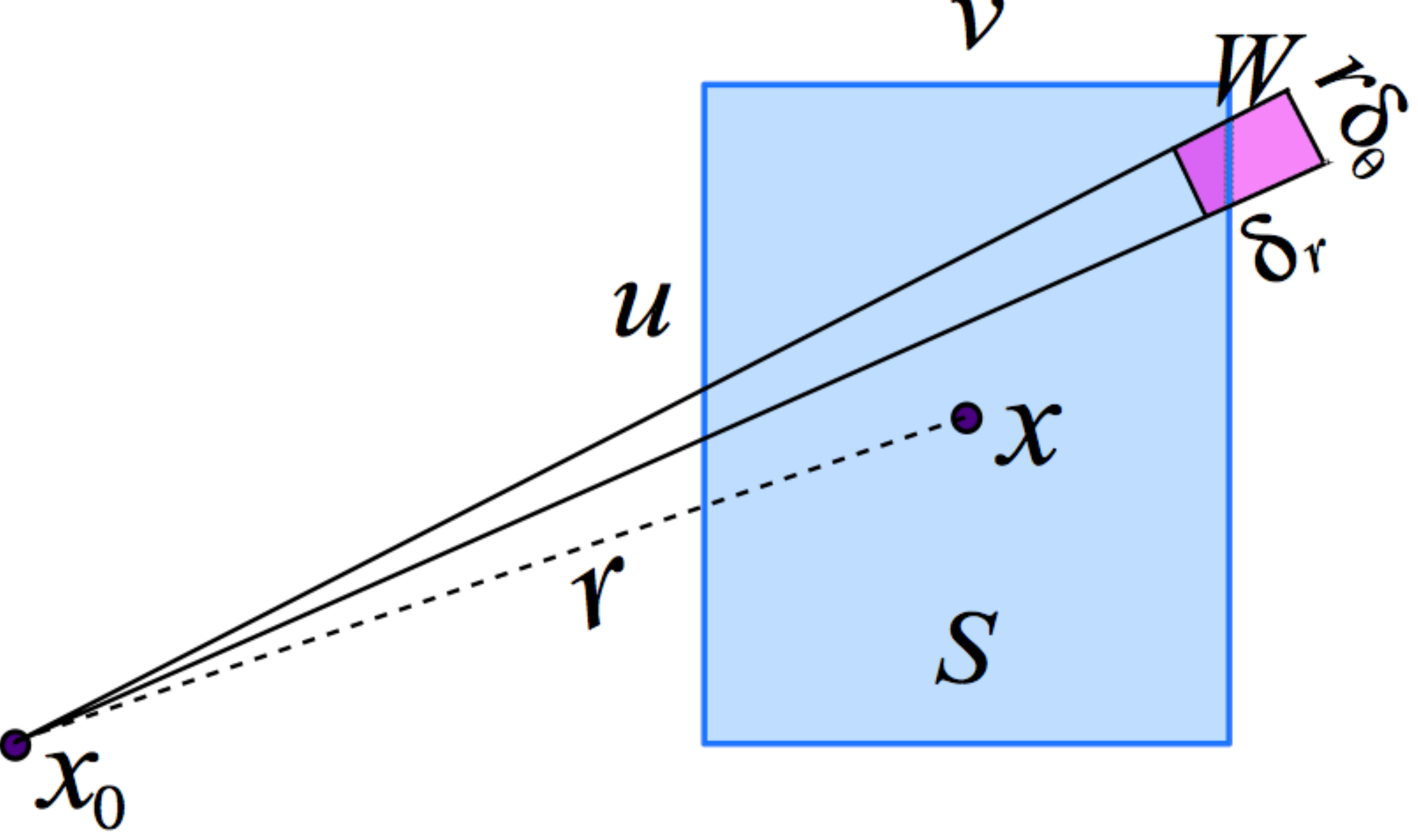}%
			\caption{Bounding the probability of $\vect{x}$ in the discrete Laplacian.}%
			\label{fig:rectangle}%
	\end{figure}
	Let us denote by $S_W$ the total  area of these small rectangles  $W$ on one of the sides of $S$. 
	Since $d(\vect{x}_0,\vect{x})\le r_{\max}< \nicefrac{u}{\delta_\theta}$, and $\delta_r<r_{\max}\delta_\theta$, we have that $S_W$ is less than $\nicefrac{1}{q}$ of the area of 
	$S$, where $q=  \nicefrac{u}{r_{\max} \delta_{\theta}}$. %Since the area of $S$ is $u\,v$, we have $S<\nicefrac{u\,v}{q}$.		
	The probability density on this area differs at most by a factor $e^{\epsilon'u}$ from that of the other points in $S$. 
	%where  $r=d(\vect{x}_0,\vect{x})$. 
	Finally, note that on two sides of $S$ the rectangles $W$ contribute positively to ${\K}_{\epsilon'}(\vect{x}_0)(\vect{x})$, while on two sides they 
	contribute negatively. Summarizing, we have: 
	\begin{equation}\label{eqn:bound1}
	{\K}_{\epsilon'}(\vect{x}_0)(\vect{x})\;\leq\; (1+\frac{2\,e^{\epsilon'u}}{q}) \int_S D_{\epsilon'} (\vect{x}_0) (\vect{x}_1)  ds
	\end{equation}
	and
	\begin{equation}\label{eqn:bound2}
	(1-\frac{2\,e^{\epsilon'u}}{q}) \int_S D_{\epsilon'} (\vect{x}'_0) (\vect{x}_1)  ds  \; \leq \; {\K}_{\epsilon'}(\vect{x}'_0)(\vect{x})
	\end{equation}
	Observe now that 
	\[\frac{D_{\epsilon'} (\vect{x}_0) (\vect{x}_1)}{D_{\epsilon'} (\vect{x}'_0) (\vect{x}_1) }= e^{-{\epsilon'} (d(\vect{x}_0,\vect{x}_1)- d(\vect{x}'_0,\vect{x}_1))}\]
	By triangular inequality we obtain
	\[ %\begin{equation}\label{eqn:pdf}
	D_{\epsilon'} (\vect{x}_0) (\vect{x}_1)\leq e^{\epsilon' \,d(\vect{x}_0,\vect{x}'_0)}{D_{\epsilon'} (\vect{x}'_0) (\vect{x}_1) }
	\] %\end{equation}
	from which  we derive
	\begin{equation}\label{eqn:dise}
	\int_S D_{\epsilon'} (\vect{x}_0) (\vect{x}_1)  ds \leq e^{\epsilon' \,d(\vect{x}_0,\vect{x}'_0)}\int_S {D_{\epsilon'} (\vect{x}'_0) (\vect{x}_1) }ds
	\end{equation}
	from which, using \eqref{eqn:bound1},  \eqref{eqn:dise}, and  \eqref{eqn:bound2},
	we obtain
	\begin{equation}\label{eqn:almost dp}
	{\K}_{\epsilon'}(\vect{x}_0)(\vect{x})  \leq 
	e^{\epsilon' \,d(\vect{x}_0,\vect{x}'_0)}\,{\K}_{\epsilon'}(\vect{x}'_0)(\vect{x})\,\frac{q+2\,e^{\epsilon'u}}{q -2\,e^{\epsilon'u}}
	\end{equation}
	Assume now that
	\[\epsilon' + \frac{1}{u} \ln \, \frac{q + 2\, e^{\epsilon'u}}{q - 2\, e^{\epsilon'u}}\leq \epsilon\]
	Since we are assuming $d(\vect{x}_0,\vect{x}'_0)\geq u$, we derive:
	\begin{equation}\label{eqn:almost there}
	e^{\epsilon' \,d(\vect{x}_0,\vect{x}'_0)}\,\frac{q+2\,e^{\epsilon'u}}{q -2\,e^{\epsilon'u}}\le 
	e^{\epsilon \,d(\vect{x}_0,\vect{x}'_0)}
	\end{equation}
	Finally, from \eqref{eqn:almost dp} and \eqref{eqn:almost there}, we conclude.  
\end{proof}

%\subsection{The truncation  preserves   geo-in\-dis\-tin\-guisha\-bil\-ity}\label{app:geo-indistinguishability truncated}
\thmgeotruncateddp*
\begin{proof}
	 The proof proceeds like the one for Theorem~\ref{theo:geo-dp}, except when  $R(\vect{x})$ is on the border of $\cala$. In this latter case, 
	the probability on $\vect{x}$ is given not only by the probability on $R(\vect{x})$ (plus or minus the small rectangles $W$ 
	-- see the proof of Theorem~\ref{theo:geo-dp}), but also by the probability of  the part $C$ 
	of the cone determined by  $\vect{o}$, $R(\vect{x})$, and  lying outside 
	$\cala$ (see Figure~\ref{fig:circle}). 
 \begin{figure}[t]%
		\centering
		\includegraphics[width=0.5\columnwidth]{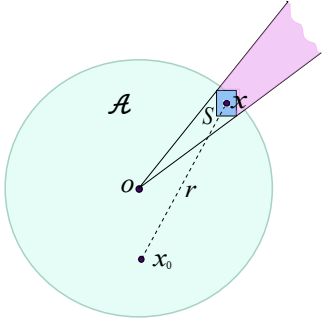}\\[1ex]%
		\caption{Probability of $\vect{x}$ in the truncated discrete laplacian.}%
		\label{fig:circle}%
\end{figure}
Following a similar reasoning as in the proof of Theorem~\ref{theo:geo-dp} we get
\[	
	{\K}^T_{\epsilon'}(\vect{x}_0)(\vect{x})\;\leq\; (1+\frac{2\,e^{\epsilon'u}}{q}) \int_{S\cup C} D_{\epsilon'} (\vect{x}_0) (\vect{x}_1)  ds
\]
	and
\[
	(1-\frac{2\,e^{\epsilon'u}}{q}) \int_{S\cup C} D_{\epsilon'} (\vect{x}'_0) (\vect{x}_1)  ds  \; \leq \; {\K}^T_{\epsilon'}(\vect{x}'_0)(\vect{x})
\]
	The rest follows as in the proof of Theorem~\ref{theo:geo-dp}.
\end{proof}

\remove{
\subsection{Study of the extra noise required by  the discretization}
\label{app:relation}
Figure~\ref{fig:epsilon_prime} shows the relation between $\epsilon$ and  the maximal $\epsilon'$ satisfying the condition of Theorem~\ref{theo:geo-dp}. In all cases the grid unit is $u = 3\cdot 10^{-3}$ Km  $= 3$ m, and the other parameters are as follows:  
\begin{itemize}
\item 
The green line corresponds to $q = 3\cdot 10^{9}$. For instance this value 
can be obtained with double precision ($16$ significant digits, i.e., $\delta_\theta = 10^{-16}$) and $r_{\max} = 10^4$ Km.  In the case of double precision, even for  much larger values of  $r_{\max}$ (up to about $10^6$ Km)  $\epsilon'$ coincides with $\epsilon$. 
\item
The magenta line corresponds to $q = 3\cdot 10^{2}$. This value can be obtained with single precision ($7$ significant digits, i.e., $\delta_\theta = 10^{-7}$) and $r_{\max} = 10^2$  Km.  In this case we cannot go much higher for $r_{\max}$ without $\epsilon'$ diverging dramatically from $\epsilon$. Furthermore, the smallest possible value  for $\epsilon$ is about $4.5$, which means that at most we can ensure $4.5$-geo-indistinguishability. 
\item
The blue line corresponds to $q = 3\cdot 10^{3}$, which can still be obtained  with single precision at the price of   reducing previous $r_{\max}$ by a factor $10$ ($r_{\max} = 10$ Km). Alternatively we could obtain this value by increasing both the precision and $r_{\max}$: For instance,  with an intermediate precision of $9$ significant digits ($\delta_\theta = 10^{-9}$) and $r_{\max} = 10^3$ Km. 
\end{itemize}

% (The graph does not depend on the unit of measurement of $\delta_r$ and $u$, but they have to be the same unit,  
% and that of  $\epsilon' $ and $\epsilon$ has to be the inverse, as usual.) 
% As we can see, the amount of additional noise is negligible in case of double precision, even for large values of $r_{\max}$, while it becomes noticeable in single precision, in which case it is necessary to reduce $r_{\max}$  to contain the effect. 

\begin{figure}[tb]%
		\centering
		\includegraphics[width=0.8\columnwidth]{figures/epsilon_prime_new.pdf}\\[-1ex]%
		\caption{The relation between $\epsilon$ and  $\epsilon'$ for various precisions.  }%
		\label{fig:epsilon_prime}%
		\vspace{-12pt}
\end{figure}

\subsection{Experiments on the LODES dataset}\label{app:experiments}

In order to evaluate the accuracy of the sanitized dataset generated by our algorithm (and thus of our algorithm as a data sanitizer) 
we implemented our perturbation mechanism and conducted a series of experiments 
% varying the privacy parameter $\epsilon$ of the algorithm. 
focusing on the ``home-work commute distance'' analysis provided by the OnTheMap application. 
This analysis provides, for a given area (specified as, say, state or county code), 
a histogram classifying the individuals in the dataset residing in the given area according to the distance between their residence location and their work location. The generated histogram contains four buckets representing different ranges of distance: (1) from zero to ten miles, (2) from ten to twenty five miles, (3) from twenty five to fifty miles, and (4) more than fifty miles.%\footnote{Here we choose miles as unit of measure, in order to compare our results with  the literature  and with online tools about the LODES dataset.}. 

We have chosen the San Francisco (SF) County as residence area for our experimental analysis. Additionally, we restrict the work location of individuals residing in the San Francisco county to the state of California. The total number of individuals satisfying these conditions amounts to 374.390. All experiments have been carried on using version 6.0 of the LODES dataset. In addition, the mapping from census blocks to their corresponding centroids has been done using the 2011 TIGER census block shapefile information provided by the Census Bureau. 

We now proceed to compare the LODES dataset -- seen as a histogram -- with several sanitized versions of it generated by our algorithm.  Figure \ref{fig:ODbyEpsilon_cat} (a) depicts how the geographical information degrades when fixing $r$ to $1.22$ miles (so to ensure \geoind{} within $10\%$ of the land area of the SF County) and varying $\ell$. The precision parameters were chosen as follows: $u=10^{-3}$ miles, $\cal A$'s diameter was set to $10^4$ miles, and the standard double precision values for $\delta_r$ and $\delta_\theta$ (for the corresponding ranges).  
%The San Francisco county has a land area of approximately 46,87 sq mi (~7x7)121 km$^2$ and 
%We first note that, as our algorithm takes as input a dataset with home and work coordinates, the first step is to map pairs of census blocks $(hBlock, wBlock)$ into pairs of coordinates $(hCoord, work\_coord)$. For this purpose we used the 2011 TIGER/line Shapefiles of each census block provided by the census bureau. These shapefiles contain detailed geographic information about each census block. In our experiments we use the geographic centroid of each census block to represent the geographic location of the block.

%In these experiments, we have followed the approach of deriving $\epsilon'$ from the lever of privacy $\ell$ and  the radius of protection $r$, as illustrated in the introduction. We recall that $\epsilon'=\nicefrac{\ell}{r}$.

%  \begin{figure}[h!]
%      \centering
%      \includegraphics[width=8.5cm]{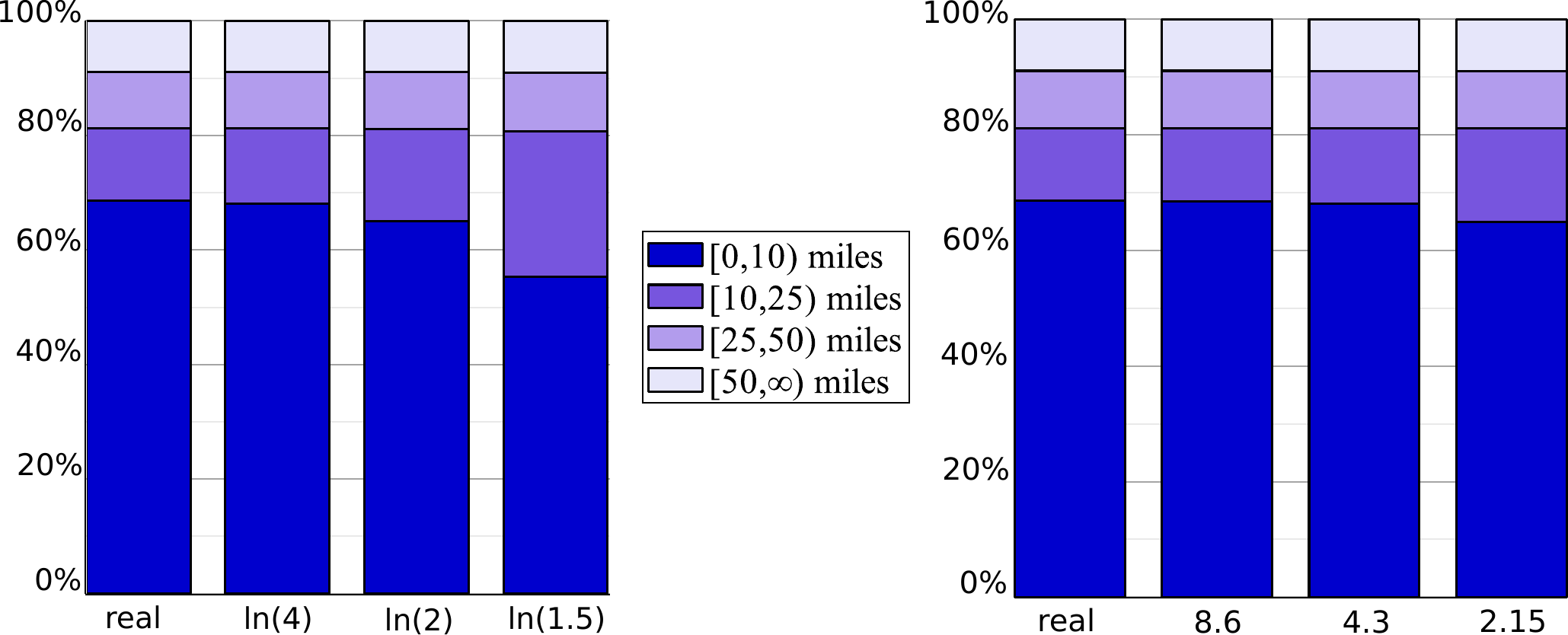}
%%      \includegraphics[width=6cm]{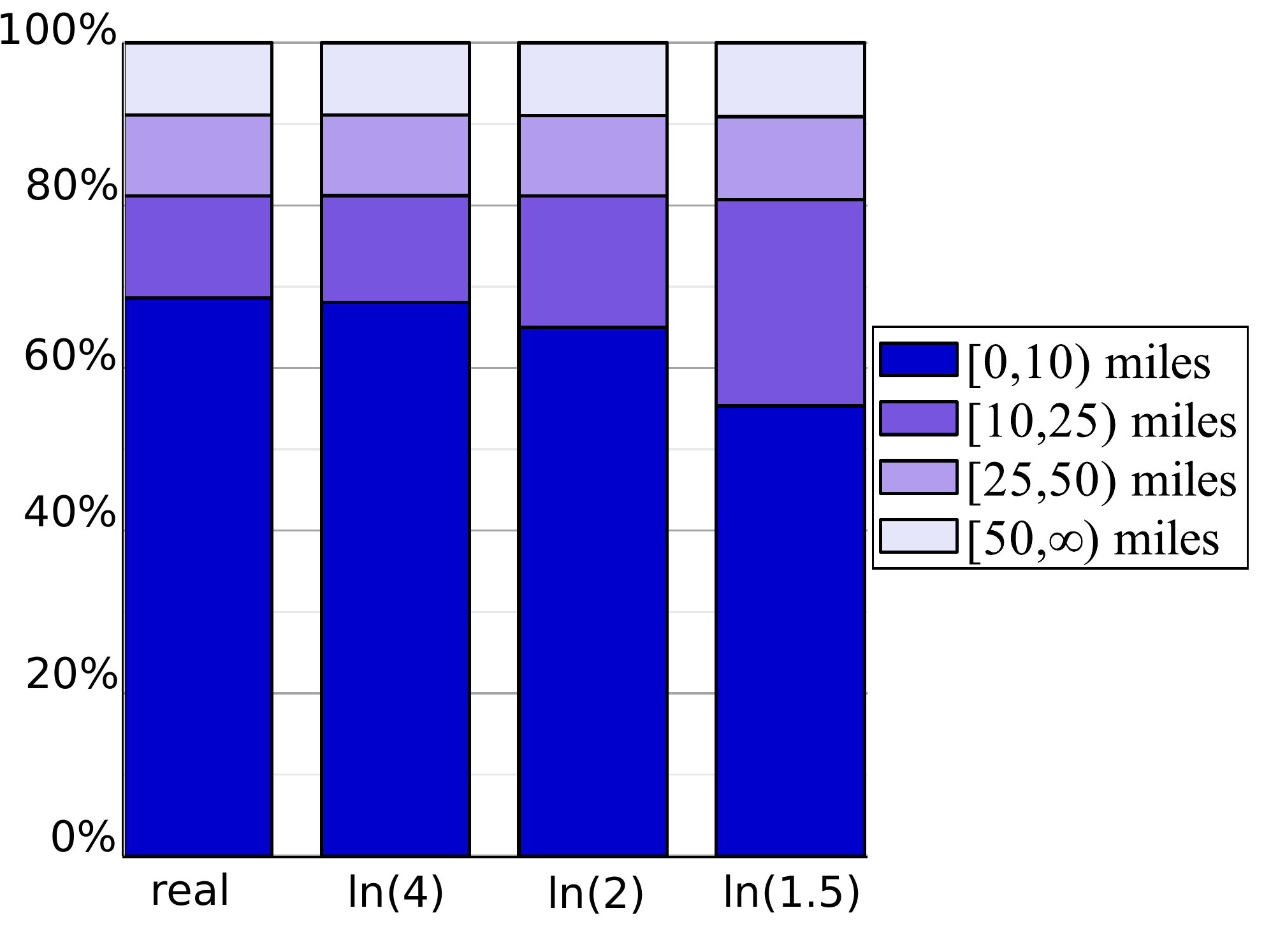}
%   \caption{Home-work commute distance  for  $r\!=1.22$ and various $\ell$.}\label{fig:ODbyEpsilon_cat}
% \end{figure}

%
% \begin{figure}[h!]
%      \centering
%      \includegraphics[width=8.5cm]{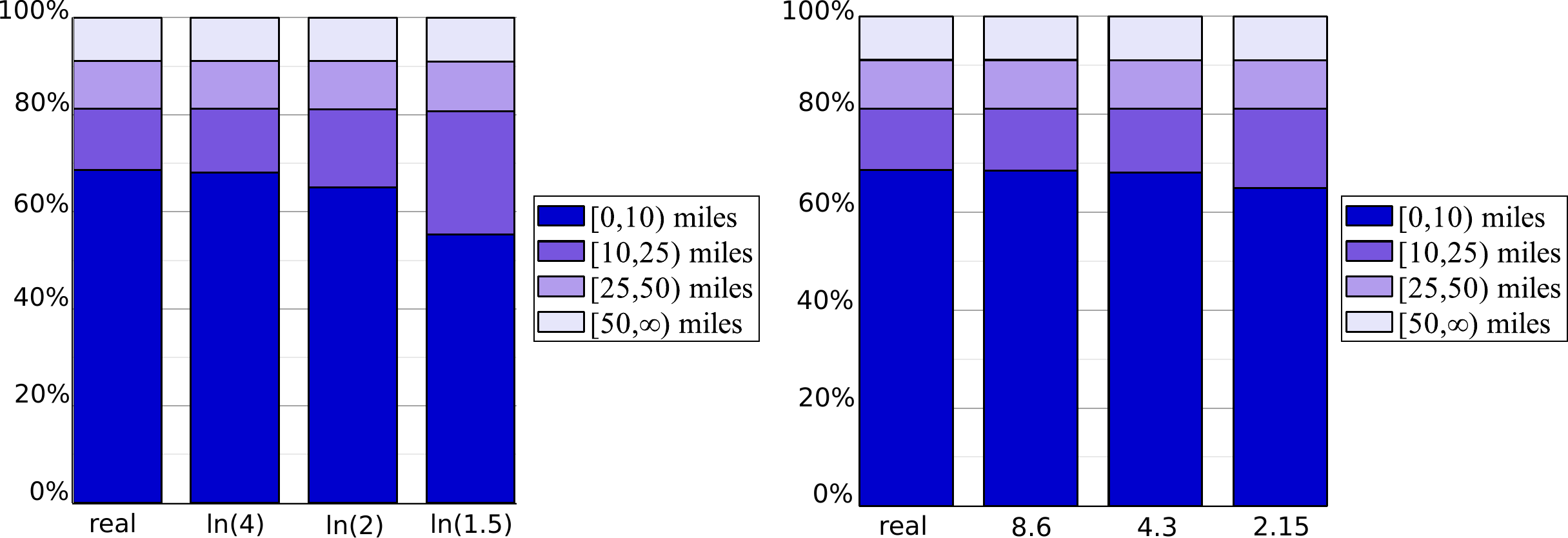}
%%      \includegraphics[width=6cm]{figures/ODbyEpsilon_cat1.pdf}
%   \caption{Home-work commute distance  for  $r\!=1.22$ and various $\ell$.}\label{fig:ODbyEpsilon_cat}
% \end{figure}

We have also conducted experiments varying $r$ and fixing $\ell$. For instance, if we want to  provide  geo-indistinguishability for $5\%$, $10\%$, and $25\%$ of the land area of the SF county (approx. $46.87$ mi$^2$), we can set $r\!=\!0.86$,  $1.22$, and $1.93$ miles, respectively. Then by taking $\ell=\ln(2)$ we get an histogram very similar to the previous one. This is not surprising as the noise generated by our algorithm  depends only on the ratios $\nicefrac{\ell}{r}$, which are similar for the  values above. 

\begin{figure}[t]
      \centering
      \includegraphics[width=8.5cm]{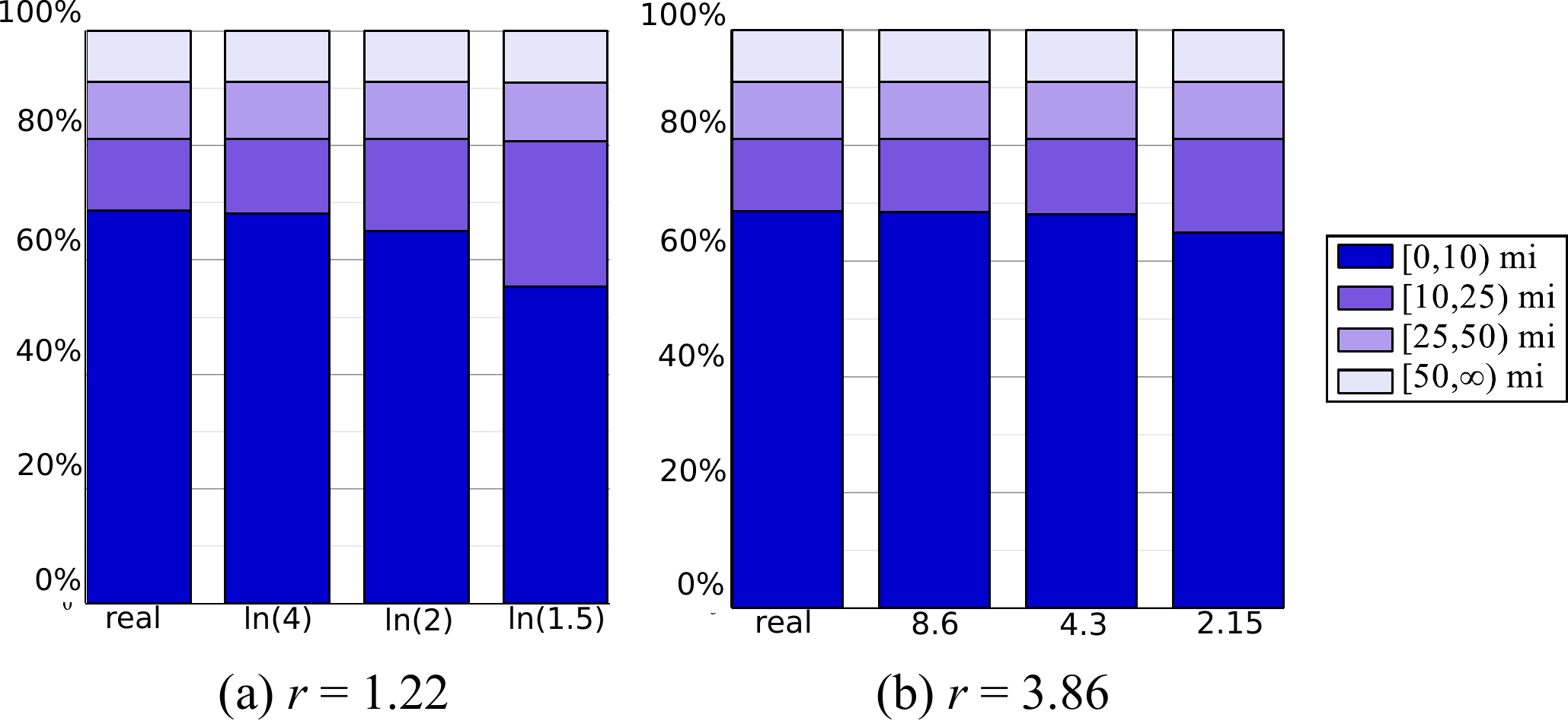}
   \caption{Home-work commute distance for various levels $\ell$.}\label{fig:ODbyEpsilon_cat}
   \vspace{-10pt}
 \end{figure}

% \begin{figure}[h!]
 %     \centering
%      \includegraphics[width=6cm]{figures/ODbyArea_cat}
%   \caption{(b) Home-work commute distance histograms for several algorithm's outputs with $\epsilon=\ln(2)$ and various values of $r$}\label{fig:ODbyArea_cat}
 %\end{figure}

As shown in Figure \ref{fig:ODbyEpsilon_cat} (a), our algorithm has little effect on the bucket counts corresponding to mid/long distance commutes: over twenty five miles the counts of the sanitized dataset are almost identical to those of the input dataset -- even for the higher degrees of privacy. For short commutes on the other hand,  the increase in privacy degrades the accuracy of the sanitized dataset: several of the commutes that fall in the 0-to-10-miles bucket in the original data fall instead in the 10-to-25-miles bucket in the sanitized data.
% (due to the perturbation on the home location coordinates). 

%The degradation of sanitized data (for short commutes in this particular case) was however to be expected, as there is a trade-off between the privacy guarantees provided by the sanitizing  algorithm and the utility of the resulting sanitized data. 
%\footnote{In order to further asses the quality of the sanitized data obtained by our algorithm it would be ideal to be able to compare 
%the quality of the sanitized dataset obtained by our algorithm 
%them with a sanitized dataset obtained using the synthetization techniques employed by the Census Bureau. 
%However, so far it has not been possible for the authors to obtain any test data from the Census Bureau. 
%At the moment we are in contact with the Census Bureau office responsible for the LODES dataset and expect to be able to obtain test data before submitting the final version of this article.

After analyzing the accuracy of the sanitized datasets produced by our algorithm for several levels of privacy, we proceed to compare our approach with the one followed by the Census Bureau to sanitize the LODES dataset. Such comparison is unfortunately not straightforward; on the one hand, the approaches provide different privacy guarantees (see discussion below) and, on the other hand, the Census Bureau is not able to provide us with a (sanitized) dataset sample produced by their algorithm (which would allow us to compare both approaches in terms of accuracy) as this might compromise the protection of the real data.

The algorithm used by the Census Bureau satisfies a notion of privacy that called $(\epsilon, \delta)$-probabilistic differential privacy, which is a relaxation of standard differential privacy that provides \edpold{} with probability at least $1-\delta$ \cite{Machanavajjhala:08:ICDE}. In particular, their algorithm satisfies $(8.6 , 0.00001)\allowbreak$-probabilistic differential privacy. This level of privacy could be compared to geo-indistinguishability for $\ell\!=\!8.6$ and $r\!=\!3.86$, which corresponds to providing protection in an area of the size of the SF County.
%It is not possible to compare these results with those obtained for our algorithm because they refer to different privacy notions. However, their $\epsilon$ corresponds to our $\ell$, and  we note that for such high value of $\ell$ (i.e. $8.6$),  with our  method we can offer the user geo-indistinguishability within the whole San Francisco county ($r=3.86$ miles) while still providing useful results. 
Figure  \ref{fig:ODbyEpsilon_cat} (b) presents the results of our algorithm for such level of privacy and also for higher levels.

% \begin{figure}[h!]
%      \centering
%      \includegraphics[width=6cm]{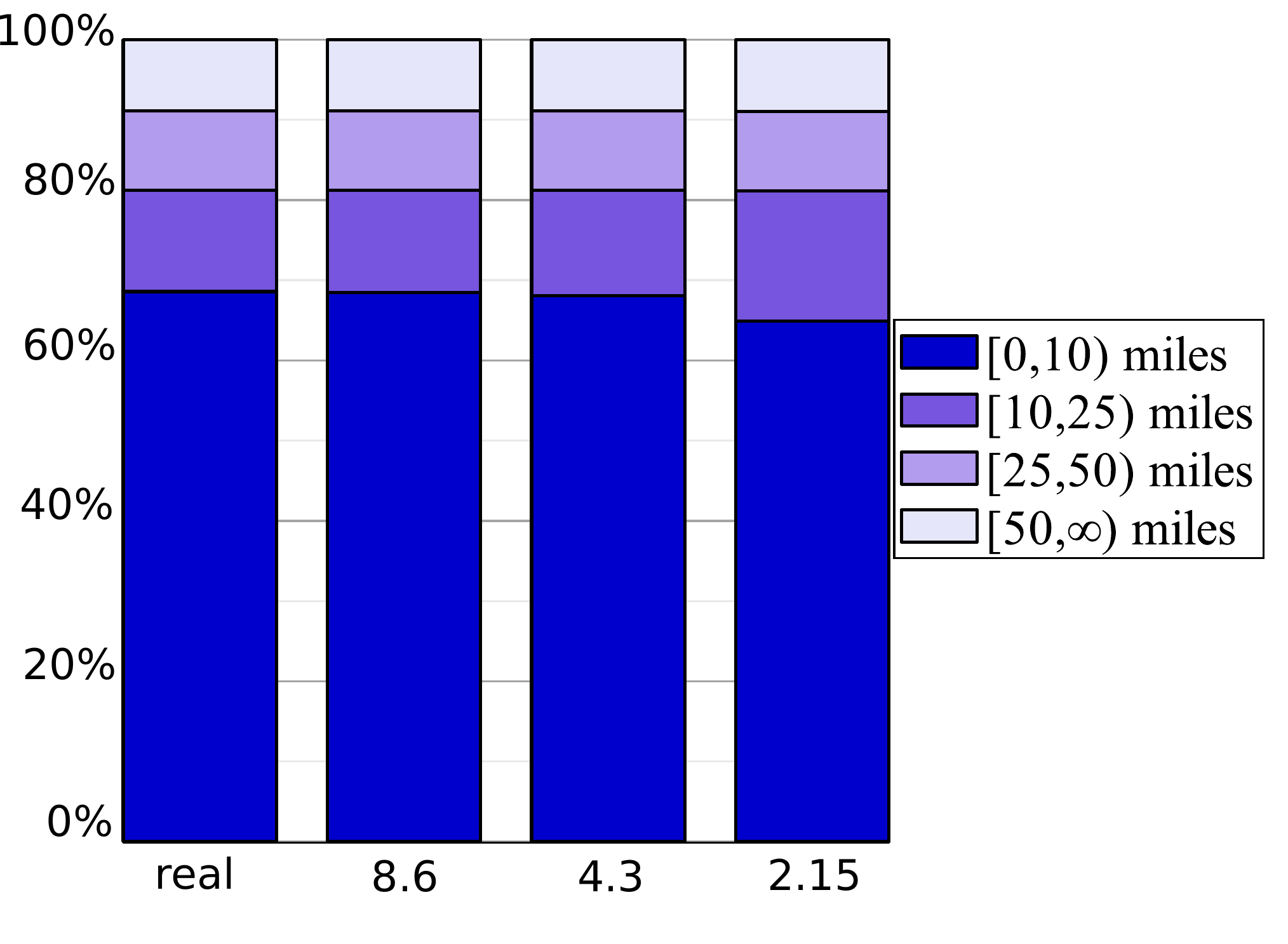}
%   \caption{Home-work commute distance for $r\!=3.86$, corresponding to the  San Francisco county land area, and various (high) values of $\ell$.}\label{fig:ODwithEpsilon86}.
% \end{figure}

It becomes clear that, by allowing high values for $\ell$ ($\ell=8.6=\ln(5432)$, $\ell=4.3=\ln(74)$, and $\ell=2.15=\ln(9)$)
%-- compare to the values 
%$\ln(4)$, $\ln(2)$, and $\ln(1.5)$ used in our example above) 
it is possible to provide privacy in large areas without significantly diminishing the quality of the  sanitized dataset. 
}

\end{document}